\newcommand{\remark}[1]{
                        {\emph{Remark:}} #1}
\newcommand{\BEQA}{\begin{eqnarray}}
\newcommand{\EEQA}{\end{eqnarray}}
\newcommand{\define}{\stackrel{\triangle}{=}}
\newcommand{\step}{\emph{Step \emph}}
\newcommand{\gap}{\vspace{2mm}}
\newcommand{\comment}{\textbf{comment: }}
\newcommand{\ind}{\mathbbm{1}}
\newcommand{\F}{\mathcal{F}}
\newcommand{\G}{\mathcal{G}}
\renewcommand{\P}{\mathcal{P}}
\newcommand{\U}{\mathcal{U}}
\newcommand{\X}{\mathcal{X}}
\newcommand{\Xh}{\X_{h_{\max}}}
\newcommand{\Xep}{\X_{\epsilon,\delta}}
\newcommand{\Xepk}{\X_{\epsilon,\delta,k}}
\newcommand{\e}{\mathbf{e}}
\newcommand{\bo}{\mathbf{b}}
\newcommand{\beq}{\begin{equation}}
\newcommand{\eeq}{\end{equation}}
\newcommand{\bea}{\begin{eqnarray}}
\newcommand{\eea}{\end{eqnarray}}
\newcommand{\bean}{\begin{eqnarray*}}
\newcommand{\eean}{\end{eqnarray*}}
\newcommand{\bit}{\begin{itemize}}
\newcommand{\eit}{\end{itemize}}
\newcommand{\ben}{\begin{enumerate}}
\newcommand{\een}{\end{enumerate}}
\newtheorem{theorem}{Theorem}
\newtheorem{proposition}{Proposition}
\newtheorem{lemma}{Lemma}
\newtheorem{definition}{Definition}
\newtheorem{corollary}{Corollary}
\begin{document}
\title{QoS Aware and Survivable Network Design for Planned Wireless Sensor Networks} 
\author{Abhijit Bhattacharya and Anurag Kumar\\Dept.\ of Electrical Communication Engineering\\Indian Institute of Science, Bangalore, 560012, India\\ email: abhijit@ece.iisc.ernet.in, anurag@ece.iisc.ernet.in}

\maketitle

\begin{abstract} We study the problem of wireless sensor network design by deploying a minimum number of \emph{additional} relay nodes at a subset of \emph{given potential} relay locations, in order to convey the data from existing sensor nodes (hereafter called source nodes) to a Base Station(BS), while meeting a quality of service (QoS) objective specified as a hop count bound. The hop count bound suffices to ensure a certain probability of the data being delivered to the BS within a given maximum delay under the so-called ``lone packet'' traffic model. We study two variations of the problem. 

First, we study the problem of guaranteed QoS, connected network design, where the objective is to have at least one path from each source to the BS with the specified hop count bound. We observe that the problem is NP-Hard. For a problem in which the number of existing sensor nodes and potential relay locations is $n$, we propose an $O(n)$ approximation algorithm of polynomial time complexity. Results show that the algorithm performs efficiently in various randomly generated network scenarios; in over 90\% of the tested scenarios, it gave solutions that were either optimal or were worse than optimal by just one relay. Under a certain stochastic setting, we then obtain an upper bound on the average case approximation ratio of a class of algorithms (including the proposed algorithm) for this problem as a function of the number of source nodes, and the hop count bound. Experimental results show that the actual performance of the proposed algorithm is much better than the analytical upper bound. In carrying out this study of the algorithm, for small problems the optimal solutions are obtained by an exhaustive search, whereas for large problems we obtain a lower bound to the optimal value via an ILP formulation (involving so called ``node cut'' based inequalities) whose LP relaxation has a polynomial number of constraints (unlike usual path based formulation which has exponential number of constraints). 

Next, we study the problem of survivable network design with guaranteed QoS, i.e., the requirement is to have at least $k > 1$ node disjoint, hop constrained paths from each source to the BS. We observe that the problem is NP-Hard, and that the problem of finding a feasible solution to this optimization problem is NP-Complete. We propose a polynomial time heuristic for this problem. Finally, we study its performance on several randomly generated network scenarios, and provide an extensive analysis of these results. Similar in spirit to the one connectivity problem, we obtain, under a certain stochastic setting, an upper bound on the average case approximation ratio of a class of algorithms (including the proposed algorithm) for the hop constrained, survivable network design problem.
\end{abstract}

\section{Introduction}
\label{sec:intro}
 
Large industrial establishments such as refineries, power plants, and electric power distribution stations, typically have a large number of sensors distributed over distances of 100s of meters from the control center. Individual wires carry the sensor readings to the control center. Recently there has been increasing interest in replacing these wireline networks with wireless packet networks (\cite{honey,isa}). 
A similar problem arises in an intrusion detection application using a fence of passive infrared (PIR) sensors \cite{pir}, where the event sensed by several sensors has to be conveyed to a Base Station (BS) quickly and reliably. 

 The communication range of the sensing nodes is typically a few tens of meters (depending on the RF propagation characteristics of the deployment region). Therefore, usually multi-hop communication is needed to transmit the sensed data to the BS. The problem then is \emph{to design a multi-hop wireless mesh network with minimum deployment cost, i.e., minimum number of additional relays,} so as to communicate from each sensing (source) node to a central node, which we will call the BS (we shall use the terms BS and sink interchangebly), while meeting certain performance objectives such as a \emph{delay bound, and packet delivery probability.}

The relay placement problem can be broadly classified into two classes of problems. One is the \emph{unconstrained} relay placement problem, where the relay locations can be anywhere in the 2-dimensional region. In most practical applications, however, due to the presence of obstacles to radio propagation (e.g., a firewall, a large machine, or a building), or due to taboo regions (e.g., a pond or a ditch), we cannot place relay nodes anywhere in the region, but only at certain designated locations. This leads to the problem of \emph{constrained relay placement} in which the relays are constrained to be placed at certain \emph{potential relay locations} (see Figure~\ref{fig:constrained} for a depiction of the problem). In either of these problems, relays would have to be placed so that as few of them as possible are used while meeting performance objectives such as an upper bound on packet delivery delay, or a lower bound on packet delivery probability, or topological objectives such as the number of redundant paths.

\begin{figure}[t]
\begin{center}
\includegraphics[scale=0.3]{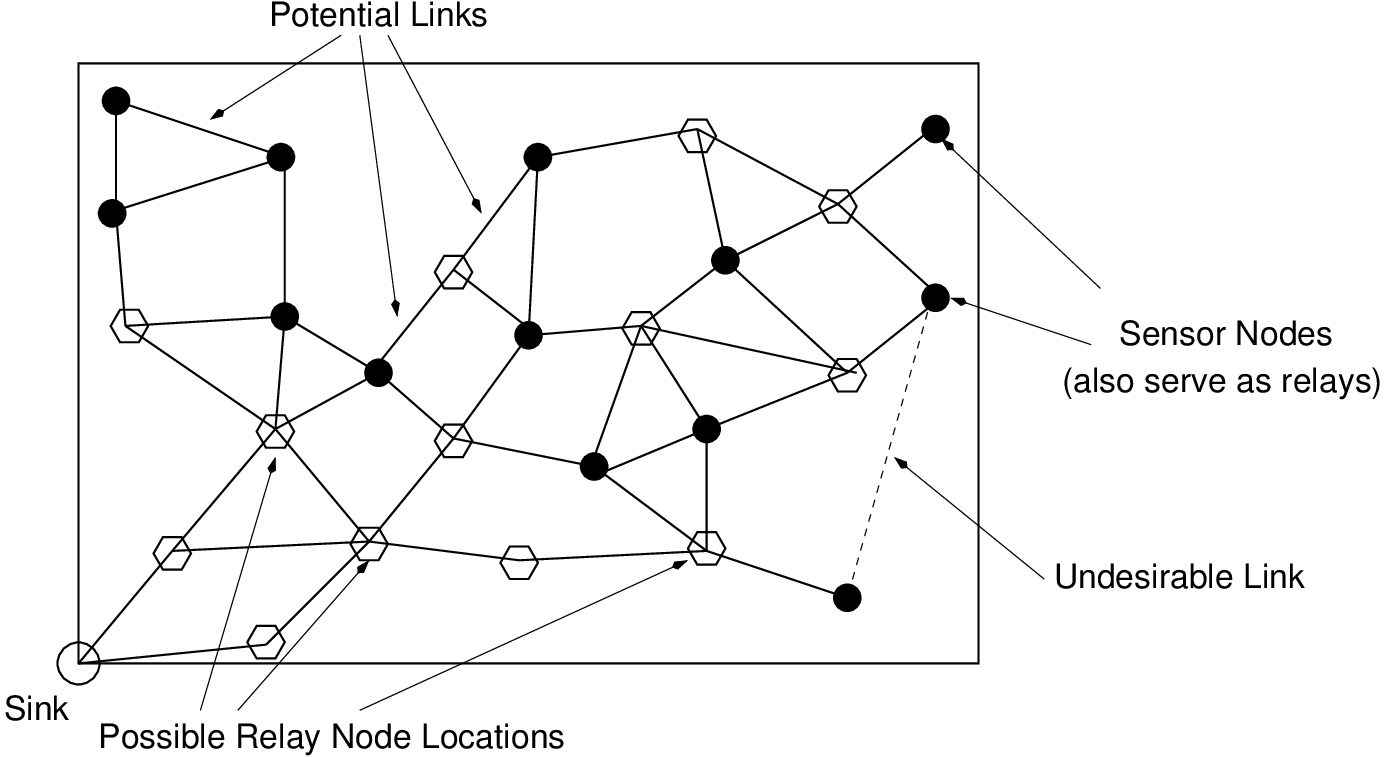}
\end{center}
\caption{The constrained relay placement problem; circles indicate sources, and the hexagons indicate potential relay locations. The edges denote the useful links between the nodes.}
\label{fig:constrained}
\end{figure}

As depicted in Figure~\ref{fig:constrained}, the source locations and the potential relay locations are specified. Only certain links are permitted; this could be because some links could be too long, leading to high bit error rate and hence large packet delay, or due to an obstacle, e.g., a firewall, or a building. The problem is to obtain a subnetwork that connects the source nodes to the base station with the requirement that
\begin{enumerate}
\item A minimum number of relay nodes is used.
\item There are at least $k$ node disjoint paths from each source node to the BS.
\item The maximum delay on any path is bounded by a given value $d_{\max}$, and the packet delivery probability (the probability of delivering a packet within the delay bound) on any path is $\geq p_{\mathrm{del}}$.
\end{enumerate}

To the best of our knowledge, this problem of QoS constrained, cost optimal network design has not yet been solved; we shall discuss the relevant literature in more detail in Section~\ref{subsec:lit-survey}. In this paper, we address this problem for the case in which (a) the nodes use the CSMA/CA Medium Access Control (as standardized in IEEE~802.15.4 \cite{IEEE}), and (b) the traffic from the source nodes is such that at any point of time only one measurement packet flows from a source in the network to the base station. We call this the ``lone packet traffic model'', which is realistic for many applications where the time between successive measurements being taken is sufficiently long so that the measurements can be staggered so as not to occupy the medium at the same time. For example, see Figure~\ref{fig:delaydist}, which depicts the cumulative distribution function (CDF) of end-to-end delay along a 5-hop path in a beaconless IEEE~802.15.4 network. The CDF has been obtained as a convolution of per hop delay distributions which are obtained using the backoff parameters given in the standard \cite{IEEE}. 

\begin{figure}[t]
\begin{center}
\includegraphics[scale=0.3]{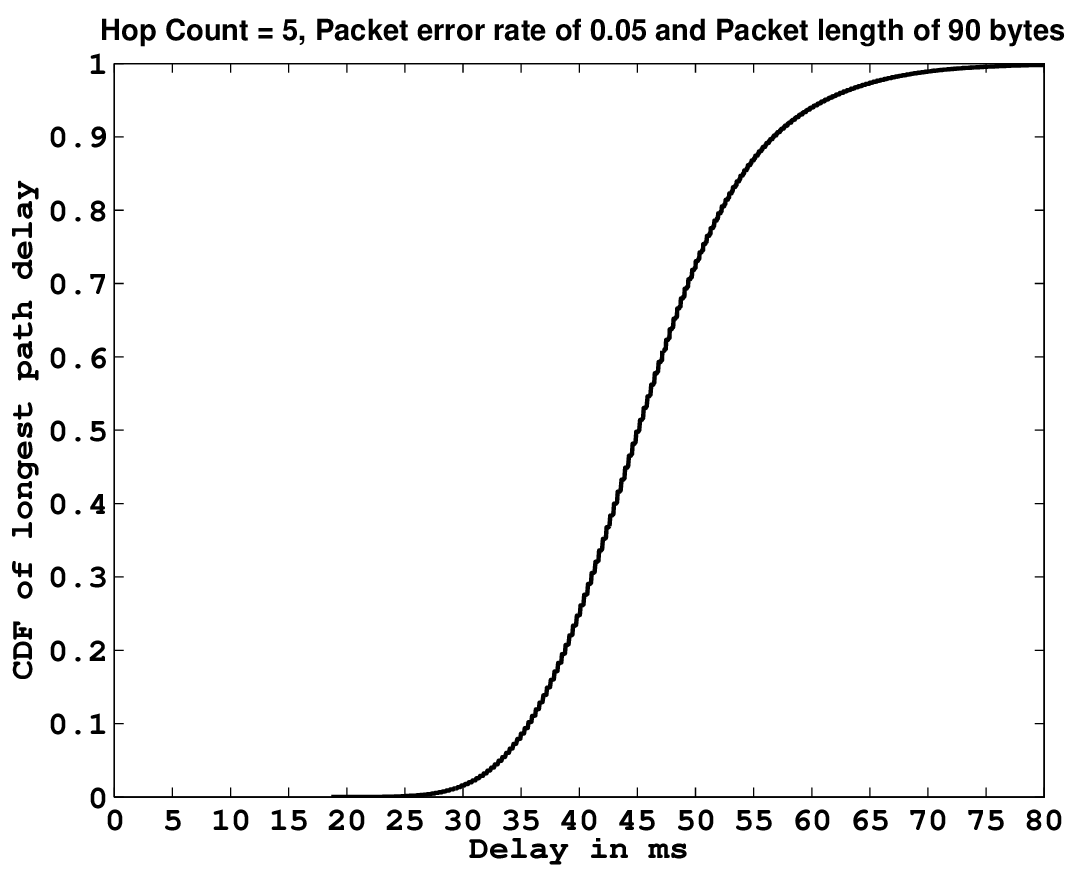}
\end{center}
\caption{CDF of end-to-end delay along a 5-hop path in a beaconless IEEE~802.15.4 network, assuming packet error rate of 0.05, and packet length of 90 bytes. The end-to-end delay does not include the fixed processing delay at each node.}
\label{fig:delaydist}
\end{figure}

Also, from Figure~\ref{fig:delaydist}, we see that the end-to-end delay is $\leq 69$ msec (without considering the node processing delay) with probability 0.99. The per hop processing delay was measured to be 15.48 msec \cite[p. 31, Section 2.2.5]{abhijitmeth}. Thus, the total end-to-end delay over the 5-hop path turns out to be $\leq 146.4$ msec with probability 0.99. This suggests that, even if such a network is designed based on the lone packet model, it will support a \emph{positive} aggregate packet arrival rate, while meeting a target delivery delay objective with a high probability. Indeed, our analytical modeling of such networks (see \cite{rachitpaper}) has shown that the arrival rate of a packet every few seconds (e.g., 5 to 10 seconds) from each source can be sustained. Such slow measurement rates are typical of so-called \emph{condition monitoring/industrial telemetry} applications \cite{telemetry1,telemetry2}.

Moreover, note that for a design (network) to satisfy the QoS objectives for a given positive arrival rate (continuous traffic), it is \emph{necessary} that the network satisfies the QoS objectives under zero/light traffic load, i.e., the ``lone packet'' model (for a more formal proof of this fact, see Section~\ref{sec:lone-packet-necessity}). As we shall see in subsequent sections, even under this simplified model of light traffic load, the problem of QoS constrained network design is computationally hard, and it does not seem to have been addressed before (we provide a literature survey in Section~\ref{subsec:lit-survey}). We cannot hope to solve the general problem of QoS aware network design for continuous traffic unless we have a reasonably good solution to the more basic problem of ``lone packet'' based network design, as well as a good analytical tool to model accurately, the stochastic interaction between the nodes in the network under traffic in which multiple links contend for the medium using CSMA/CA. A fast and accurate approximate performance analysis of multihop beaconless CSMA networks has been developed in \cite{rachitpaper}. In our current paper, we address the basic problem of QoS aware network design under ``lone packet'' model, as a step towards our future work of exploiting an analysis, such as the one in \cite{rachitpaper}, to augment the "lone packet" based design to one that meets the QoS for traffic in which several packets can occupy the network at the same time. Design of multihop beaconless CSMA networks under given positive packet arrival rates from the sources, by \emph{combining the ``lone packet'' based design with the analytical tool developed in \cite{rachitpaper}}, is a topic of our current research.
 
Note that even if the traffic is infrequent, the end user may still like to constrain the delay between when a measurement packet is generated and when the packet is received. Moreover, the links in a wireless sensor network are typically low power, \emph{lossy} links, where packet loss probabilities of 1\% to 5\% can be expected even on good links. Thus, the designs need to be constrained even for a lone packet model to achieve a satisfactory packet delivery probability. In applications, the measurements are currently conveyed to the BS via a wireline network. While replacing the wireline network (which is expensive to install and maintain) with a wireless mesh network, we aim to constrain the end-to-end performance achieved by the wireless network by imposing a hop count bound of $h_{\max}$ between each source and the BS. One possible approach of deriving such a hop count bound is presented in Section~\ref{subsec:motivation}.


Given a graph with feasible links between the potential locations and the source nodes, and a hop constraint, $h_{\max}$, the problem we address is to eliminate as many relays as possible from this graph so as to leave a graph with at least $k$ paths, each of at most $h_{\max}$ hops, between each source and the BS. We consider the case $k=1$ first, and then the case $k>1$. We provide a survey of related literature after a formal statement of the problem in Section~\ref{sec:formulation}. 

The rest of the paper is organized as follows: in Section~\ref{sec:formulation}, we describe the problem formulations for one connected, and $k$-connected hop constrained network design, show that the problems are NP-Hard, and present a brief survey of closely related literature. In Section~\ref{sec:sptalgo}, we propose a polynomial time algorithm (SPTiRP) for one connected network design, and provide a complete analysis of the algorithm. In particular, we provide a worst case approximation guarantee of the algorithm for arbitrary potential relay locations and source locations. We also derive a sufficient condition on the number and distribution of potential relay locations to ensure feasibility of the problem with high probability. Under such a stochastic setting, we provide an upper bound on the average case performance of the proposed algorithm. In Section~\ref{sec:ilp-oneconnect}, we propose a node-cut based ILP formulation for the one-connected hop constrained network design problem, whose LP relaxation has a polynomial number of constraints (unlike usual path based formulation which has exponential number of constraints). The LP relaxation can be useful in obtaining a lower bound on the optimal solution for problems of prohibitively large size, where an exhaustive enumeration of all possible solutions is impractical. In Section~\ref{sec:results}, we provide extensive numerical results for the SPTiRP algorithm applied to a set of random scenarios. Section~\ref{sec:simulation} provides packet level simulation results (using Qualnet, and assuming IEEE~802.15.4 CSMA/CA Medium Access Control) for the designs obtained using our algorithm to quantify the performance limits of the designs under ``positive'' traffic arrival rates. In Section~\ref{sec:formulation_kconnect}, we study the complexity involved in obtaining a feasible solution for the hop-constrained $k$-connectivity problem. In Section~\ref{sec:algorithms}, we propose a polynomial time algorithm (E-SPTiRP) for solving the $k$-connectivity problem, and provide analysis for the time complexity and approximation guarantee of the algorithm. In particular, for a subclass of problems with arbitrary potential locations and source locations, we provide a worst case approximation guarantee of the proposed algorithm. Similar in spirit to the one-connectivity problem, we then derive a sufficient condition on the number and distribution of potential locations to ensure feasibility of the hop constrained $k$-connectivity problem with high probability. Under such a stochastic setting, we obtain an upper bound on the average case performance of the E-SPTiRP algorithm. In Section~\ref{sec:results_kconnect}, we provide detailed numerical results for the algorithm applied to a set of random network scenarios. Finally, we conclude the paper in Section~\ref{sec:conclude}.

\section{Comparison of Lone-Packet Model and Positive-Flow Model}
\label{sec:lone-packet-necessity}
In developing algorithms for QoS constrained network design, it is fairly intuitive to assume that the performance of the network under the lone-packet model, i.e., where packets enter and leave the system one at a time, would be better than that under a positive-flow model, where there is a positive arrival rate at each source node, and packets can co-exist in the network, leading to contention. It is well known, however, that in CSMA/CA networks, in general, the performance is not monotone with the arrival rates (see, e.g., \cite{Singh}); hence, the preceding statement about the lone-packet traffic model needs to made with care. We provide here a simple proof based on a sample path argument. In doing so, we also make the notions of lone-packet and positive-flow models more formal.

\subsection{Lone Packet vs. Positive Arrival Rate: A Sample Path Argument}

Consider an arbitrary tree network with a single sink, where each source $q$, $1\leq q\leq m$, has a route, $c_q\:=\{q,v^q_1,\ldots,v^q_{(h_q-1)},0\}$, with hop count $h_q$ to the sink (the sink node is denoted by $0$). Let $p_{err}$ be the packet error rate on any link in the network. Note that even if we consider a different packet error rate for every link, the following argument will carry through with little modification. But to convey the basic concept, we are dealing with a simpler version here. Further, we assume that the nodes use the CSMA/CA MAC, as standardised by IEEE~802.15.4\cite{IEEE} (in fact, the argument holds for any MAC, with appropriate changes in the construction developed in the proof). 

We consider two different stochastic processes, namely, a lone packet process (corresponding to the lone packet traffic model), and a positive flow process (corresponding to a positive traffic arrival rate vector $\underline{\lambda}\in \Re^{m}$). Let $\Omega_1$ and $\Omega_2$ denote the sample spaces associated with the lone packet process, and the positive flow process respectively.  

We define, for all $\omega \in \Omega_1$,

\begin{equation*}
D_{0,k}^{(q)}(\omega)=\left\{
\begin{array}{rl}
1&\text{if the $k^{th}$ packet on route $c_q$ in the lone-packet model is delivered}\\
0 & \text{if the $k^{th}$ packet on route $c_q$ in the lone-packet model is not delivered}
\end{array}\right.
\end{equation*}

Similarly, define, for all $\omega\in \Omega_2$,

\begin{equation*}
D_{+,k}^{(q)}(\omega)=\left\{
\begin{array}{rl}
1&\text{if the $k^{th}$ packet on route $c_q$ in the positive-flow model is delivered}\\
0 & \text{if the $k^{th}$ packet on route $c_q$ in the positive-flow model is not delivered}
\end{array}\right.
\end{equation*}

Then, we can define the following quantities:

\begin{equation}
p_{del}^{q}(\underline{0}) = \lim_{k\rightarrow\infty}\frac{1}{K}\sum_{k=1}^K D_{0,k}^{(q)} 
\end{equation}

which is the long term fraction of packets delivered on route $c_q$, $1\leq q\leq m$, under lone-packet model.

And,
\begin{equation}
p_{del}^{q}(\underline{\lambda}) = \lim_{k\rightarrow\infty}\frac{1}{K}\sum_{k=1}^K D_{+,k}^{(q)}
\end{equation}

which is the long term fraction of packets delivered on route $c_q$, $1\leq q\leq m$, for a \emph{positive} arrival rate vector $\underline{\lambda}\in \Re^{m}$.

\begin{proposition}
\begin{equation}
p_{del}^{q}(\underline{0})\geq p_{del}^{q}(\underline{\lambda})\quad\forall \underline{\lambda}>\underline{0},\:\forall q\in\{1,\ldots,m\}
\end{equation}
\end{proposition}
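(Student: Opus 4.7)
My approach is a coupling argument: construct both stochastic processes on a single probability space $\Omega$ in such a way that, for every outcome $\omega$ and every index pair $(k,q)$, the pathwise inequality $D_{+,k}^{(q)}(\omega) \leq D_{0,k}^{(q)}(\omega)$ holds. Once this is established, averaging over $k$ and letting $K\to\infty$ immediately yields the stated inequality between $p_{del}^q(\underline{0})$ and $p_{del}^q(\underline{\lambda})$.

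To set up the coupling, I would separate the randomness into two physically independent pieces: per-attempt channel-error outcomes and per-node CSMA/CA backoff realizations. To each tuple (packet $k$ of source $q$, hop $h$ on route $c_q$, attempt number $j$ up to the MAC retry limit) I attach an i.i.d.\ Bernoulli$(1-p_{err})$ variable $\xi^{(q,h)}_{k,j}$ and use the same family under both processes; this variable records whether the $j$-th transmission attempt on hop $h$ for that packet survives channel noise. The per-node backoff counters are likewise drawn from common random streams. Under the lone-packet process, at most one packet is in the network at any time, so collisions cannot occur; the packet crosses hop $h$ iff at least one of the $\xi^{(q,h)}_{k,j}$ equals one within the retry budget. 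Under the positive-flow process, the same packet encounters the same channel-error realizations, but an attempt succeeds only if \emph{both} its channel-error indicator equals one \emph{and} no concurrent transmission collides with it.

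Consequently, every attempt that succeeds under positive flow also succeeds under the lone-packet model on the coupled path, so any packet delivered in the positive-flow model is delivered in the lone-packet model, i.e., $D_{+,k}^{(q)}(\omega) \leq D_{0,k}^{(q)}(\omega)$ pointwise. Passing to long-run averages, assuming the limits defining $p_{del}^q(\cdot)$ exist in the coupled space (a standard ergodicity/stability hypothesis), completes the proof. The main conceptual obstacle is justifying that the channel-error sequence can legitimately be shared across the two processes even though the MAC dynamics, and hence the times and number of attempts actually consumed, differ between them: this is permissible precisely because channel noise is physically independent of the MAC state, so a single i.i.d.\ array indexed by the logical label $(k,q,h,j)$ may be drawn once and consumed in whichever temporal order each MAC dictates. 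A minor additional check handles packets that, in the positive-flow model, are preempted or dropped before exhausting the retry budget (for instance due to buffer overflows): such packets trivially satisfy $D_{+,k}^{(q)}=0$, so the inequality is preserved.
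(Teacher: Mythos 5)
Your proposal is correct and follows essentially the same route as the paper: both couple the two processes on a common probability space by sharing an i.i.d.\ per-(packet, hop, attempt) link-error array (and common backoff streams), observe that in the lone-packet model a packet is lost only by exhausting its retries to channel errors while the positive-flow model adds further loss modes (collisions, CCA failures, drops), conclude the pathwise inequality $D_{+,k}^{(q)}(\omega)\leq D_{0,k}^{(q)}(\omega)$, and pass to the long-run averages defining $p_{del}^{q}(\cdot)$. The only cosmetic difference is that the paper makes the sample point explicit (including the route sequence $\{r_l\}$ for the lone-packet model, with each route visited infinitely often a.s.), which you subsume under your ergodicity remark.
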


\begin{proof}
We shall prove via a sample path argument. We aim to couple the two processes, namely, the lone packet process, and the positive flow process onto a common probability space as follows. 

Let $n$ be the maximum number of attempts of a packet by a transmitter before the packet is discarded ($n$ is a parameter of the underlying CSMA protocol). For each route $j$, $1\leq j\leq m$, let $\{\e_k^{(j)},\:k\geq 1\}$ be a realisation of $n\times h_j$ Bernoulli random sequences, each with success probability $1-p_{err}$; as usual, a $1$ in these sequences denotes that the corresponding packet attempt is a success. Using this sequence of i.i.d coin tosses, we can couple the two different stochastic processes, namely, the lone-packet process, and the positive flow process (corresponding to $\underline{\lambda}>\underline{0}$) onto a common sample space as follows. 

For each process, consider the $k^{th}$ packet arriving on route $c_j$; for this packet, use the element (which is a matrix) $\e_k^{(j)}$ of the above coin tossing sequence to decide whether the packet encounters a link error, or not, on any transmission attempt on any hop. Once the packet is delivered successfully, or discarded, we discard the rest of the matrix $\e_k^{(j)}$, and use the next element in the sequence for the next packet arriving on $c_j$. For example, if $\e_k^{(j)}(t,h)=1$, the $t^{th}$ transmission attempt (if it occurs) on the $h^{th}$ link of the $k^{th}$ packet on route $c_j$ is successful, and if $\e_k^{(j)}(t,h)=0$, the attempt encounters a link error. If the packet is delivered, or discarded before the $t^{th}$ attempt, or the $h^{th}$ hop, then $\e_k^{(j)}(t,h)$ is abandoned without use. 

Thus, the link errors seen by the $k^{th}$ packet, $k\geq 1$, on route $c_j$, $1\leq j\leq m$, are the same in both the processes. 

We define the two processes (the one with lone-packet traffic and the one with positive arrival rates) on the same sample space $\Omega$ with sample points $\omega$ constructed as follows:

\begin{align}
\omega&=\{(\{a_k^{(1)},k\geq 1\},\ldots,\{a_k^{(m)},k\geq 1\}),\nonumber\\
&(\{\mathbf{b}_k^{(1)},k\geq 1\},\ldots,\{\mathbf{b}_k^{(N)},k\geq 1\}),\nonumber\\
&(\{\mathbf{e}_k^{(1)},k\geq 1\},\ldots,\{\mathbf{e}_k^{(m)},k\geq 1\}),\{r_l,l\geq 1\}\}\nonumber
\end{align}

where,
\begin{description}
\item $a_k^{(i)}\in\Re_+$: $k^{th}$ interarrival time at source $i$
\item $\bo_k^{(j)}\in \Re^B_+$: vector of backoff durations of the $k^{th}$ packet arriving at the $j^{th}$ node, $1\leq j\leq N$; $B$ is the maximum number of CCA failures before a packet is discarded at a node
\item $\e_k^{(j)}\in \{0,1\}^{n\times h_j}$: link error indication matrix for $k^{th}$ packet arriving on route $c_j$
\item $r_l\in\{c_1,\ldots,c_m\}$: The source node (and hence the route) of the $l^{th}$ packet arriving in the network. Note that this is for the lone-packet model only. 
The interarrival time sequence determines the sequence of routes taken for the positive traffic model. 
\end{description}

Note that given such an instance $\omega\in\Omega$, we can construct the sample paths of both the lone-packet process and the positive-flow process. To construct the sample path of the lone-packet process, we need the components $(\{\mathbf{b}_k^{(1)},k\geq 1\},\ldots,\{\mathbf{b}_k^{(N)},k\geq 1\})$, 
$(\{\mathbf{e}_k^{(1)},k\geq 1\},\ldots,\{\mathbf{e}_k^{(m)},k\geq 1\})$, and $\{r_l,l\geq 1\}$ of $\omega$. In the lone-packet process, the departure of the $l^{th}$ packet triggers the arrival of the $(l+1)^{th}$ packet, which takes the route $r_{l + 1}$.


To construct the sample path of the positive-flow process, we need to use the components $(\{a_k^{(1)},k\geq 1\},\ldots,\{a_k^{(m)},k\geq 1\})$, $(\{\mathbf{b}_k^{(1)},k\geq 1\},\ldots,\{\mathbf{b}_k^{(N)},k\geq 1\})$, and
 $(\{\mathbf{e}_k^{(1)},k\geq 1\},\ldots,\{\mathbf{e}_k^{(m)},k\geq 1\})$ of $\omega$. In this process, packets enter the system at a source $j$, $1\leq j\leq m$, according to the sequence of interarrival times $\{a_k^{(j)},k\geq 1\}$. 

We consider the probability space $(\Omega,\F,P)$, where $\F$ is an appropriate $\sigma$-algebra, and the probability measure $P$ satisfies the additional property

\begin{equation*}
P\left\{\omega:\:\lim_{K\rightarrow\infty}\frac{1}{K}\sum_{l=1}^K\ind_{\{r_l=c_q\}}\:=\Psi_q>0\right\}=1\quad\forall q\in\{1,\ldots,m\}
\end{equation*}

which ensures that even in the lone-packet model, an infinite number of packets arrive to each source, so that each route is traversed infinitely often, with probability 1.

Note that the random variables $D_{0,k}^{(q)}$ and $D_{+,k}^{(q)}$ are defined on this common probability space in exactly the same way as they were defined earlier for the respective probability spaces of the two processes. 

Observe that $D_{0,k}^{(q)}(\omega)=0$ if and only if the link error indication matrix $\e_k^{(q)}$ has an all zero column (since in the lone-packet system, packets can be lost only due to link errors, and all the $n$ transmission attempts over a link on the packet's route must fail before the packet is discarded). 
Since the same link error indication matrix, $\e_k^{(q)}$, is used for the $k^{th}$ packet on route $c_q$ in the positive-flow process, it follows that if the $k^{th}$ packet on route $c_q$ in the lone-packet model is discarded, the $k^{th}$ packet on route $c_q$ in the positive-flow model will also be discarded. Moreover, there will be additional discards in the positive-flow model due to CCA failures, and collisions, which are absent in the lone-packet model. It follows, therefore, that for all $K$, $\omega\in \Omega$,

\begin{align}
\sum_{k=1}^K D_{+,k}^{(q)}(\omega)\leq \sum_{k=1}^K D_{0,k}^{(q)}(\omega)\nonumber
\end{align}

Hence,

\begin{equation*}
P\left\{\lim_{k\rightarrow\infty}\frac{1}{K}\sum_{k=1}^K D_{+,k}^{(q)}\leq \lim_{k\rightarrow\infty}\frac{1}{K}\sum_{k=1}^K D_{0,k}^{(q)}\right\}=1
\end{equation*}

Recall that by definition, 

\begin{align}
p_{del}^{q}(\underline{0}) &= \lim_{k\rightarrow\infty}\frac{1}{K}\sum_{k=1}^K D_{0,k}^{(q)}\:a.s.\nonumber\\
p_{del}^{q}(\underline{\lambda}) = \lim_{k\rightarrow\infty}\frac{1}{K}\sum_{k=1}^K D_{+,k}^{(q)}\:a.s.\nonumber
\end{align}

Thus, we have shown that $p_{del}^{q}(\underline{0})\geq p_{del}^{q}(\underline{\lambda})$, for any $\underline{\lambda}>0$. 
\end{proof}

Moreover, note that

\begin{equation*}
P\left\{\lim_{k\rightarrow\infty}\frac{1}{K}\sum_{k=1}^K D_{0,k}^{(q)}=(1-p_{err}^n)^{h_q}\right\}=1, \quad \text{by strong law of large numbers}
\end{equation*}

Hence, $p_{del}^{q}(\underline{\lambda})\leq (1-p_{err}^n)^{h_q}$.

\section{Design constraints to ensure end-to-end performance objectives}
\label{subsec:motivation}

\subsection{Assumptions}
In several industrial telemetry applications, the rate at which measurements are obtained from the sensors is low, for example, as little as one reading per hour from each sensor. We also assume that the alarm traffic is so infrequent that it does not interfere with any regular data transmission. Then, if the data transmission from the sensors is staggered over the hour, it can be assumed that each measurement packet flows over the network with no interference from any other packet flow. Our work in this paper is concerned with this ``lone packet'' traffic model. We also assume that IEEE~802.15.4 standard \cite[p. 30-179, p. 640-643]{IEEE} is used for PHY and MAC layers. 

We can obtain the bit error rate, $\epsilon$, on a link as a function (which depends on the modulation scheme) of received Signal to Noise Ratio (SNR), by using a formula given in the standard. Then, for a Physical layer (PHY) packet data unit length of $L$ bytes, the packet error rate (PER) on a link can be obtained as $1-(1-\epsilon)^L$. Given the PER $q$ on a link as a function of received SNR, we can obtain an expression for $D_q(\cdot)$, the c.d.f.\ of packet delay on the link (given that the packet is not dropped), using the backoff behavior and parameters of IEEE~802.15.4 CSMA/CA MAC. We can also obtain the packet drop probability as a function of the link PER, and we denote this function by $\delta(\cdot)$.

We also permit \emph{slow fading} of links; so the link PER can vary slowly over time, thus leading to the concept of ``link outage''. We say, a link is in outage if the PER of the link exceeds a target maximum link PER, designated by $q_{\max}$ (obtained as a function of the target SNR, $\gamma_{\min}$). Let us denote by $p_{\mathrm{out}}$, the maximum probability of a link being in outage.  

Before proceeding further, we summarize for our convenience, the notations used in the development of the model.

\vspace{2mm}
\textbf{User requirements:}
\begin{description}
\item[$L$] The longest distance from a source to the base-station (in meters)
\item[$k$] The required number of node disjoint paths between each source and
  the base-station
\item[${d}_{\max}$] The maximum acceptable end-to-end delay of a
  packet sent by a source (packet length is assumed to be fixed and
  given)
\item[$p_{\mathrm{del}}$] Packet delivery probability: the probability
  that a packet is not dropped \emph{and} meets the delay bound
  (assuming that at least one path is available from each source to
  the base station).
\end{description}
\vspace{2mm}
\textbf{Parameters obtained from the standard:}
\begin{description}
\item[$D_q(\cdot)$] The cumulative distribution function of packet
  delay on a link with PER $q$, given that the packet is not dropped;
  $D^{(h)}_q(\cdot)$ denotes the $h$-fold convolution of
  $D_q(\cdot)$. Under the lone packet model, $D_q(\cdot)$ is obtained
  by a simple analysis of the backoff and attempt process at a node,
  as defined in the IEEE~802.15.4 standard for beaconless mesh
  networks. 
\item[$b(\cdot)$] The mapping from SNR to link BER for the modulation scheme
\item[$\delta(\cdot)$] The mapping from PER to packet drop probability
  over a link. Note that even when there is no contention,
  packets could be lost due to random channel errors on links (i.e.,
  non-zero link PER). A failed packet transmission is reattempted at
  most three times before being dropped.
\end{description}
\vspace{2mm}
\textbf{Design parameters:}
\begin{description}
\item[$P_{\mathrm{xmt}}$] The transmit power over a link (assumed here
  to be the same for all nodes)
\item[$\gamma_{\min}$] The target SNR on a link
\item[$q_{\max}$] The target maximum PER on a link
\end{description}
\vspace{2mm}
\textbf{Parameters obtained by making field measurements:}
\begin{description}
\item[$r_{\max}$] The maximum allowed length of a link on the field to meet the target SNR, and outage probability requirements
\item[$p_{\mathrm{out}}$] The maximum probability of a link SNR falling below
  $\gamma_{\min}$ due to temporal variations. A link is ``bad'' if its
  outage probability is worse than $p_{\mathrm{out}}$, and ``good'', otherwise
\end{description}

\vspace{2mm}
\textbf{To be derived:}
\begin{description}
\item [$h_{\max}$] The hop count bound on each path, required to meet the packet delivery objectives
\end{description} 

\noindent
\textbf{Remark:} In practice, the value $k$ can be chosen so that a network monitoring and repair process ensures that a path is available from each source to the BS at all times. The choice of $k$ is not in the scope of our formulation, and would depend on how quickly the network monitoring process can detect node failures, and how rapidly the network can be repaired. We, thus, assume that, whenever a packet needs to be delivered from  a source to the BS, there is a path available, and, by appropriate choice of the path parameters (the length of each link, and the number of hops), we ensure the delivery probability, $p_{\mathrm{del}}$.

\subsection{Design Constraints from Packet Delivery Objectives}

Consider, in the final design, a path between a source $i$ and the
base-station, which is $L_i$ meters away.  Suppose that this path has
$h_i$ hops, and the length of the $j$th hop on this path is $r_{i,j},
1 \leq j \leq h_i$. Then we can write
\begin{eqnarray}
\label{eqn:hopcount-bound-from-rmax}
  L_i \leq \sum_{j=1}^{h_i} r_{i,j} \leq h_i r_{\max}
\end{eqnarray}
where the first inequality derives from the triangle inequality, and
the second inequality is obvious. Since $L$ is the farthest that any
source is from the base station, we can conclude that the number of
hops on any path from a source to a sink is bounded below by
$\frac{L}{r_{\max}}$. 

Following a conservative approach, we take the
PER on every link to be $q_{\max}$ (we are taking the worst case PER
on each link, and are not accounting for a lower PER on a shorter
link) . 

Suppose that we have obtained a network in which there are $k$ node
independent paths from each source to the base-station, and all the
links on these paths are good (``good'' in the sense explained earlier in the definition of $p_{\mathrm{out}}$).
Consider a packet arriving at Source~$i$, for which, by design, there are $k$
paths, with hop counts $h_\ell, 1 \leq \ell \leq k$, and suppose that
at least one of these paths is available (i.e., all the nodes along
that path are functioning). The availability of such a path will be
determined by a separate route management algorithm, which is out of the scope of this paper. We select one of these good paths to route the packet. The path selection algorithm would incorporate a load and energy balancing strategy. If the chosen path has $h$ hops in it, then the probability that none of the edges
along the chosen path is in outage is given by
\begin{eqnarray*}
 (1 - p_{\mathrm{out}})^{h_\ell}
\end{eqnarray*}
Increasing $h_{\ell}$ makes this probability smaller. With this in
mind, let us seek an $h_{\max}$, by the following conservative
approach. First, we lower bound the probability of the chosen path not
being in outage by
\begin{eqnarray*}
   (1 - p_{\mathrm{out}})^{h_{\max}}
\end{eqnarray*}
Now we can ensure that the packet delivery constraint is
met by requiring
\begin{eqnarray}
  \label{eqn:using-the-pdel-constraint}
  (1 - p_{\mathrm{out}})^{h_{\max}}(1 - \delta(q_{\max}))^{h_{\max}} D^{(h_{\max})}_{q_{\max}} (d_{\max})\geq p_{\mathrm{del}}
\end{eqnarray}
where the additional terms lower bound the probability that the packet
is not dropped along the chosen path ($(1 -
\delta(q_{\max}))^{h_{\max}} $) and that the end-to-end delay is less
than or equal to $d_{\max}$ ($ D^{(h_{\max})}_{q_{\max}}
(d_{\max})$). Recall that we take the PER on each ``good'' link to be $q_{\max}$. The left hand expression
in \eqref{eqn:using-the-pdel-constraint} is decreasing as $h_{\max}$
increases; \emph{let $h_{\max}$ be the largest value so that the inequality
is met}. Thus, we can meet the end-to-end performance objectives by imposing a hop count constraint $h_{\max}$ from each source to the BS.

Also, combining \eqref{eqn:hopcount-bound-from-rmax} and the
$h_{\max}$ just obtained, we get, for every source $i$
\begin{eqnarray}
\label{eqn:rmax_lowerbnd}
  r_{\max} \geq \frac{L}{h_{\max}} 
\end{eqnarray}
 
Hence, under a given physical setting, we can convert the problem of network design with end-to-end delay bound, and guaranteed packet delivery probability to a problem of network design with end-to-end hop constraint on each path. 
 
\section{The Network Design Problems}
\label{sec:formulation}
 
\subsection{The Network Design Setting}
\label{subsec:network-setting}
Given a set of source nodes or required vertices $Q$ (including the BS) and a set of potential relay locations $R$ (also called Steiner vertices), we consider a graph $G= (V, E)$ on $V= Q\cup R$ with $E$ consisting of all \emph{feasible} edges. We assume that CSMA/CA as defined in IEEE~802.15.4 \cite{IEEE} is used for multiple access. 

Note that there are several ways in which we can obtain the graph $G$ above (i.e., the set of \emph{feasible} edges $E$), keeping in mind the end-to-end QoS objective. For example, we can impose a bound on the packet error rate (PER) of each link, or alternately, we can constrain the maximum allowed link length (which, in turn, affects the link PER). As shown in Section~\ref{subsec:motivation}, having characterized the link quality of each feasible link in the graph $G$, the QoS objectives ($d_{\max}$ and $p_{\mathrm{del}}$) can be met by imposing a hop count bound of $h_{\max}$ between each source node and the sink. 

We would like to point out that the graph design algorithms presented in this paper are in no way tied to the link modeling approach mentioned above for obtaining the graph $G$, and the hop constraint $h_{\max}$. The algorithms can be applied as long as a graph on $Q\cup R$, and a hop constraint is given, irrespective of how the graph and the hop constraint were obtained. The link modeling approach is just a convenient, and not necessarily unique, way of converting the QoS objective into a graph design objective. 

\subsection{Problem Formulation}
\label{subsec:formulation}

\subsubsection{One Connected Network Design Problem}
Given the graph $G= (V, E)$ on $V= Q\cup R$ with $E$ consisting of all \emph{feasible} edges (as explained in Section~\ref{subsec:network-setting}), and a hop constraint $h_{\max}$, \emph{the problem is to extract from this graph, a spanning tree on $Q$, rooted at the BS, using a minimum number of relays such that the hop count from each source to the BS is $\leq h_{max}$.} We call this the \emph{Rooted Steiner Tree-Minimum Relays-Hop Constraint} (RST-MR-HC) problem.
\noindent
\subsubsection{$k$-Connected Network Design Problem}

The requirement is to have at least $k$ node disjoint and hop constrained paths from each source to the sink. Then, we can formulate our relay placement problem as follows:

\emph{Given the graph $G= (V, E)$ on $V= Q\cup R$ with $E$ consisting of all \emph{feasible} edges, the problem is to extract from this graph, a subgraph spanning $Q$, rooted at the BS, using a minimum number of relays such that each source has at least $k$ node disjoint paths to the sink, and the hop count from each source to the BS on each path is $\leq h_{max}$.} We call this the \emph{Rooted Steiner Network-$k$ Connectivity-Minimum Relays-Hop Constraint} (RSN$k$-MR-HC) problem.

\subsection{Complexity of the Problems}

\begin{proposition}
\begin{enumerate}
\item The RST-MR-HC problem is NP-Hard.
\item The RSN$k$-MR-HC problem is NP-Hard.
\end{enumerate}

\end{proposition}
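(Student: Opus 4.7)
The plan is to reduce from the classical unweighted Steiner Tree problem in graphs, which is NP-Hard by Karp, since with the hop constraint turned off, RST-MR-HC reduces verbatim to the minimum-Steiner-vertex Steiner Tree problem. Given a Steiner Tree instance consisting of a graph $G' = (V', E')$ and terminal set $T \subseteq V'$, I would take $Q = T$, $R = V' \setminus T$, $E = E'$, choose an arbitrary terminal $b \in T$ to serve as the BS, and set $h_{\max} = |V'|$ so the hop constraint is automatically met by any subtree. For any subtree spanning $Q \cup S$ with $S \subseteq R$ the relays used, the number of edges equals $|Q| + |S| - 1$; since $|Q|$ is fixed, minimizing the number of relays is equivalent to minimizing the number of edges, i.e., the Steiner Tree objective. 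An optimal RST-MR-HC solution therefore returns an optimal Steiner Tree, establishing part 1.

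For part 2, the quickest argument is that setting $k = 1$ reduces RSN$k$-MR-HC exactly to RST-MR-HC, so part 1 immediately implies NP-Hardness of the RSN$k$-MR-HC family. To strengthen the statement to the survivability regime $k \geq 2$ of primary interest, I would extend the reduction by augmenting $G'$ with $k - 1$ new Steiner vertices $s_1, \ldots, s_{k-1}$, each joined by an edge to every terminal and to the BS. Any feasible RSN$k$-MR-HC solution on the augmented graph must provide $k$ node-disjoint paths per source, of which at most $k - 1$ can pass through the $s_i$'s; consequently at least one path per source must lie entirely within the original graph $G'$, forcing the solution to contain a Steiner tree on $T$.

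The main obstacle in the $k \geq 2$ reduction will be establishing the \emph{exact} equality $\mathrm{OPT}_{\mathrm{RSN}k}(G'') = (k - 1) + \mathrm{OPT}_{\mathrm{ST}}(G')$, rather than merely the one-sided bound the previous paragraph supplies. A priori, an optimal solution might use fewer than $k - 1$ of the $s_i$'s and compensate with additional node-disjoint paths through $G'$, possibly at lower total relay cost. I would rule this out by preceding the construction with a per-terminal gadget that admits each terminal a single edge into the non-shortcut part of $G'$; concretely, replace each terminal $t \neq b$ with a pair $(t, t')$ where $t$ keeps only the new edges to the $s_i$'s and to a private gateway $t'$, and $t'$ inherits $t$'s original neighbors in $V' \setminus T$. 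This forces $k - 1$ of the required disjoint paths through the $s_i$'s, restores the equality, and yields a polynomial-time reduction, completing the proof.
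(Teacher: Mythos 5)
Your proof is correct and follows essentially the same route as the paper: both arguments work by trivializing the hop constraint (you reduce from the classical unweighted Steiner tree problem via the edges-versus-Steiner-vertices count, while the paper simply restricts to the known NP-Hard RST-MR problem of Misra et al.), and your $k=1$ restriction for part 2 is exactly the paper's argument. The extra gadget construction for fixed $k\geq 2$ goes beyond what the proposition requires (and, as you yourself note, its exact-equality step is not fully established), but the statement as given is already fully proved without it.
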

\begin{proof}

\begin{enumerate}
\item The subset of RST-MR-HC problems where the hop count bound is trivially satisfied is precisely the class of RST-MR \cite{Misra} problems (consider, for example, all RST-MR-HC problems where $|Q|+|R|= n$, $n$ being some positive integer, and the hop count bound is $h_{max}= n-1$. Clearly, the hop count bound is trivially satisfied in these problems). Thus, the RST-MR problem is a subclass of the RST-MR-HC problem. But, the RST-MR problem is NP-Hard (see \cite{Misra}). Hence, the RST-MR-HC problem, being a superclass of the RST-MR problem, is also NP-Hard \cite[p. 63, Section 3.2.1]{Garey}. 

\if 0
\begin{figure}[t]
\begin{center}
\includegraphics[scale=0.3]{figures/venn_diagram2.eps}
\end{center}
\caption{Venn diagram showing the relationship among RST-MR-HC and RST-MR problems}
\label{fig:venn}
\end{figure}
\fi
\item We have just proved that the problem is NP-Hard even for $k=$1, since that is just the RST-MR-HC problem. Therefore the general problem is also NP-Hard, using the ``restriction'' argument \cite[p. 63, Section 3.2.1]{Garey}.
\end{enumerate}
\end{proof}

\subsection{Related Literature}
\label{subsec:lit-survey}
We see that the problem we have chosen to address belongs, broadly, to the class of Steiner Tree Problems (STP) on graphs (\cite{st2,st3}).

The classical STP is stated as: \emph{given an undirected graph $G= (V, E)$,
with a non-negative weight associated with each edge, and a set of
\emph{required vertices} $Q\subseteq V$, find a minimum total edge
cost subgraph of $G$ that spans $Q$, and may include vertices from the
set $S:= V-Q$, called the \emph{Steiner vertices}.}\label{stp}

The classical STP dates back to Gauss and it has been proven to be
NP-Hard. Lin and Xue \cite{Lin} proposed the Steiner Tree Problem with
Minimum Number of Steiner Points and Bounded Edge Length (STP-MSPBEL).
The STP-MSPBEL was stated as: given a set of $n$ terminal points $Q$
in 2-dimensional Euclidean plane, find a tree spanning $Q$, and some additional Steiner points such that each edge has length no more than $R$, and the number of Steiner points is minimized. This bound on edge length only constrains link quality, but not end to end QoS. The problem was shown to be NP-complete and a polynomial time 5-approximation algorithm was presented. This problem was the first well-studied problem on optimal relay placement (relay locations \emph{unconstrained}). However, no average case performance guarantee was provided for the proposed algorithm. 

Cheng et al. \cite{cheng} studied the same problem as Lin and Xue, and proposed a 3-approximation algorithm and a 2.5-approximation
algorithm. 

Lloyd and Xue \cite{Lloyd} studied a generalization of
STP-MSPBEL problem where each sensor node has range $r$ and each relay
node has range $R\geq r$.  They provided a 7-approximation polynomial
time algorithm. They also studied the problem of minimum number of
relay placement such that there exists a path consisting solely of
relay nodes between each pair of sensors. For this problem, they
provided a $(5 + \epsilon)$-approximation algorithm. The problems studied by Lloyd and Xue, as well as Cheng et al.\ fall in the category of \emph{unconstrained} relay placement problem. Neither work provide any average case performance guarantee of their proposed algorithms.

Voss \cite{Voss} studied the Steiner Tree Problem with
Hop Constraints (STPH). This problem is stated as:\emph{ given a directed
connected graph $G= (V, E)$, with non-negative weight associated with
each edge, consider a subset of $V$, namely, $Q= \{0, 1, 2,\ldots, n\}$ 
with 0 being the root vertex, and a positive integer $H$. The problem
is to find a minimum total edge cost subgraph $T$ of $G$ such that
there exists a path in $T$ from 0 to each vertex in $Q\backslash\{0\}$ not exceeding $H$ arcs (possibly including vertices from $S:= V-Q$)}. We can call this problem the \emph{Rooted Steiner Tree-Minimum Weight-Hop Constraint} problem (RST-MW-HC). This problem was shown to be NP-Hard, and a Minimal Spanning Tree  based heuristic algorithm was proposed to obtain a good quality feasible solution, followed by an improvement procedure using a variation of \emph{Local Search} method called the \emph{Tabu search} heuristic. \emph{No performance guarantee or complexity analysis of the heuristic was provided}. Also, the tabu search heuristic may not be polynomial time. 

Note that an instance of the RST-MR-HC problem can be converted to an instance of the RST-MW-HC problem in polynomial time as follows: replace each relay with a directed edge of weight 1, and replace each edge associated with the relay with two directed edges (each of weight 0), one incident into the tail of the edge substituting the relay, and one going out of the tip of the edge substituting the relay. Then, minimizing the number of relays in the original problem is equivalent to minimizing the total weight in the converted problem. 
Then, one could use Voss's algorithm on this instance of RST-MW-HC problem to solve the original problem. But, as we mentioned earlier, Voss's algorithm does not provide any performance guarantee, and because of the tabu search heuristic (which may not be polynomial time), it may take long to converge to a solution.  

Costa et al. \cite{Costa} studied the Steiner Tree Problem with revenue, budget, and hop constraints. Given a graph $G= (V, E)$, with a cost associated with each edge, and a \emph{non-negative revenue} associated with each vertex, the problem is to determine a revenue maximizing tree subject to a total edge cost constraint, and a hop constraint between the root vertex and every other vertex in the tree. They propose a greedy algorithm for initial solution followed by destroy-and-repair or tabu search to improve the initial solution. They have evaluated the performance of the proposed algorithms only through numerical experiments; \emph{no theoretical guarantee has been provided}. 

It is possible to cast our problem into the form of the one addressed by Costa et al.~\cite{Costa} as follows: assign a \emph{negative} revenue, say $-1$, to each relay node (Steiner vertex), and a large positive revenue, say $|R|+1$, where $|R|$ is the number of potential relay locations, to each source vertex. This cost assignment would ensure that a revenue maximizing tree has all the source vertices in it, since the gain in revenue by adding a source outweighs the loss in revenue due to the additional relays, if any, required to connect the source to the BS. Also, the negative revenue on relays ensures that the revenue maximizing tree contains in it, as few relays as possible. Now, choose the hop constraint to be the same as that in the original RST-MR-HC problem. Also, assign a cost of zero to each edge, and  choose a trivial total edge cost constraint (any positive real number). With these assignments/choices, the problem of minimizing total relay count while obtaining a hop constrained tree network (RST-MR-HC) is the same as the problem of obtaining a  revenue-maximizing Steiner tree subject to a hop constraint and a total edge cost constraint. This formulation, however, requires the node weights to be negative, whereas the algorithm proposed by Costa et al. \emph{requires} the nonnegativity of the node weights \footnote{The greedy algorithm that they proposed starts with the root node, and proceeds by adding a path connecting a non-selected \emph{profitable} vertex to the existing solution at each step; when the budget constraint can be trivially satisfied, this amounts to simply finding a hop constrained path from a profitable vertex to the root. This is not enough to ensure revenue maximization if the revenues associated with some of the nodes is negative, since the path selected from the profitable vertex to the root may contain vetices with negative revenue, thus reducing the profit along the way; thus, additional constraints must be imposed for selection of paths from the profitable vertices to the root node to ensure minimal usage of the negative-revenue vertices.}. Moreover, even if one could find a way to map the RST-MR-HC problem to the revenue-budget-hop constrained STP, the tabu search based heuristic proposed by Costa et al. to improve the initial solution to the revenue-budget-hop constraint problem is not guaranteed to be polynomial time in general, and may take a long time to converge. 

Kim et al. \cite{Kim} studied the Delay and Delay Variation Constrained multicastng Steiner Tree Problem. The problem is similar to the one studied by Voss, with a delay constraint instead of the hop constraint, and a constraint on delay variation between two sources. With the delay variation constraint relaxed, Kim's problem becomes the \emph{Rooted Steiner Tree-Minimum Weight-Delay Constraint} problem. They proposed a polynomial time heuristic algorithm to obtain feasible solutions, but they also did not provide any performance guarantee for their algorithm.

Bredin et al.\ \cite{bredin} studied the problem of optimal relay placement \emph{(unconstrained)} for $k-$connectivity. They proposed an $O(1)$ approximation algorithm for the problem with any \emph{fixed} $k\geq 1$. However, they did not provide any average case analysis for their algorithm.

 Misra et al.\ \cite{Misra} studied the \emph{constrained} relay placement problem for connectivity and survivability. They provided $O(1)$ approximation algorithms for both the problems. We can call their first problem the \emph{Rooted Steiner Tree-Minimum Relays} problem, and their second problem, the \emph{Rooted Steiner Tree-Minimum Relays-Survivability} problem. Although their formulation takes into account an edge length bound, namely edge length$\leq r_c$, which can model the link quality, the formulation does not involve a path constraint such as the hop count along the path; hence, there is \emph{no constraint on the end-to-end QoS.} 

Yang et al.\ \cite{yang} studied a variation of the problem in \cite{Misra}, namely the \emph{two-tiered} constrained relay placement problem for connectivity and survivability, where each source has to be \emph{covered} by one (two) relay nodes, and the relay nodes form a one (two)-connected network with the BS. They provided $O(\ln n)$ approximation algorithms for arbitrary settings, and $O(1)$ approximation for some special cases. Their formulation also \emph{does not involve any constraint on the end-to-end QoS}.

The numerical experiments in both \cite{Misra} and \cite{yang} actually evaluate the empirical average case performance of their proposed algorithms on random test scenarios, which they compare against the theoretically derived \emph{worst case} performance bounds. Neither work, however, attempt a formal analysis of the average case performance of the proposed algorithms. 

\begin{table}[ht]
  \centering
\caption{A Comparison with Closely Related Literature; the ``starred'' problems are the ones we address in this paper; an entry `$\times$' in a column means that the corresponding algorithm does not provide the attribute given in the top of that column, whereas  a `$\checkmark$' means that it does provide the attribute.}
\label{tbl:literature}
\scriptsize
  \begin{tabular}{|l|c|c|c|c|}\hline
     & End-to-End  & & Worst Case Approximation & Average Case Approximation\\
Problem & Performance & Complexity & Guarantee of & Guarantee of \\
& Objective & & Proposed & Proposed\\
& & & Algorithm & Algorithm\\ 
\hline
   RST-MR \cite{Misra}& $\times$ & NP-Hard & 6.2 & $\times$\\
   \hline
   RST-MW-HC \cite{Voss} & $\checkmark$ & NP-Hard & $\times$ & $\times$\\
   \hline
   RST-MW-DC \cite{Kim} & $\checkmark$ & NP-Hard & $\times$ & $\times$\\
   \hline
   RST-MR-HC$*$            & $\checkmark$ & NP-Hard & polynomial factor & polynomial factor\\
   \hline
   RSN$k$-MR-HC$*$         & $\checkmark$ & NP-Hard & polynomial factor & polynomial factor\\
   \hline
\end{tabular}
\normalsize
\end{table}
In Table~\ref{tbl:literature}, we present a brief comparison of the problem under study in this paper with some of the closely related problems studied in the literature.

\section{RST-MR-HC: A Heuristic and its Analysis}
\label{sec:sptalgo} 

\subsection{Shortest Path Tree (SPT) based Iterative Relay Pruning Algorithm (SPTiRP)}
\begin{enumerate}
\item \textbf{The Zero Relay Case:} Find the SPT on $Q$ alone, rooted at the sink. If the hop count $\leq h_{\max}$ for each path, we  are done; no relays are required in an optimal solution. Else, go to the next step.
\item Find the Shortest Path Tree $T$ on $G$, rooted at the sink. 
\item \textbf{Checking Feasibility:} If for any path in the SPT, the path weight exceeds $h_{\max}$, declare the problem infeasible. (Clearly, if the shortest path from a node to the sink does not meet the hop count bound, no other path from the node to the sink will meet the hop count bound). Else, go to the next step.

\vspace{1em}
\textbf{Pruning the SPT:}
\item Discard all nodes in $R\backslash T$. Note that this step may lead to suboptimality as some nodes in $R\backslash T$ could be part of an optimal solution. 
\item Now, for the remaining relay nodes in $R$, define the weight of a relay node as the number of paths in the SPT that use the node. 
\item Arrange the paths in SPT in increasing order of hop count.
\item Among the paths in the SPT that use relay nodes, choose one that has the least number of hops This path has the maximum ``slack'' in the hop constraint. Arrange the relay nodes on this path in increasing order of their weights as defined in (5).
\item Remove the least weight relay node and consider the restriction of $G$ to the remaining nodes in $T$. Find an SPT on this graph. If in this SPT, path cost exceeds $h_{\max}$ for any path, then discard this SPT, replace the removed relay node, and repeat this step with the next least weight relay node. If all the relays in the least cost path have been tried without success, move on to the next least cost path, and repeat steps 7 and 8 for the relays in this path that have not yet been tried.
\item If in the above step, the SPT obtained satisfies the delay constraint for all the paths, then delete the removed relay node permanently from $R$ and repeat Steps 4 through 9. 
\item Stop when no more relay pruning is possible without violating the hop constraint on one or more of the paths.
 
\end{enumerate}

\noindent
\textbf{Discussion:}

 Step 1 of the above algorithm ensures that if the optimal design does not use any relay node, then the same holds true for our algorithm. That way we can make sure that the algorithm does not do infinitely worse in the sense that $\frac{Relay_{algo}}{Relay_{opt}}$ is finite. 

The idea behind Steps 7, 8 and 9 is that choosing to remove a relay from the path with the most slack in cost (i.e., hop constraint), we stand a better chance of still meeting the delay requirement with the remaining relays. Also, removing a relay of less weight would mean affecting the cost of a small number of paths. So by pruning relays in  the manner as described in Steps 7, 8 and 9, we aim for a better exploration of the search space. 

\subsection{Analysis of SPTiRP}
\subsubsection{Complexity}
The complexity of determining the shortest path tree on $N$ nodes is $O(N\log N)$ \cite{cormen}. Let us denote this function  by $g_{SPT}(.)$. In Iteration 1 of the algorithm, the complexity is $g_{SPT}(|Q|)$ and in iteration 2, it is $g_{SPT}(|Q|+|R|)$. In subsequent iterations, we remove 1 relay node at a time and find the SPT on the resultant complete graph; if no improvement is found, we replace that node and continue. Thus, for the $k^{th}$ iteration, the worst case complexity will be $(|R|-k+3)g_{SPT}(|Q|+|R|-k+2)$, where in the worst case, $k= 3, 4, \ldots, |R|+1$. Let $g_{sptirp}(.)$ denote the overall complexity. Thus, the overall complexity will be
\begin{align*}
g_{sptirp}(|Q|+|R|) &= g_{SPT}(|Q|+|R|)+ \\&\sum_{j=1}^{|R|}(g_{SPT}(|Q|+|R|-j))(|R|-j+1)\\
            &\leq (1+|R|^2)(g_{SPT}(|Q|+|R|))
\end{align*}
which is polynomial time.

\subsubsection{Worst Case Approximation Factor}

\begin{theorem}
The worst case approximation guarantee for the SPTiRP algorithm is $\min\{m(h_{\max}-1),(|R|-1)\}$, where $m$ is the number of sources, $h_{\max}$ is the hop constraint, and $|R|$ is the number of potential relay locations.
\end{theorem}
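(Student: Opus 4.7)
The plan is to bound the approximation ratio $R_{\mathrm{algo}}/R_{\mathrm{opt}}$, where $R_{\mathrm{algo}}$ is the number of relays returned by SPTiRP and $R_{\mathrm{opt}}$ is the number used by an optimal solution, by each of the two quantities $m(h_{\max}-1)$ and $|R|-1$ separately, so that the theorem follows by taking the smaller. I would first handle the trivial case where Step~1 of SPTiRP succeeds, in which $R_{\mathrm{algo}} = R_{\mathrm{opt}} = 0$, and then focus on the non-trivial case, where Step~1 fails and hence $R_{\mathrm{opt}} \geq 1$.

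For the $m(h_{\max}-1)$ bound, the approach is to exploit the tree structure of SPTiRP's output. Since the output is an SPT on a pruned subgraph, it is a tree rooted at the BS in which every source has a unique path to the BS with at most $h_{\max}$ hops, containing at most $h_{\max}-1$ intermediate relays. The key observation is that every relay retained by SPTiRP must lie on at least one source-to-BS path in this tree; otherwise it would be a leaf whose removal preserves feasibility, and the pruning in Step~9 would have eliminated it before termination in Step~10. A union bound over the $m$ sources then gives $R_{\mathrm{algo}} \leq m(h_{\max}-1)$, and combined with $R_{\mathrm{opt}} \geq 1$, this yields the first part of the theorem.

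For the $|R|-1$ bound, I would begin from the trivial observation that $R_{\mathrm{algo}} \leq |R|$ and tighten it. Any relay of the initial SPT in Step~2 that does not lie on a source-to-BS path is a leaf, and its removal strictly preserves feasibility, so Step~9 would prune it and push the count below $|R|$. If on the contrary SPTiRP retains all $|R|$ relays on termination, then every relay lies on a shortest source-to-BS path of the SPT, and no relay can be removed without violating the hop bound; this rigidity of the SPT structure should then force $R_{\mathrm{opt}}$ itself to be comparable to $|R|$, so the ratio is still at most $|R|-1$.

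The hardest part will be the $|R|-1$ bound. Unlike the $m(h_{\max}-1)$ bound, which follows from a clean counting argument on the shape of the output tree, the $|R|-1$ bound requires relating the \emph{failure} of SPTiRP's pruning to a lower bound on $R_{\mathrm{opt}}$. The subtle point is that the pruning test in Step~8 evaluates an SPT on a restricted subgraph (the current tree minus the candidate relay), not an unrestricted feasibility check over all subgraphs of $G$, so care is needed to argue that an inability to prune within this restricted search space still forces a nontrivial lower bound on the global optimum, keeping the ratio within $|R|-1$.
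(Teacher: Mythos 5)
Your argument for the $m(h_{\max}-1)$ part is fine and is essentially the paper's: the output is a feasible hop-constrained tree, each of the $m$ source-to-BS paths contains at most $h_{\max}-1$ relays, every retained relay lies on some such path (relays off all source paths are discarded in Step 4), and since Step 1 failing forces $R_{\mathrm{opt}}\geq 1$, the ratio is at most $m(h_{\max}-1)$.

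The $|R|-1$ part, however, is left as a genuine gap: you correctly flag the worry about the pruning test being an SPT on a restricted subgraph, but you never resolve it, and the phrase ``this rigidity should force $R_{\mathrm{opt}}$ to be comparable to $|R|$'' is not a proof of anything. The paper's resolution is short and you should note that your worry actually evaporates in precisely the relevant case: if SPTiRP terminates retaining all $|R|$ relays, then no relay was ever discarded or pruned, so the current tree $T$ contains \emph{every} node of $G$, and the pruning test for a candidate relay $r$ is the SPT on $G$ minus $r$ itself --- an unrestricted feasibility check. Hence, if some optimal (or any feasible) solution omitted a relay $r^\ast$, the SPT on $G\setminus\{r^\ast\}$ would meet the hop bound and the algorithm would have pruned $r^\ast$, a contradiction; so retaining all $|R|$ relays forces $R_{\mathrm{opt}}=|R|$ and ratio $1$. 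In the contrapositive, whenever $1\leq R_{\mathrm{opt}}\leq |R|-1$ the algorithm outputs at most $|R|-1$ relays, giving ratio at most $(|R|-1)/1$. Without this step your ``tightening'' of $R_{\mathrm{algo}}\leq |R|$ to $|R|-1$ is unsupported, because a leaf-pruning argument alone does not rule out the algorithm ending with all $|R|$ relays on source paths while the optimum uses only one.
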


\begin{proof}
The worst case occurs when the SPT obtained before we enter Step (4) does not contain any relay node(s) that correspond to some optimal design. If no relays are used in any optimal design, then the algorithm will yield an optimal design (Step (1)). If an optimal solution uses a positive number of relays but not all of them, then SPTiRP cannot stop by using all the relays. For suppose, SPTiRP stops and uses all the relays. Since there is a feasible tree containing a strict subset of the relays, the pruning steps in SPTiRP will succeed in pruning at least one relay. Hence, the worst possibility is that the optimal design uses just 1 relay node, whereas the SPT obtained in Step (2) consists of all the remaining $(|R|-1)$ relays, and moreover, pruning any of these $(|R|-1)$ relays will cause one or more paths in the resulting SPT to violate the hop constraint. Thus, in the worst case, the algorithm may lead to a design with $(|R|-1)$ relays instead of the optimal design with one relay. Also note that for a problem with $m$ sources, and a hop constraint $h_{\max}$, no feasible solution can use more than $m(h_{\max}-1)$ relays. Hence, we have a polynomial factor worst case approximation guarantee of $\min\{m(h_{\max}-1),(|R|-1)\}$.
\end{proof}

\subsubsection{Sharp Examples (for Worst Case Approximation and for Optimality)}
Let us now present \emph{a sequence of problems of increasing complexity for which the approximation guarantee is strict}, i.e., for these problems, the algorithm ends up using $|R|-1$ relays, while the optimum design uses one relay. Such examples are worthwhile to explore as they help to show that the approximation factor obtained above cannot be improved.
 \begin{figure}[t]
\begin{center}
\includegraphics[scale=0.4]{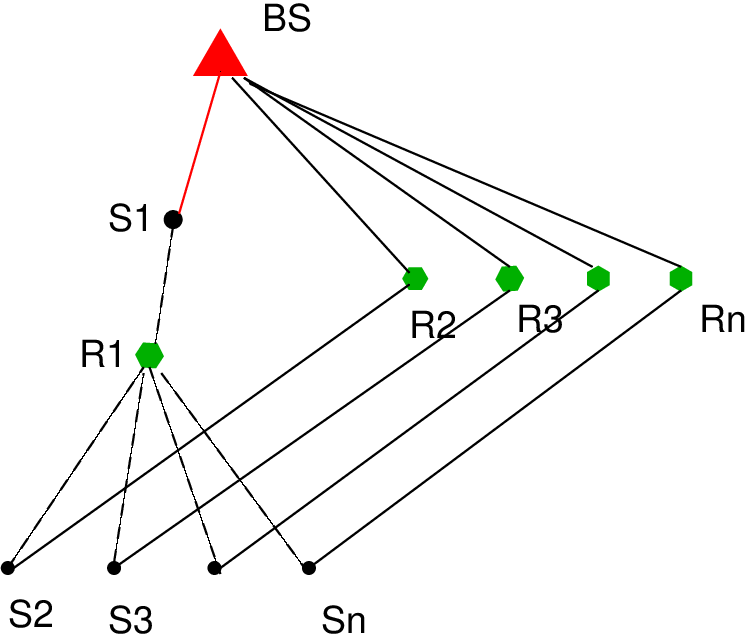}
\end{center}
\caption{ A Sequence of Problems where the Worst Case Approximation Guarantee is Strict}
\label{fig:worstcase}
\end{figure}
Consider the situation shown in Figure~\ref{fig:worstcase}. The green hexagons denote the relay node locations and the black circles represent the source node locations. Only the edges shown (coloured or black) are permitted. Consider the RST-MR-HC problem on this graph with $h_{max}= 3$. Clearly the optimal solution will use only one relay, $R1$, to reach from each source to the BS within the specified hop count bound. The black dotted links correspond to the optimal solution. The red link will belong to both the optimal solution and the outcome of our algorithm as it is a direct link between source $S1$ and the BS. Our SPT based algorithm will calculate the shortest paths and thus end up using relays $R2, R3, \ldots, Rn$, leaving out $R1$. The black solid links correspond to the solution given by our algorithm. Clearly, in such problems, we end up using $|R|-1$ relays instead of just one. 

Another sequence of problems of increasing complexity for which the algorithm gives the optimal design can be constructed as shown in Figure~\ref{fig:optimum}. Such examples help to show that the proposed algorithm does provide an optimal solution in some scenarios.

\begin{figure}[t]
\begin{center}
\includegraphics[scale=0.4]{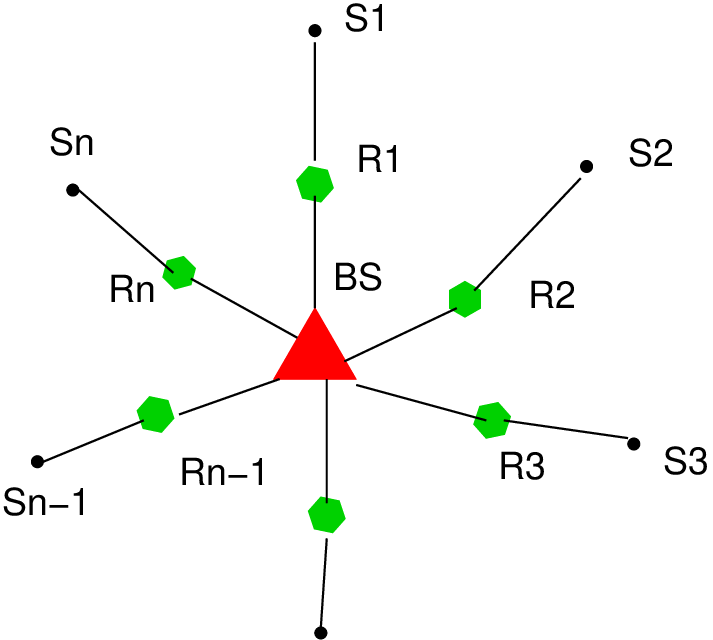}
\end{center}
\caption{ A Sequence of Problems where SPTiRP gives Optimal Solution}
\label{fig:optimum}
\end{figure}

As before, the green hexagons represent relay locations and the black dots represent source nodes. Suppose $h_{max}= 2$. Then clearly, the optimal solution is as shown in the figure. The algorithm, after calculating the SPT, will end up with the same solution. 

\subsubsection{Average Case Approximation Factor of SPTiRP}
\label{subsubsec:avg-approx-oneconnect}
We shall derive below, an upper bound on the average case approximation factor of SPTiRP in a certain stochastic setting, defined by a probability distribution on the potential relay locations and the source locations. The derivation, in fact, applies to any algorithm that starts with an SPT, and proceeds by pruning relays from the SPT in some manner. The probability distributions (and hence the setting) are chosen so as to ensure the existence of a feasible solution with high probability. 

We consider a square area $A(\subset \Re^2_+)$ of side $a$. The BS is located at (0,0). We deploy $n$ potential locations randomly over $A$, yielding the potential locations vector $\underline{x} \in A^n$. Then we place $m$ sources over $A$, yielding source location vector $\underline {y} \in A^m$. Let $\omega\:=\:(\underline{x}, \underline{y})$, i.e., $\omega$ denotes the joint potential locations vector and source locations vector. We assume a model where a link of length $\leq r$ metres has the desired PER so that $h_{\max}$ is the hop constraint. We then consider the geometric graph, $\mathcal{G}^r(\omega)$, over these $n+m$ points; i.e., in $\mathcal{G}^r(\omega)$ there is an undirected edge between a pair of nodes in $(\omega)$ if the Euclidean distance between these nodes is $\leq r$. If in this graph the shortest path from each source to the BS (at $(0,0)$) has a hop count $\leq h_{\max}$, then $\omega$ is feasible. Define

\begin{description}
\item $H_j(\omega)$: Hop distance (i.e., the number of hops in the shortest path) of source $j$ from the BS in $\mathcal{G}^r(\omega)$, $1\leq j \leq m$. 
($\infty$ if source $j$ is disconnected from BS in $\mathcal{G}^r(\omega)$)

\item $\X=\{(\underline{x},\underline{y}):\:\forall y_j,\:1\leq j\leq m,\:H_j \leq h_{\max}\}$: Set of all feasible instances
\end{description}

We would like $\mathcal{X}$  to be a high probability event. For this we need to limit the locations of the sources to be no more than $(1-\epsilon) r h_{\max}$ from the BS; Theorem~\ref{thm:average-case-analysis-setting}, later, will help characterize the relationship between $\epsilon$, the number of potential locations, and the probability of $\mathcal{X}$.

For a given $\epsilon\in(0,1)$, let $A_{\epsilon}(\subset A)$ denote the quarter circle of radius $(1-\epsilon)h_{\max}r$ centred at the BS, where $h_{\max}$ is the hop constraint, and $r$ is the maximum allowed communication range. 

Formally, we deploy $n$ potential locations independently and identically distributed (i.i.d) uniformly randomly over the area $A$; then deploy $m$ sources i.i.d uniformly randomly over the area $A_{\epsilon}$. The probability space of this random experiment is denoted by $(\Omega^{(n)}_{m,\epsilon},\mathcal{B}^{(n)}_{m,\epsilon},P^{(n)}_{m,\epsilon})$, where,

\begin{description}
\item $\Omega^{(n)}_{m,\epsilon}=(A^n\times A^m_{\epsilon})(\subset \Re^{2(n+m)}_+)$: Sample space; the set of all possible deployments
\item $\mathcal{B}^{(n)}_{m,\epsilon}$: The Borel $\sigma$-algebra in $\Omega^{(n)}_{m,\epsilon}$
\item $P^{(n)}_{m,\epsilon}$: Probability measure induced on $\mathcal{B}^{(n)}_{m,\epsilon}$ by the uniform i.i.d deployment of nodes
\end{description}

Consider the random geometric graph $\mathcal{G}^r(\omega)$ induced by considering all links of length $\leq r$ on an instance $\omega \in \Omega^{(n)}_{m,\epsilon}$. We introduce the following notation:

\begin{description}
\item $N_{SPTiRP}(\omega)$: number of relays in the outcome of the SPTiRP algorithm on $\mathcal{G}^r(\omega)$ ($\infty$ if $\omega\in\X^c$)
\item $R_{Opt}(\omega)$: number of relays in an optimal solution to the RST-MR-HC problem on $\mathcal{G}^r(\omega)$ ($\infty$ if $\omega\in\X^c$)
\end{description}

The \emph{average case approximation ratio} of the SPTiRP algorithm over feasible instances is defined as
\begin{equation}
\text{Average case approximation ratio, }\alpha \define \frac{E[N_{SPTiRP}|\X]}{E[R_{Opt}|\X]}
\end{equation}

\remark This would be a useful quantity if the user of the algorithm wishes to apply the algorithm to several instances of the problem, yielding the required number of relays $N_1, N_2, \cdots, N_k,$ as against the optimal number of relays $R_1, R_2, \cdots, R_k,$ and is interested in the ratio 
$\frac{N_1 + N_2 + \cdots + N_k}{R_1 + R_2 +\cdots + R_k}$.

In the derivation to follow, we will need $\mathcal{X}$ to be a high probability event, i.e., with probability greater than $1 - \delta$ for a given $\delta > 0$. The following result ensures that this holds for the construction provided earlier, provided the number of potential locations is large enough.

\begin{theorem}
\label{thm:average-case-analysis-setting}
For any given $\epsilon,\delta \in (0,1)$, $h_{\max}>0$ and $r>0$, there exists $n_0(\epsilon,\delta,h_{\max},r)\in \mathbb{N}$ such that, for any $n\geq n_0$, $P^{(n)}_{m,\epsilon}(\X)\:\geq\:1-\delta$ in the random experiment $(\Omega^{(n)}_{m,\epsilon},\mathcal{B}^{(n)}_{m,\epsilon},P^{(n)}_{m,\epsilon})$.  
\end{theorem}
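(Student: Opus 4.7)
The plan is to decouple the randomness of the relays from that of the sources: I will show that, with probability at least $1-\delta$ over the relay deployment alone, the potential relays form a sufficiently dense ``covering'' of $A_\epsilon$, and that conditional on this covering, any source placed in $A_\epsilon$ \emph{deterministically} has a hop path of length $\leq h_{\max}$ to the BS. This sidesteps any need to union-bound over the (uncountable) set of possible source positions.

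For the deterministic step, I would fix the grid side $s := \epsilon r/(2\sqrt{2})$ and tile a square region that covers $A_\epsilon$ (and is contained in $A$) by closed axis-aligned cells of side $s$. Let $M$ be the (finite) number of such cells and let $E$ be the event that every cell contains at least one of the $n$ potential relays. Assuming $E$ holds, fix any source $y$ at distance $d \leq (1-\epsilon) h_{\max} r$ from the BS and mark $h_{\max}-1$ equally spaced ``waypoints'' along the segment from $y$ to the BS, so that successive waypoints (and $y$ to the first waypoint, and the last waypoint to the BS) are separated by $d/h_{\max} \leq (1-\epsilon) r$. Each waypoint lies in some grid cell, and by $E$ that cell contains a relay within Euclidean distance $s\sqrt{2} = \epsilon r/2$ of the waypoint. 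Choosing one such relay per waypoint and applying the triangle inequality, $y$, the $h_{\max}-1$ chosen relays (in order), and the BS form a $h_{\max}$-hop path in $\mathcal{G}^r(\omega)$, since each consecutive hop has length at most $(1-\epsilon) r + 2 \cdot (\epsilon r/2) = r$. Hence $E \subseteq \mathcal{X}$.

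For the probabilistic step, since each of the $n$ relays is i.i.d.\ uniform on $A$, the probability that a specific cell of area $s^2$ contains no relay is $(1 - s^2/a^2)^n$. A union bound then gives
\[
P^{(n)}_{m,\epsilon}(E^c) \;\leq\; M\,\bigl(1 - s^2/a^2\bigr)^n,
\]
which can be forced below $\delta$ by taking any $n$ at least $n_0(\epsilon,\delta,h_{\max},r) := \bigl\lceil \log(M/\delta)/(-\log(1 - s^2/a^2)) \bigr\rceil$. Combining the two steps, $P^{(n)}_{m,\epsilon}(\mathcal{X}) \geq P^{(n)}_{m,\epsilon}(E) \geq 1 - \delta$ for all $n \geq n_0$, as required.

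The only real subtlety is boundary handling: one must ensure that the grid cells into which the waypoints fall are actually contained in $A$ (so that the uniform measure on $A$ assigns positive mass to each such cell); this is automatic provided $a$ is at least $(1-\epsilon) h_{\max} r + s\sqrt{2}$, which is a mild assumption on the size of the deployment region. The choice $s = \epsilon r/(2\sqrt{2})$ is essentially the largest grid for which the per-hop bound still closes, so the $\epsilon$-slack in the definition of $A_\epsilon$ is precisely what lets the construction go through without a shrinking grid.
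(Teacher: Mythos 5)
Your proof is correct, and it takes a genuinely different geometric route from the paper. The paper follows the blade-and-strip construction of Nath et al.: it covers the arc of the circle of radius $h_{\max}r$ with $J(r)\leq\lceil \pi h_{\max}/(2\sqrt{1-p^2})\rceil$ blades emanating from the BS, places $h_{\max}-1$ rectangular strips of area $u(r)t(r)=q\sqrt{1-p^2}\,r^2$ (with $p-q=1-\epsilon$) along each blade, defines $\X_{\epsilon,\delta}$ as the event that every such strip contains a potential location, argues $\X_{\epsilon,\delta}\subseteq\X$, and finishes with a union bound and $1-x\leq e^{-x}$, optimizing $p,q$ to get $n_0$. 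You replace this anisotropic, BS-centred covering with an isotropic one: a grid of cells of side $\epsilon r/(2\sqrt{2})$ covering $A_\epsilon$, plus equally spaced waypoints on the straight segment from each source to the BS, with the per-hop budget $(1-\epsilon)r+2\cdot\epsilon r/2=r$ closing exactly. The overall architecture is the same (a relay-only coverage event that deterministically implies $\X$, then a union bound over finitely many regions with $(1-\text{area}/a^2)^n$ decay), but your covering event is different and arguably more elementary and self-contained. What each buys: the paper's strips have area $\Theta(\epsilon^{3/2}r^2)$ after optimizing $p,q$, versus your cells' $\Theta(\epsilon^2 r^2)$, so the paper's $n_0$ has the milder dependence $\epsilon^{-3/2}$ versus your $\epsilon^{-2}$ (the number of regions enters only logarithmically in both), while your argument avoids the blade geometry and the $p,q$ optimization entirely and yields an explicit, easily checkable $n_0$. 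Two small points to tidy up: if two waypoints fall in the same cell you may pick the same relay twice, so you should note that the resulting walk of at most $h_{\max}$ edges still yields a path of at most $h_{\max}$ hops; and the boundary condition you flag ($a\geq(1-\epsilon)h_{\max}r+s\sqrt{2}$) can be dispensed with by shrinking $s$ slightly so that it divides $a$, which keeps every cell inside $A$ — worth doing, since the paper's setting only guarantees $A_\epsilon\subset A$ (indeed the paper's own strips are implicitly assumed to lie inside $A$, a comparable looseness).
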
 

\begin{proof}
The proof follows along the lines of the proof of Theorem 3 in \cite{nath-venkatesan}. We make the construction as shown in Figure~\ref{fig:cutarc}. From the BS $b_l$, we draw a circle of radius $h_{\max}r$ centered at $b_l$, this is the maximum distance reachable in $h_{\max}$ hops, by triangle inequality, since each hop can be of maximum length $r$. We then construct blades as shown in Figure~\ref{fig:cutarc}. We start with one blade. It will cover some portion of the circumference of the circle of radius $h_{\max}r$; see Figure~\ref{fig:cutarc}. Construct the next blade so that it covers the adjacent portion of the circumference that has not been covered by the previous blade. We go on constructing these blades until the entire portion of the circle lying inside the area $A$ is covered (see Figure~\ref{fig:cutarc}). Let us
define,

\begin{itemize}
\item $J(r)$ : Number of blades required to cover the part of the
  circle within $A$.
\item $B_j^l$ : $j^{th}$ blade drawn from the point $b_l$ as
  shown in Figure~\ref{fig:cutarc}, $1 \leq j \leq J(r)$.
\end{itemize}
On each of these blades, we construct $h_{\max}$ strips\footnote{A construction with improved convergence rate based on lens-shaped areas rather than rectangular strips is presented in Appendix~B of \cite{nath-venkatesan}.} , shown shaded in Figure~\ref{fig:h hop},
$u(r)$ being the width of the blade and $t(r)$ the width of the strip.
We define the following events.
\begin{description}
\item $A_{i,j}^l$ = \{$\omega$: $\exists$ at least one node out of the $n$ potential loacations in the $i^{th}$ strip of $B_j^l$\}
\item $\X_{\epsilon,\delta}\:=\:\{\omega:\:\omega\:\in\cap_{j=1}^{J(r)} \cap_{i=1}^{h_{\max}-1} A_{i,j}^l\}$: Event that there exists at least one node out of the $n$ potential locations in each of the first $(h_{\max}-1)$ strips (see Figure~\ref{fig:h hop}) for all the blades $B_j^l$
\end{description}

\begin{figure}[t]
\centering
\includegraphics[width= 9cm, height= 7cm]{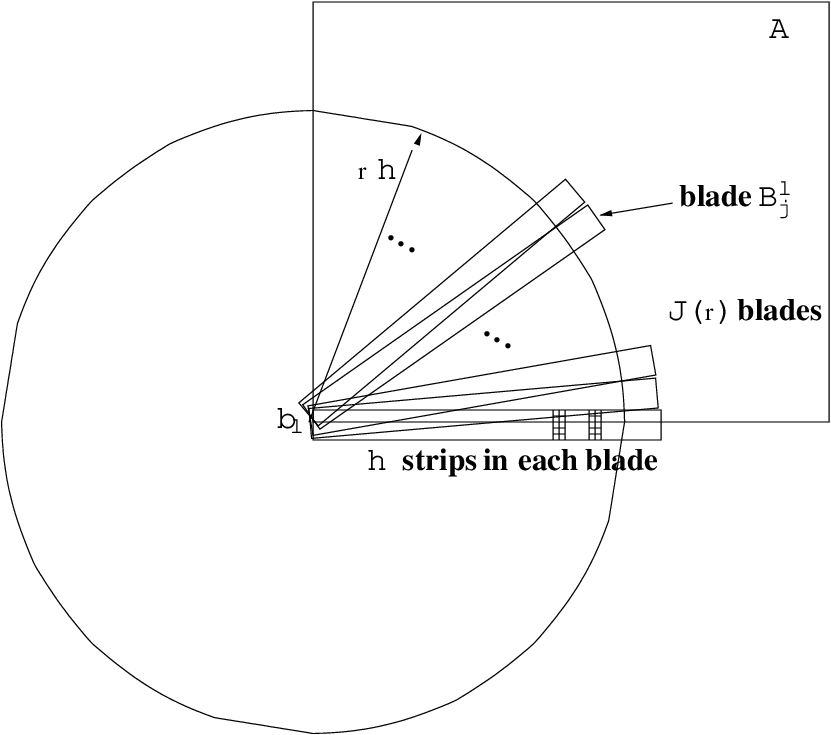}
\\\caption{Construction using the blades cutting the circumference of
  the circle of radius $hr$ (adapted from Nath et al.~\cite{nath-venkatesan}).}
\label{fig:cutarc}
\end{figure}

\begin{figure}[t]
\centering
\includegraphics[width= 9cm, height= 3cm]{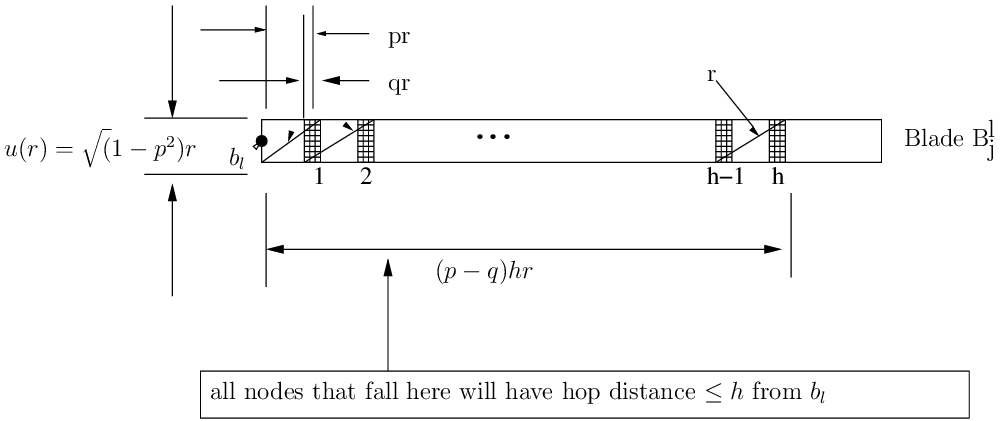}
\caption{The construction with $h$ hops (adapted from Nath et al.~\cite{nath-venkatesan}).}
\label{fig:h hop}
\end{figure}

Note that for an instance $\omega\in\X_{\epsilon,\delta}$, \emph{all nodes (and in particular, all sources) at a distance $<(p-q)hr$ from $b_l$, $1\leq h\leq h_{\max}$, are reachable in at most $h$ hops}. Since $1>p>q>0$, we can choose $p-q$ to be equal to $1 - \epsilon$, for the given $\epsilon > 0$. It follows that
\begin{equation}
\label{eq:intersection}
\X_{\epsilon,\delta}\subseteq \X
\end{equation}

and hence, $P^{(n)}_{m,\epsilon}(\X)\geq P^{(n)}_{m,\epsilon}(\X_{\epsilon,\delta})$.

Thus, to ensure $P^{(n)}_{m,\epsilon}(\X)\geq 1-\delta$, it is sufficient to ensure that $P^{(n)}_{m,\epsilon}(\X_{\epsilon,\delta})\geq 1-\delta$, which we aim to do next. 
  
To find the value of $J(r)$, we need to define the following.
\begin{description}
\item $a(r)$: Length of the arc of radius $h_{\max}r$ that lies
  within a blade, drawn taking $b_l$ as center, as shown in
  Figure~\ref{fig:arc}.
\item $\alpha(r)$ : Angle subtended by $a(r)$ at $b_l$ , see
  Figure~\ref{fig:arc}.
\end{description}

Now from Figure~\ref{fig:cutarc}, we have, $J(r) = \left \lceil
  \frac{\pi}{2\alpha(r)}\right \rceil$. We also have from
Figure~\ref{fig:arc}, $h_{\max}r\alpha(r)=a(r) \geq
u(r)=\sqrt{1-p^2}r$. Hence, $\alpha(r) \geq
\frac{\sqrt{1-p^2}}{h_{\max}}$. So, $J(r) \leq \left \lceil \frac{\pi h_{\max}}{2\sqrt{1-p^2}}\right \rceil$.

To simplify notations, we shall henceforth write $P(\cdot)$ to indicate $P^{(n)}_{m,\epsilon}(\cdot)$.

Now we compute,
\begin{eqnarray}
  \label{eq:stripnode}
  \lefteqn{P(\X_{\epsilon,\delta})} \nonumber \\
  &=& 1-P\left(\cup_{j=1}^{J(r)}\cup_{i=1}^{h_{\max}-1}
{A_{i,j}^{l}}^c\right) \nonumber \\
  &\geq& 1-\sum_{j=1}^{J(r)}\sum_{i=1}^{h_{\max}-1}
P\left({A_{i,j}^{l}}^c\right) \nonumber \\
  &\geq& 1 - \left \lceil \frac{\pi h_{\max}}{2\sqrt{1-p^2}}\right \rceil (h_{\max}-1)
\left(1-\frac{u(r)t(r)}{A}\right)^n \nonumber \\
  &\geq& 1- \left \lceil \frac{\pi h_{\max}}{2\sqrt{1-p^2}}\right \rceil (h_{\max}-1)
e^{-\frac{nu(r)t(r)}{A}} \nonumber \\
  &=& 1- \left \lceil \frac{\pi h_{\max}}{2\sqrt{1-p^2}}\right \rceil (h_{\max}-1)
e^{-n\frac{q\sqrt{1-p^2}r^2}{A}} 
\end{eqnarray}

\begin{figure}[t]
\centering
\includegraphics[width= 8.5cm, height= 4cm]{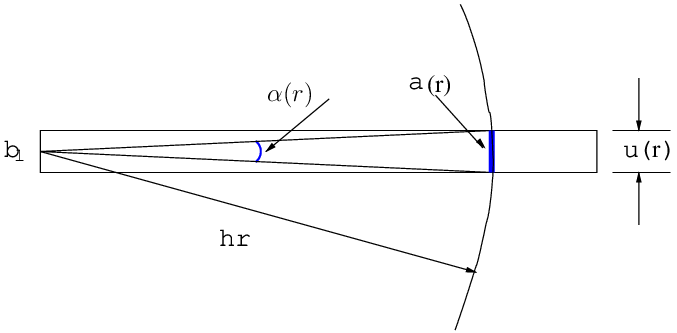}
\\\caption{Construction to find $J(r)$ (adapted from Nath et al.~\cite{nath-venkatesan}).}
\label{fig:arc}
\end{figure}

The first inequality comes from the union bound, the second
inequality, from the upper bound on $J(r)$. The third inequality uses
the result $1-x \leq e^{-x}$. 

Thus, in order to achieve $P(\X_{\epsilon,\delta})\geq 1-\delta$ (and hence, $P(\X)\geq 1-\delta$), it is sufficient that 

\begin{align}
1- \left\lceil \frac{\pi h_{\max}}{2\sqrt{1-p^2}}\right\rceil e^{-n\frac{q\sqrt{1-p^2}r^2}{A}}&\geq 1-\delta\nonumber\\
\Rightarrow n &\geq \frac{A}{q\sqrt{1-p^2}r^2}\ln\left\{\left\lceil \frac{\pi h_{\max}}{2\sqrt{1-p^2}}\right\rceil \frac{1}{\delta}\right\}\define n_0(\epsilon,\delta,h_{\max},r)\label{eqn:minimum_nodes_feasibility}
\end{align}

where, $p$ and $q$ can be obtained in terms of $\epsilon$ by maximizing $q\sqrt{1-p^2}$ (so as to somewhat tighten the bound in Equation~\eqref{eqn:minimum_nodes_feasibility}) under the constraint $p-q\:=\:1-\epsilon$.

Note that a tighter bound can be obtained by using the ``eyeball'' construction presented in Appendix B of \cite{nath-venkatesan} instead of the rectangular strip construction presented here.
\end{proof}

\gap
\noindent
\remark For fixed $h_{\max}$ and $r$, $n_0(\epsilon,\delta)$ increases with decreasing $\epsilon$ and $\delta$.

\gap
\noindent
\textbf{The experiment:} In the light of Theorem~\ref{thm:average-case-analysis-setting}, we employ the following node deployment strategy to ensure, w.h.p, feasibility of the RST-MR-HC problem in the area $A$. 
Choose arbitrary small values of $\epsilon,\delta \in (0,1)$. Given the hop count bound $h_{\max}$ and the maximum communication range $r$, obtain $n_0(\epsilon,\delta,h_{\max},r)$ as defined in Theorem~\ref{thm:average-case-analysis-setting}. Deploy $n\geq n_0$ potential locations i.i.d uniformly randomly over the area of interest, $A$. $m$ sources are deployed i.i.d uniformly randomly within a radius $(1-\epsilon)h_{\max}r$ from the BS, i.e., over the area $A_{\epsilon}$. By virtue of Theorem~\ref{thm:average-case-analysis-setting}, this ensures that any source deployed within a distance $(1-\epsilon)h_{\max}r$ is no more than $h_{\max}$ hops away from the BS w.h.p, thus ensuring feasibility of the RST-MR-HC problem w.h.p. We check whether the deployment is feasible by computing the SPT on the induced random geometric graph with hop count as cost. \emph{In this stochastic setting, we derive an upper bound on the average case approximation ratio, $\alpha$, of the SPTiRP algorithm} as follows.

\begin{lemma}
\label{lem:spt-nodecount-expectn-feasible}
\begin{align}
E[N_{SPTiRP}|\X]&\leq m[h_{\max}-\frac{1}{(1-\epsilon)^2h_{\max}^2}\nonumber\\
&-\sum_{j=2}^{h_{\max}-1}\frac{j^2}{h_{\max}^2}]-m\nonumber\\
&+\:m\delta(h_{\max}-1)
\label{eqn:spt-nodecount-expectn-feasible}
\end{align}
\end{lemma}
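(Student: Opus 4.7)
The plan is to upper bound $N_{SPTiRP}$ by the number of relays in the initial shortest path tree, and then split the conditional expectation $E[N_{SPTiRP}\mid\X]$ over $\X_{\epsilon,\delta}$ (the high-probability sub-event of $\X$ built in the proof of Theorem~\ref{thm:average-case-analysis-setting}) and its complement inside $\X$. First I would observe that Step~4 of SPTiRP discards every potential relay outside the initial SPT $T$ and that all subsequent iterations only remove relays, so on any feasible instance $N_{SPTiRP}(\omega)$ is bounded above by the number of Steiner vertices in $T$, which is at most $\sum_{j=1}^m(H_j(\omega)-1)$ since the source-to-BS path in $T$ has hop count $H_j(\omega)$ and therefore at most $H_j(\omega)-1$ intermediate nodes (some of which may actually be other sources, making this a safe overcount). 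I would also use the crude bound $N_{SPTiRP}\le m(h_{\max}-1)$, which holds on all of $\X$ because any feasible design places at most $h_{\max}-1$ relays on each source path.

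Next I would bound $E[H_j\mid\X_{\epsilon,\delta}]$. Because the sources are deployed i.i.d.\ uniformly on the quarter-disc $A_\epsilon$ of radius $(1-\epsilon)h_{\max}r$ independently of the potential relay locations, the source-to-BS distance $D_j$ is independent of $\X_{\epsilon,\delta}$, with $P(D_j\le\rho)=\rho^2/((1-\epsilon)h_{\max}r)^2$ on $[0,(1-\epsilon)h_{\max}r]$. The strip construction in the proof of Theorem~\ref{thm:average-case-analysis-setting} guarantees that, on $\X_{\epsilon,\delta}$, every source at distance at most $(1-\epsilon)hr$ from the BS is reachable in at most $h$ hops, so $P(H_j\le h\mid\X_{\epsilon,\delta})\ge h^2/h_{\max}^2$ for $h=2,\ldots,h_{\max}-1$. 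For $h=1$ I would use the sharper observation that $H_j=1$ iff there is a direct edge between the source and the BS, i.e.\ iff $D_j\le r$, which occurs with probability $1/((1-\epsilon)^2 h_{\max}^2)$. Plugging these lower bounds into the tail identity $E[H_j\mid\X_{\epsilon,\delta}] = h_{\max} - \sum_{h=1}^{h_{\max}-1}P(H_j\le h\mid\X_{\epsilon,\delta})$, valid because $H_j\in\{1,\ldots,h_{\max}\}$ on $\X_{\epsilon,\delta}$, yields exactly the bracketed expression appearing in the lemma.

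To combine, I would write $E[N_{SPTiRP}\ind_{\X}] = E[N_{SPTiRP}\ind_{\X_{\epsilon,\delta}}] + E[N_{SPTiRP}\ind_{\X\setminus\X_{\epsilon,\delta}}]$, apply the per-source bound from the previous step on the first term and the crude $m(h_{\max}-1)$ bound together with $P(\X\setminus\X_{\epsilon,\delta})\le\delta$ on the second, and then divide by $P(\X)\ge 1-\delta$, absorbing the $1/(1-\delta)$ factor into the $m\delta(h_{\max}-1)$ slack term for small $\delta$. The main obstacle I expect is the bookkeeping around the two layers of conditioning: $\X_{\epsilon,\delta}$ depends only on the potential-relay configuration whereas $\X$ depends on both deployment layers, and a clean bound requires exploiting the independence of the two layers so that the source-distance distribution is undisturbed by conditioning on $\X_{\epsilon,\delta}$. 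The $h=1$ case also has to be handled separately, since the direct-edge argument gives a tighter probability than the generic strip bound, and it is precisely this gap that produces the $1/(1-\epsilon)^2$ factor in the first term of the lemma's bracket.
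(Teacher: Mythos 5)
Your proposal is correct and follows essentially the same route as the paper's proof: bound $N_{SPTiRP}$ by the relay count of the initial SPT, bound that on $\X_{\epsilon,\delta}$ via the strip-construction guarantee together with the independence of the source locations from $\X_{\epsilon,\delta}$ (your per-source tail-sum identity is just the paper's ring decomposition $\sum_j jE[M_j]-m$ written differently, including the separate radius-$r$ treatment of the one-hop case that yields the $1/(1-\epsilon)^2h_{\max}^2$ term), and handle $\X\setminus\X_{\epsilon,\delta}$ with the crude $m(h_{\max}-1)$ bound and probability at most $\delta$. The $\delta$ versus $\delta/(1-\delta)$ looseness you flag is present in the paper's own conditioning step as well, so it is not a gap relative to the stated result.
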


\begin{proof}
We define the following:

\begin{description}
\item $N_{SPT}(\omega)$: number of relays in the SPT on $\mathcal{G}^r(\omega)$ for the $m$ sources with BS as the root ($\infty$ if $\mathcal{G}^r(\omega)$ is disconnected)
\item $\mathcal{S}_j(\omega)$: set of sources whose Euclidean Distance ($D_s$) from the BS satisfy $(1-\epsilon)(j-1)r<D_s \leq (1-\epsilon)jr$, for $j=3,\ldots,h_{\max}$, and $r<D_s \leq (1-\epsilon)2r$ for $j=2$, $0<D_s \leq r$ for $j=1$
\item $M_j(\omega)$: Number of sources in the set $\mathcal{S}_j(\omega)$, i.e., $|\mathcal{S}_j(\omega)|$
\item $\overline{H}_j(\omega)$: maximum number of hops in the shortest path from a source in $\mathcal{S}_j(\omega)$ to the BS
\end{description} 

Note that $\sum_{j=1}^{h_{\max}}M_j(\omega)=m$.

Recall that $N_{SPTiRP}(\omega)$ denotes the number of relays in the solution provided by the SPTiRP algorithm on a feasible instance. Since the algorithm starts by finding an SPT, and then pruning relays from that SPT, we have
\begin{equation*}
N_{SPTiRP}(\omega)\leq N_{SPT}(\omega) 
\end{equation*}
 Hence, we can upper bound the expected number of relays in the SPTiRP solution on a feasible instance as 
\begin{equation}
\label{eqn:sptirp-spt-ineq}
E[N_{SPTiRP}|\X]\leq E[N_{SPT}|\X]
\end{equation}

Now, observe that 
\begin{equation}
N_{SPT} \leq \sum_{j=1}^{h_{\max}}M_j\overline{H}_j\:-\:m
\label{eqn:spt-nodecount}
\end{equation}

Also note that, given $\X_{\epsilon,\delta}$ (i.e., given a deployment in $\X_{\epsilon,\delta}$), $\overline{H}_j \leq j,$ $\forall j=1,\ldots, h_{\max}$.

Therefore, taking expectation on both sides of \eqref{eqn:spt-nodecount}, we have

\begin{align}
E[N_{SPT}|\X_{\epsilon,\delta}]&\leq E[\sum_{j=1}^{h_{\max}}M_j\overline{H}_j|\X_{\epsilon,\delta}]\:-\:m\nonumber\\
&\leq \sum_{j=1}^{h_{\max}}jE[M_j|\X_{\epsilon,\delta}]\:-\:m\nonumber\\
&=\sum_{j=1}^{h_{\max}}jE[M_j]\:-\:m\quad \text{since }M_j\perp \X_{\epsilon,\delta}\nonumber\\
&= m[h_{\max}-\frac{1}{(1-\epsilon)^2h_{\max}^2}\nonumber\\
&-\sum_{j=2}^{h_{\max}-1}\frac{j^2}{h_{\max}^2}]-m, \quad\text{after simplification}
\label{eqn:spt-nodecount-expectn1}
\end{align}

However, a deployment in $\X_{\epsilon,\delta}$ is sufficient, but not necessary for feasibility of the RST-MR-HC problem. When a deployment is not in $\X_{\epsilon,\delta}$, but still there exists a feasible solution satisfying the hop constraint, the number of nodes in the SPT can be trivially upper bounded as $m(h_{\max}-1)$. Hence, 
\begin{align}
E[N_{SPT}|\X]&= E[N_{SPT}|\X_{\epsilon,\delta},\X]P[\X_{\epsilon,\delta}|\X]\nonumber\\
&\:+\:E[N_{SPT}|\X_{\epsilon,\delta}^c,\X]P[\X_{\epsilon,\delta}^c|\X]\nonumber\\
&\leq E[N_{SPT}|\X_{\epsilon,\delta}] \:+\:E[N_{SPT}|\X_{\epsilon,\delta}^c,\X]\delta\nonumber\\
&\leq m[h_{\max}-\frac{1}{(1-\epsilon)^2h_{\max}^2}\nonumber\\
&-\sum_{j=2}^{h_{\max}-1}\frac{j^2}{h_{\max}^2}]-m\nonumber\\
&+\:m\delta(h_{\max}-1)
\label{eqn:spt-nodecount-expectn}
\end{align}

The lemma follows by combining equations~\eqref{eqn:sptirp-spt-ineq} and \eqref{eqn:spt-nodecount-expectn}.
\end{proof}

\begin{lemma}
\label{lem:ropt-lowerbound-final}
\begin{align}
E[R_{Opt}|\X]&\geq \left[1-\left(\frac{h_{\max}-1}{(1-\epsilon)h_{\max}}\right)^{2m}\right](1-\delta)\sum_{i=1}^{h_{\max}-1}\left(1-\frac{\frac{n_i^2}{3}}{(1-\epsilon)^2h_{\max}^2}\right)^{m-1}
\label{eqn:ropt-lowerbound-final}
\end{align}
where, $n_i=\min(i,h_{\max}-i)$.
\end{lemma}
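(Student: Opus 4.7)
The plan is to lower-bound $E[R_{Opt}\mid\X]$ in three stages: (i) replace the conditioning event $\X$ by the more tractable sub-event $\X_{\epsilon,\delta}$ from the proof of Theorem~\ref{thm:average-case-analysis-setting}; (ii) introduce the event that at least one source lies ``far'' from the BS, which forces its optimal path to use exactly $h_{\max}-1$ relays; and (iii) lower-bound the number of relays on that path that are \emph{not} recycled by any other source's optimal path, by a geometric area argument coupled with independence.

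For stage (i), $\X_{\epsilon,\delta}\subseteq\X$ and $P(\X_{\epsilon,\delta})\geq 1-\delta$ (both established in the proof of Theorem~\ref{thm:average-case-analysis-setting}) give $P(\X_{\epsilon,\delta}\mid\X)\geq 1-\delta$, whence $E[R_{Opt}\mid\X]\geq(1-\delta)\,E[R_{Opt}\mid\X_{\epsilon,\delta}]$; this supplies the $(1-\delta)$ factor. For stage (ii), let $E_{\text{far}}$ denote the event that at least one source is at Euclidean distance greater than $(h_{\max}-1)r$ from the BS. Since $\X_{\epsilon,\delta}$ depends only on the potential relay locations, it is independent of the source positions, and because the $m$ sources are i.i.d.\ uniform on the quarter disk $A_{\epsilon}$ of radius $(1-\epsilon)h_{\max}r$, an area-ratio computation gives
\[
P(E_{\text{far}}\mid\X_{\epsilon,\delta})\;=\;1-\left(\frac{h_{\max}-1}{(1-\epsilon)h_{\max}}\right)^{2m},
\]
producing the first bracketed factor. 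On $E_{\text{far}}\cap\X_{\epsilon,\delta}$, fix any far source $s_0$ (for concreteness, the lexicographically first such source): the triangle inequality forces at least $h_{\max}$ hops on its path to BS, while $\X_{\epsilon,\delta}$ caps the hop count at $h_{\max}$, so $s_0$'s optimal path uses exactly $h_{\max}-1$ relays $R_1,\ldots,R_{h_{\max}-1}$, with $R_i$ at hop $i$ from BS. Hence
\[
R_{Opt}\;\geq\;\sum_{i=1}^{h_{\max}-1}\ind_{\{R_i\text{ is used only by }s_0\}}.
\]

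For stage (iii), a sufficient condition for $R_i$ to be used only by $s_0$ is that no other source lies in a region $\Omega_i$ around $R_i$. Any source that could recruit $R_i$ into a minimum-relay optimal path is simultaneously constrained by the $i$-hop budget to the BS through $R_i$ and the $(h_{\max}-i)$-hop budget from $R_i$ to itself, so $\Omega_i$ has effective radius $n_i r$ with $n_i=\min(i,h_{\max}-i)$; intersecting the relevant disk with the quarter disk $A_{\epsilon}$ (an eyeball/sector construction analogous to Appendix~B of \cite{nath-venkatesan}) yields
\[
\frac{|\Omega_i|}{|A_{\epsilon}|}\;=\;\frac{n_i^2/3}{(1-\epsilon)^2 h_{\max}^2}.
\]
Conditioning on which source is $s_0$, the remaining $m-1$ source positions are i.i.d.\ uniform on $A_{\epsilon}$ (or on a subregion thereof, which only \emph{decreases} the chance of them falling in $\Omega_i$), so by independence
\[
P(R_i\text{ is used only by }s_0\mid\cdots)\;\geq\;\Bigl(1-\tfrac{n_i^2/3}{(1-\epsilon)^2 h_{\max}^2}\Bigr)^{m-1},
\]
uniformly in the choice of $s_0$. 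Summing over $i=1,\ldots,h_{\max}-1$ by linearity of expectation and combining with stages (i)--(ii) delivers the claimed bound. The principal obstacle is stage~(iii): one must justify that a \emph{minimum-relay} (not merely minimum-hop) optimal solution cannot cheaply reroute another source through $R_i$ when $\Omega_i$ is empty of other sources, and then carry out the explicit geometric intersection with $A_{\epsilon}$ to obtain the constant $1/3$ in the area ratio. The $\min(i,h_{\max}-i)$ scaling itself emerges naturally from the two-sided hop budget at $R_i$; the remaining steps amount to standard probability bookkeeping that exploits the independence of the uniform source positions.
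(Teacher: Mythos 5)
Your stages (i) and (ii) match the paper: the $(1-\delta)$ factor comes from $\X_{\epsilon,\delta}\subseteq\X$, and the bracketed factor is $P[\overline{D}_s>(h_{\max}-1)r]$, computed from the i.i.d.\ uniform sources on the quarter disk, using independence of $\X_{\epsilon,\delta}$ from the source positions. Stage (iii), however, contains a genuine gap. On $E_{\mathrm{far}}\cap\X_{\epsilon,\delta}$ the far source's path indeed has exactly $h_{\max}$ hops, but its $h_{\max}-1$ intermediate nodes need \emph{not} be relays: they can be other source nodes, which cost nothing. (If your claim ``the optimal path uses exactly $h_{\max}-1$ relays'' were true, you would immediately get $R_{Opt}\geq h_{\max}-1$, a bound much stronger than the lemma, which signals the premise is wrong.) Your subsequent condition ``$R_i$ is used only by $s_0$'' attacks a non-issue --- sharing a relay among several sources does not reduce the count of relays on $s_0$'s path --- while the real question, whether the $i$-th node is a relay at all rather than a source, is left unaddressed. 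The paper's key device is exactly this: it defines $Y_i=1$ when \emph{no source} lies in the lens $L_i^{(s)}$, the intersection of the disk of radius $ir$ centred at the far source $s$ (triangle inequality) with the disk of radius $(h_{\max}-i)r$ centred at the BS (remaining hop budget); on $Y_i=1$ the $i$-th node of any hop-feasible path of $s$ must be a relay, so $R_{Opt}\geq R^s_{Opt}\geq\sum_i Y_i$, and the events $Y_i$ are determined by deterministic regions anchored at $s$ and the BS, not at a random, solution-dependent relay location.

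Two further problems in your stage (iii): first, your region $\Omega_i$ is centred at the relay $R_i$, whose position is selected by the optimal solution and is therefore correlated with the source configuration, so the clean product bound $(1-\cdot)^{m-1}$ does not follow by ``independence''; the paper avoids this by using $L_i^{(s)}$, which depends only on $s$ and the BS. Second, you assert the area ratio as an exact equality, whereas the paper proves a nontrivial geometric claim (the lens area is at most $\frac{\pi}{3}n_i^2r^2$, with $n_i=\min(i,h_{\max}-i)$, via a cosine/angle argument requiring the centre separation to exceed the larger radius) and only then divides by the area of $A_{\epsilon}$. Also, your parenthetical that conditioning the other $m-1$ sources to a subregion of $A_{\epsilon}$ ``only decreases'' the chance of landing in $\Omega_i$ is backwards: restricting the support to a smaller region \emph{increases} the conditional probability of hitting a fixed subset of it; the paper instead passes through the number $M^o$ of sources in the outermost ring and treats the remaining $m-1$ sources as free. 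So while your final formula coincides with \eqref{eqn:ropt-lowerbound-final}, the argument as written does not establish it; you need the $Y_i$/lens construction (or an equivalent) to force relays at the intermediate positions and to make the probability computation legitimate.
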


\begin{proof}

We can write

\begin{align}
E[R_{Opt}|\X]&\geq E[R_{Opt}\ind_{\X_{\epsilon,\delta}}|\X]\nonumber\\
&= P[\X_{\epsilon,\delta}|\X]\times \:E[R_{Opt}|\X_{\epsilon,\delta},\X]\nonumber\\
&\geq P[\X_{\epsilon,\delta},\X]\:E[R_{Opt}|\X_{\epsilon,\delta},\X]\nonumber\\
&= P[\X_{\epsilon,\delta}]\:E[R_{Opt}|\X_{\epsilon,\delta}],\quad\text{since $\X_{\epsilon,\delta}$ implies feasibility}\nonumber\\
&\geq (1-\delta)\:E[R_{Opt}|\X_{\epsilon,\delta}]
\label{eqn:ropt-lowerbound-by-xep}
\end{align}

\remark The first inequality above is tight since $\Xep$ is a high probability event (for the chosen deployment strategy). The second inequality is tight since $P(\X)\geq P(\X_{\epsilon,\delta},\X)\geq 1-\delta$, and hence $P(\X)$ is close to 1. The third inequality is tight when the number of potential relay locations is just enough to meet the requirement $P(\Xep)\geq 1-\delta$, i.e., $P(\Xep)\approxeq 1-\delta$.

\gap
We define
\begin{description}
\item $\overline{D}_s(\omega)$: The maximum Euclidean distance from the BS, of a source location in $\omega$ 
\end{description}

Then, for the conditional expectation term on the right hand side of Eqn.~\eqref{eqn:ropt-lowerbound-by-xep}, we can write
\begin{align}
E[R_{Opt}|\Xep]&=E[R_{Opt}\ind_{\{\overline{D}_s\leq (h_{\max}-1)r\}}|\Xep]\nonumber\\
&\:+\:E[R_{Opt}\ind_{\{\overline{D}_s > (h_{\max}-1)r\}}|\Xep]\nonumber\\
&\geq E[R_{Opt}\ind_{\{\overline{D}_s > (h_{\max}-1)r\}}|\Xep]\nonumber\\
&= P[\overline{D}_s > (h_{\max}-1)r|\Xep]\times \nonumber\\
&\:E[R_{Opt}|\overline{D}_s > (h_{\max}-1)r,\Xep]\nonumber\\
&= P[\overline{D}_s > (h_{\max}-1)r]\times \nonumber\\
&\:E[R_{Opt}|\overline{D}_s > (h_{\max}-1)r,\Xep]
\label{eqn:ropt-lowerbound}
\end{align}
where the last equality follows since the event $\Xep$ depends only on the positions of the relays, while the event $\{\omega: \overline{D}_s(\omega) > (h_{\max}-1)r\}$ depends only on the sources, thus being independent of each other. 

\remark The inequality above may not be loose since the probability that there exists at least one source in the ring with inner and outer radii ($(h_{\max}-1)r,(1-\epsilon)h_{\max}r$) is significantly large compared to that in the inner rings (the last ring having the maximum area among all the rings), and this probability increases with increasing number of sources. 
\if 0
Now,

\begin{flalign}
P[\overline{D}_s > (h_{\max}-1)r|\X]&\geq P[\overline{D}_s > (h_{\max}-1)r,\X]\nonumber\\
&= P[\overline{D}_s > (h_{\max}-1)r]\nonumber\\
&\:P[\X|\overline{D}_s > (h_{\max}-1)r]\nonumber\\
&= P[\overline{D}_s > (h_{\max}-1)r]\nonumber\\
&\:(P[\X,\X_{\epsilon,\delta}|\overline{D}_s > (h_{\max}-1)r]\nonumber\\
&\:+P[\X,\X_{\epsilon,\delta}^c|\overline{D}_s > (h_{\max}-1)r])\nonumber\\
&\geq P[\overline{D}_s > (h_{\max}-1)r]\nonumber\\
&\:P[\X,\X_{\epsilon,\delta}|\overline{D}_s > (h_{\max}-1)r]\nonumber\\
&= P[\overline{D}_s > (h_{\max}-1)r]\nonumber\\
&\:P[\X_{\epsilon,\delta}|\overline{D}_s > (h_{\max}-1)r]\nonumber\\
&\geq P[\overline{D}_s > (h_{\max}-1)r](1-\delta), \quad\text{since }\X_{\epsilon,\delta}\Perp \{\omega: \overline{D}_s(\omega) > (h_{\max}-1)r\}
\label{eqn:maxED-prob-lower-bound}
\end{flalign}
\fi

\gap
Note that $\overline{D}_s > (h_{\max}-1)r$ on an instance in $\Xep$ implies that there exists at least one source in the ring with inner and outer radii ($(h_{\max}-1)r,(1-\epsilon)h_{\max}r$), and, being feasible, it must be $h_{\max}$ hops away from the BS. For ease of writing, let us define

\begin{description}
\item $\X_{h_{\max}}\:=\:\{\omega: \overline{D}_s(\omega) > (h_{\max}-1)r\}\cap \Xep$
\end{description}

Consider an instance $\omega\in\Xh$. In this $\omega$, let us denote by $s(\omega)$, the source which is farthest from the BS among all the sources in the outermost ring (centred at the BS), with inner and outer radii ($(h_{\max}-1)r,(1-\epsilon)h_{\max}r$). 

Observe that, $R_{Opt}(\omega)$ is \emph{lower bounded by the number of relays in the path from the source $s(\omega)$ to the BS in any optimal solution in $\omega$.}

\begin{figure}[t]
\centering
\includegraphics[scale=0.4]{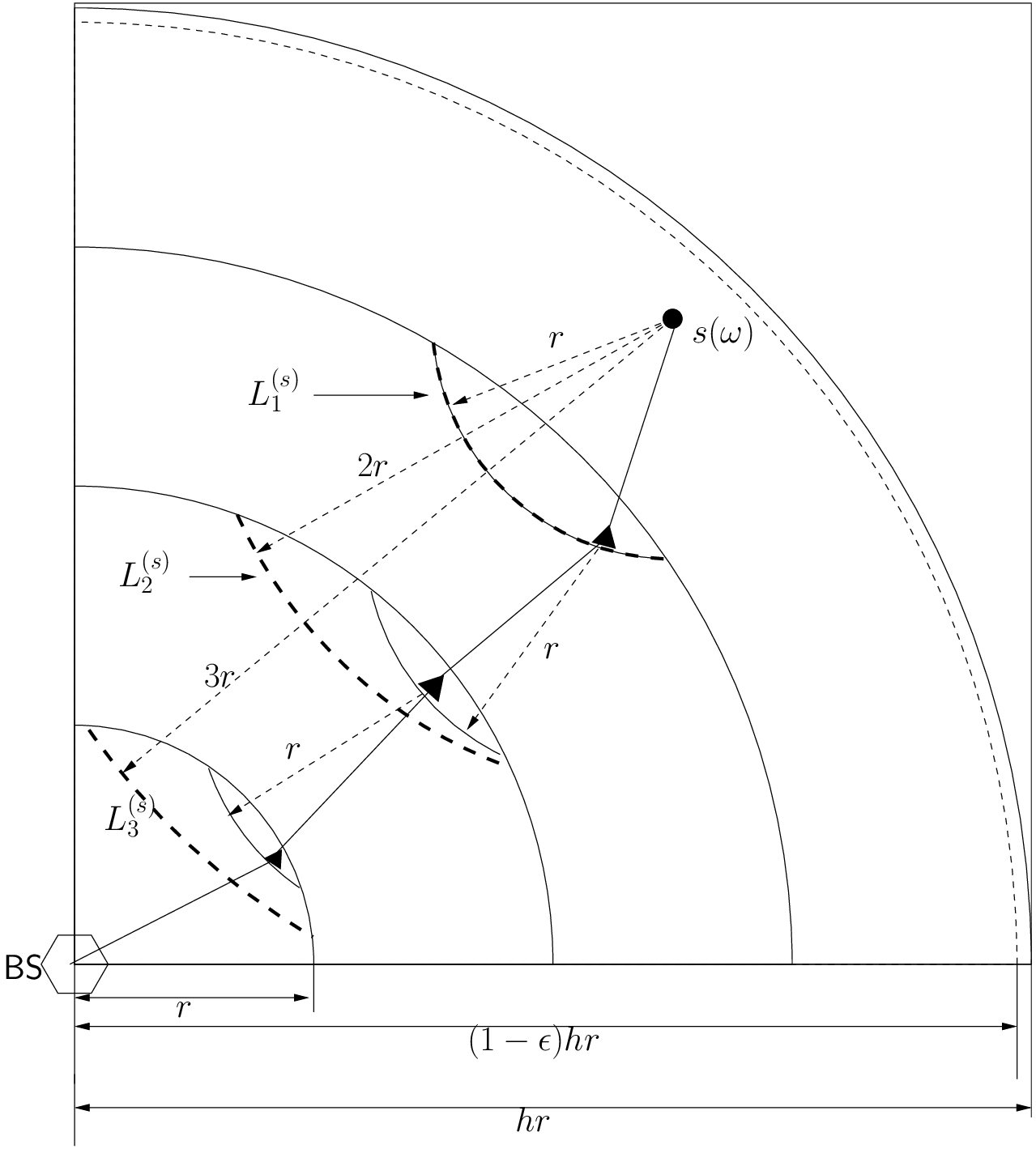}
\\\caption{Illustration of the lenses $L_j^{(s)}, 1\leq j\leq h_{\max}-1$, used in the proof of Lemma~\ref{lem:ropt-lowerbound-final}; $L_j^{(s)}$ contains in it, the $j^{th}$ lens, and hence the $j^{th}$ intermediate node in a feasible path from source $s(\omega)$ to the BS. The solid triangles indicate the intermediate nodes in a feasible path from source $s(\omega)$ to the BS.}
\label{fig:lens_diag}
\end{figure}

For an instance $\omega\in\Xh$, we have the following properties:
\begin{enumerate}
\item for each node in a \emph{feasible} path from the source $s(\omega)$ to the BS, 
\begin{enumerate}
\item \emph{The next hop node must lie in the lens shaped intersection of the communication circle of radius $r$ of that node, and the next ring}. These lenses are disjoint, and there are $(h_{\max}-1)$ of them. This property is illustrated in Figure~\ref{fig:lens_diag}, where we have, $h_{\max}=4$, and we have indicated a feasible path from the source $s(\omega)$ (which is in the outermost ring) to the BS; each link in the path is indicated by a solid straight line, and each intermediate node is indicated by a triangle. Also shown by \emph{narrow solid arc} is the lens shaped intersection of the communication circle of radius $r$ of each node, and the next ring. Note that for each node, the next hop node always lies in this lens shaped intersection. Moreover, these lenses are disjoint.
\item The next hop node is a relay if there does not exist a source node (out of at most $(m-1)$ remaining source nodes) within the next ``lens''.
\end{enumerate}
\item Further, it follows from the \emph{triangle inequality} that the maximum possible Euclidean distance from $s(\omega)$, of the $j^{th}$ node (counting from the source side, excluding the source) in a feasible path from $s(\omega)$ to the BS, is $jr$, \emph{irrespective of the position of the intermediate nodes}. Hence, we denote by $L_j^{(s)}(\omega)$, the lens shaped intersection of the circle of radius $jr$ centred at $s(\omega)$, and the circle of radius $(h_{\max}-j)r$ centred at the BS. Clearly, $L_j^{(s)}(\omega)$ encompasses in it, all possible lenses (depending on the positions of the intermediate nodes) that might contain the $j^{th}$ node in a feasible path of $s(\omega)$. Also note that the lenses $L_j^{(s)},\:1\leq j\leq h_{\max}-1$, are disjoint. For example, see Figure~\ref{fig:lens_diag}, where $h_{max}=4$, and we have indicated by thick dashed arcs, the lenses $L_j^{(s)}, 1\leq j\leq 3$, for the source $s(\omega)$. As can be seen from the figure, $L_j^{(s)}, 1\leq j\leq 3$, contains in it, the $j^{th}$ lens, and hence the $j^{th}$ intermediate node in the feasible path from source $s(\omega)$ to the BS. Also, we see from Figure~\ref{fig:lens_diag} that the lenses $L_j^{(s)}, 1\leq j\leq 3$ are disjoint. 
\end{enumerate}

Now, for any $\omega\in\{\omega:\overline{D}_s(\omega) > (h_{\max}-1)r\}$, we define, $\forall i,\:1\leq i\leq h_{\max}-1$,
\begin{equation*}
Y_i(\omega) = \left\{
\begin{array}{rl}
1, & \text{if } ,\exists\:\text{no source in  $L_i^{(s)}(\omega)$}\\
0, & \text{otherwise}
\end{array} \right.
\end{equation*}

Set $Y_i(\omega)=\infty$ if $\omega\in\{\omega:\overline{D}_s(\omega) > (h_{\max}-1)r\}^c$. $Y_i(\omega)$ is uniquely determined by $\omega$, and does not depend on any particular optimal solution. 

Thus, it follows from the definition of $Y_i(\omega)$ and the properties 1 and 2 above that, in an optimal solution, the number of relays in the path from the source $s(\omega)$ to the BS is at least $\sum_{i=1}^{h_{\max}-1}Y_i(\omega)$ (since whenever $Y_i(\omega)=1,\:1\leq i\leq h_{\max}-1$, the $i^{th}$ hop node in the path from source $s(\omega)$ to the BS must be a relay). Hence, 

\begin{align}
R_{Opt}(\omega)&\geq\sum_{i=1}^{h_{\max}-1}Y_i(\omega)\quad\forall \omega\in\Xh
\end{align}

Let us further define
\begin{equation*}
M^o(\omega) = \left\{
\begin{array}{rl}
\text{number of sources in the outermost ring}, & \text{if } \omega\in\{\omega:\overline{D}_s(\omega) > (h_{\max}-1)r\}\\
0, & \text{otherwise}
\end{array} \right.
\end{equation*}

Thus, we have

\begin{align}
E[R_{Opt}|\Xh] &\geq E\left[\sum_{i=1}^{h_{\max}-1}Y_i|\Xh\right]\nonumber\\
&= \sum_{j=1}^m P[M^o = j|\Xh]\:E\left[\sum_{i=1}^{h_{\max}-1}Y_i|M^o=j,\Xh\right]\nonumber\\
&= \sum_{j=1}^m P[M^o = j|\Xh]\:\left(\sum_{i=1}^{h_{\max}-1}E\left[Y_i|M^o=j,\Xh\right]\right)\nonumber\\
&= \sum_{j=1}^m P[M^o = j|\Xh]\:\left(\sum_{i=1}^{h_{\max}-1}P[Y_i=1|M^o=j,\Xh]\right)\nonumber\\
&\geq \sum_{j=1}^m P[M^o = j|\Xh]\:\left(\sum_{i=1}^{h_{\max}-1}q_i\right),\quad q_i\coloneqq P[Y_i=1|M^o=1,\Xh]\label{eqn:qidef_ineq}\\
&= P[M^o\geq 1|\Xh]\sum_{i=1}^{h_{\max}-1}q_i \nonumber\\
&= \sum_{i=1}^{h_{\max}-1}q_i,\quad\text{since }P[M^o\geq 1|\Xh]=1\label{eqn:ropt-lowerbound-inter}
\end{align}
where, the inequality \ref{eqn:qidef_ineq} follows by taking $(m-1)$ sources (which is the maximum possible number, given that there exists at least one source in the outermost ring) to be free to enter the lenses $L_i^{(s)},\:1\leq i\leq h_{\max}-1$.

To obtain a lower bound on $q_i$, we proceed as follows.

\if 0
\begin{align}
q_i &= P[Y_i\:=\:1|\Xh]\nonumber\\
&\geq P[Y_i\:=\:1,\X_{\epsilon,\delta}|\Xh]\nonumber\\
&= P[\X_{\epsilon,\delta}|\Xh]\:P[Y_i\:=\:1|\X_{\epsilon,\delta},\Xh]\nonumber\\
&\geq P[\X_{\epsilon,\delta},\:\Xh]\:P[Y_i\:=\:1|\X_{\epsilon,\delta},\Xh]\nonumber\\
&= P[\X_{\epsilon,\delta},\overline{D}_s > (h_{\max}-1)r]\:P[Y_i\:=\:1|\X_{\epsilon,\delta}, \overline{D}_s > (h_{\max}-1)r]\quad\text{since $\X_{\epsilon,\delta}$ implies feasibility}\nonumber\\
&\geq (1-\delta)P[\overline{D}_s > (h_{\max}-1)r]\:P[Y_i\:=\:1|\overline{D}_s > (h_{\max}-1)r]\quad \text{$\X_{\epsilon,\delta}$ indept. of source node locations}\label{eqn:p_yi_lowerbound}
\end{align}
\fi

\begin{align}
q_i &= P[Y_i\:=\:1|M^o=1,\Xh]\nonumber\\
&= P[Y_i\:=\:1|M^o=1,\overline{D}_s > (h_{\max}-1)r]\quad \text{$\X_{\epsilon,\delta}$ indept. of source node locations}\label{eqn:p_yi_lowerbound}
\end{align}
When there exists a single source in the outermost ring, $Y_i\:=\:1$ if \emph{there does not exist a source node (out of at most $(m-1)$ remaining source nodes) within the lens $L_i^{(s)}$}.

\gap
\noindent
\textbf{Claim: } The area of lens $L_i^{(s)}$ is upper bounded by $\frac{\pi}{3}n_i^2r^2$, where $n_i=\min(i,h_{\max}-i)$.

\begin{proof} 

\begin{figure}[t]
\centering
\includegraphics[scale = 0.4]{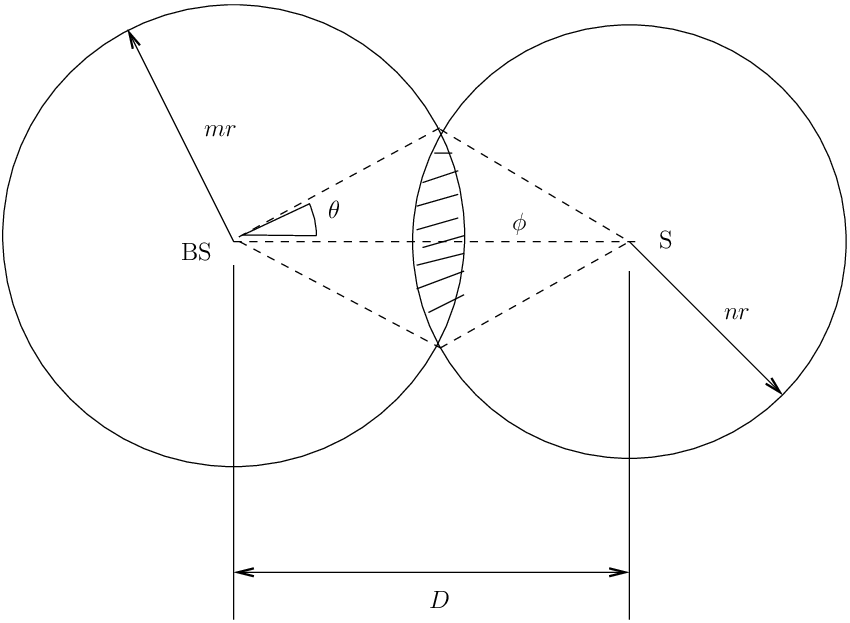}
\caption{The lens shaped intersection between two circles}
\label{fig:shaded_lens}
\end{figure}

Consider the situation shown in Figure~\ref{fig:shaded_lens}. We are interested in the area of the shaded lens shaped region of intersection between two circles of radii $mr$ and $nr$ respectively, $m,n\in\mathbb{N}$, $r\in \Re^+$. Assume without loss of generality, $m \geq n$. Also, assume that the distance $D$ between the centres of the circles satisfies $(m+n)r > D \geq \max(mr,nr)$, so that the circles have a non-zero area of intersection, and neither centre is within the other circle. Let the angles $\theta$ and $\phi$ be as shown in the figure. Let $A_s$ denote the area of the shaded region. Then clearly,

\begin{equation}
\label{eqn:area_upper}
A_s \leq n^2r^2\phi
\end{equation}  

Let $D=ar$, where $a\in \Re^+$. Note that $(m+n) > a \geq m$ (since, $(m+n)r > D \geq \max(mr,nr)\:=\:mr$). 

Now,

\begin{align}
\cos\theta &= \frac{D^2+m^2r^2-n^2r^2}{2Dmr}\nonumber\\
&= \frac{a^2+m^2-n^2}{2am}\nonumber\\
&= \frac{a}{2m}\:+\:\frac{m^2-n^2}{2am}\nonumber
\end{align}

and,

\begin{align}
\cos\phi &= \frac{D^2+n^2r^2-m^2r^2}{2Dnr}\nonumber\\
&= \frac{a^2+n^2-m^2}{2an}\nonumber\\
&= \frac{a}{2n}\:+\:\frac{n^2-m^2}{2an}\nonumber
\end{align}

Thus, 

\begin{align}
\cos\theta - \cos\phi &= \left(\frac{1}{m}-\frac{1}{n}\right)\left[\frac{a}{2}\:+\:\frac{m^2-n^2}{2a}\right]\nonumber\\
&\leq 0\quad\text{since } m\geq n \label{eqn:theta-phi-relation}
\end{align}

Observe that $\theta,\phi\in (0,\frac{\pi}{2})$ (since $n\leq m \leq a < m+n$).

Hence, it follows from Eqn.~\eqref{eqn:theta-phi-relation} that 

\begin{equation}
\label{eqn:theta-phi-rel}
\theta \geq \phi
\end{equation}

Also note that 

\begin{align}
\cos\theta &\geq \frac{a}{2m}\geq \frac{1}{2}\quad\text{since }a\geq m\nonumber\\
\Rightarrow \theta \leq \frac{\pi}{3}\label{eqn:theta-upper}
\end{align}

Finally, combining \ref{eqn:area_upper}, \ref{eqn:theta-phi-rel}, and \ref{eqn:theta-upper}, we have

\begin{align}
A_s &\leq n^2r^2\phi\leq n^2r^2\theta\leq \frac{\pi}{3}n^2r^2\nonumber\\
&= \frac{\pi}{3}[\min(m,n)]^2r^2,\quad\text{since } m \geq n
\end{align}

Hence the claim follows, since the lens $L_i^{(s)}$ is the region of intersection of a circle of radius $ir$ centred at the source $s$, and a circle of radius $(h_{\max}-i)r$ centred at the BS. 
\end{proof} 

Hence, from Equation~\eqref{eqn:p_yi_lowerbound}, 

\begin{align}
\label{eqn:q_expression}
q_i &= P[Y_i\:=\:1|M^o=1,\overline{D}_s > (h_{\max}-1)r]\geq \left(1-\frac{\frac{n_i^2}{3}}{(1-\epsilon)^2(h_{\max})^2}\right)^{m-1},\quad n_i = \min(i,h_{\max}-i)
\end{align}

Finally,
\begin{align}
E[R_{Opt}|\X]&\geq (1-\delta)P[\overline{D}_s > (h_{\max}-1)r]E[R_{Opt}|\Xh], \quad\text{from \eqref{eqn:ropt-lowerbound}}\nonumber\\
&\geq (1-\delta)P[\overline{D}_s > (h_{\max}-1)r]\sum_{i=1}^{h_{\max}-1}q_i,\quad\text{from \eqref{eqn:ropt-lowerbound-inter}}\nonumber\\
&\geq \left[1-\left(\frac{h_{\max}-1}{(1-\epsilon)h_{\max}}\right)^{2m}\right](1-\delta)\sum_{i=1}^{h_{\max}-1}\left(1-\frac{\frac{n_i^2}{3}}{(1-\epsilon)^2h_{\max}^2}\right)^{m-1},\quad\text{from \eqref{eqn:q_expression}}\label{eqn:ropt_final}
\end{align}

where, $n_i = \min(i,h_{\max}-i)$.
\end{proof}

It follows from Lemma 1 and Lemma 2 that:
\begin{theorem}
\label{thm:avg-approx-oneconnect}
The average case approximation ratio of the SPTiRP algorithm over all feasible instances in the stochastic setting described earlier is upper bounded as

\begin{eqnarray}
\alpha &\leq \frac{\overline{N}}{\underline{R}_{Opt}}\label{eqn:avg-approx-oneconnect}
\end{eqnarray}

where, $\overline{N}$ is given by the R.H.S of \eqref{eqn:spt-nodecount-expectn-feasible}, and $\underline{R}_{Opt}$ is given by the R.H.S of \eqref{eqn:ropt-lowerbound-final}.
\end{theorem}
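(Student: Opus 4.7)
The plan is to observe that this theorem is essentially a bookkeeping consequence of the two preceding lemmas together with the definition of $\alpha$. By definition,
\[
\alpha \;=\; \frac{E[N_{SPTiRP}|\X]}{E[R_{Opt}|\X]}.
\]
Lemma~\ref{lem:spt-nodecount-expectn-feasible} gives $E[N_{SPTiRP}|\X]\leq \overline{N}$, where $\overline{N}$ is the right-hand side of \eqref{eqn:spt-nodecount-expectn-feasible}. Lemma~\ref{lem:ropt-lowerbound-final} gives $E[R_{Opt}|\X]\geq \underline{R}_{Opt}$, where $\underline{R}_{Opt}$ is the right-hand side of \eqref{eqn:ropt-lowerbound-final}.

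First I would verify that both quantities appearing in the ratio are positive (so that monotonicity of $x\mapsto 1/x$ on $(0,\infty)$ can be applied without issue). For the denominator lower bound $\underline{R}_{Opt}$, note that for any choice of $\epsilon\in(0,1)$, $\delta\in(0,1)$, $m\geq 1$, and $h_{\max}\geq 2$, each factor in the expression in \eqref{eqn:ropt-lowerbound-final} is strictly positive: the prefactor $\bigl[1-\bigl((h_{\max}-1)/((1-\epsilon)h_{\max})\bigr)^{2m}\bigr]$ is positive because $(h_{\max}-1)/((1-\epsilon)h_{\max}) < 1$ (for small enough $\epsilon$, which is the regime of interest), the factor $(1-\delta)$ is positive by assumption, and each term in the summation over $i$ is positive since $n_i^2/3 < (1-\epsilon)^2 h_{\max}^2$ for all $i\in\{1,\ldots,h_{\max}-1\}$ (since $n_i\leq h_{\max}/2$). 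Hence $\underline{R}_{Opt}>0$.

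Once positivity is in hand, the conclusion follows by a single division:
\[
\alpha \;=\; \frac{E[N_{SPTiRP}|\X]}{E[R_{Opt}|\X]} \;\leq\; \frac{\overline{N}}{E[R_{Opt}|\X]} \;\leq\; \frac{\overline{N}}{\underline{R}_{Opt}},
\]
which is exactly \eqref{eqn:avg-approx-oneconnect}. There is no substantive obstacle in this final step, as all the analytical work has already been absorbed into Lemmas~\ref{lem:spt-nodecount-expectn-feasible} and~\ref{lem:ropt-lowerbound-final}; the theorem merely records the immediate consequence of combining them.
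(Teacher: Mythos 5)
Your proposal is correct and matches the paper's treatment: the paper states the theorem as an immediate consequence of Lemma~\ref{lem:spt-nodecount-expectn-feasible} (upper bound on $E[N_{SPTiRP}|\X]$) and Lemma~\ref{lem:ropt-lowerbound-final} (lower bound on $E[R_{Opt}|\X]$), combined through the definition of $\alpha$ exactly as you do. Your added positivity check on $\underline{R}_{Opt}$ is a harmless extra precaution that the paper leaves implicit.
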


\section{Node Cut based ILP Formulation for RST-MR-HC Problem}
\label{sec:ilp-oneconnect}
We shall formulate the RST-MR-HC problem as an ILP, using certain node cut inequalities (the approach is similar to the one presented in \cite{nigam}). Such a formulation will be useful when the number of potential locations is prohibitively large so that a complete enumeration of all possible solutions to obtain the optimal solution (for comparison against the solution provided by the SPTiRP algorithm) is impractical; in such cases, we can solve the LP relaxation of the ILP to obtain a lower bound on the optimal solution for comparison with the SPTiRP outcome. 

\noindent
We start with a couple of definitions.

\begin{definition}
Given a source and a sink in a graph, a \emph{\textbf{node cut} for that source-sink pair} is defined as a set of nodes whose deletion disconnects the source from the sink \cite{nigam}. 
\end{definition}

\begin{definition}
A \emph{\textbf{minimal node cut} for a source-sink pair} is a node cut which does not contain any other node cut as its subset \cite{nigam}. 
\end{definition}

Consider the graph $G = (Q\cup R, E)$ (notations same as earlier). We define, $\forall k\in Q\backslash\{0\}$, $\forall j\in V\backslash\{k,0\}$,
\begin{equation*}
y_{j,k}=\left\{
\begin{array}{rl}
1 & \text{if node $j$ is selected to connect source $k$ to the sink}\\
0 & \text{otherwise}
\end{array}\right.
\end{equation*}

Let $\mathcal{P}_k, k\in Q\backslash\{0\}$, denote the set of paths from source $k$ to the sink in the graph $G$. A path $p_k\in \mathcal{P}_k$ from source $k$ to sink is said to be \emph{selected} if $y_{j,k}=1\quad\forall j\in p_k$. A source $k$ is said to be connected to the sink if at least one of the paths in $\mathcal{P}_k$ is selected.
 
\begin{theorem}
The following condition is both \emph{necessary and sufficient} for connectivity of all the sources to the sink:

\begin{equation}
\sum_{j\in \gamma}y_{j,k}\geq 1\quad \forall \gamma\in\Gamma^k;\forall k\in Q\backslash\{0\}
\label{eqn:node-cut-ineq}
\end{equation}
where, $\Gamma^k$ is the set of minimal node cuts for a source node $k$.
\end{theorem}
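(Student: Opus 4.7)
The plan is to prove both directions of the equivalence separately, with the necessity being essentially immediate from the definition of a node cut, and the sufficiency following from a clean contrapositive.

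For \emph{necessity}, I would assume that every source $k\in Q\setminus\{0\}$ is connected to the sink in the subgraph induced by $\{j : y_{j,k}=1\}\cup\{k,0\}$. Then there exists a selected path $p_k\in\mathcal{P}_k$ with $y_{j,k}=1$ for every $j\in p_k$. By the definition of a node cut for the pair $(k,0)$, deleting $\gamma$ disconnects $k$ from $0$, so every $k$--$0$ path in $G$ must contain at least one node of $\gamma$; in particular $p_k\cap\gamma\neq\emptyset$. Any such node $j\in p_k\cap\gamma$ contributes $y_{j,k}=1$ to $\sum_{j\in\gamma} y_{j,k}$, yielding the desired inequality. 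This argument works for arbitrary node cuts, and in particular for minimal ones.

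For \emph{sufficiency}, I would argue by contrapositive: suppose some source $k$ is not connected to the sink in the subgraph induced by the selected nodes $S_k:=\{j\in V\setminus\{k,0\} : y_{j,k}=1\}\cup\{k,0\}$. Let $\overline{S}_k := (V\setminus\{k,0\})\setminus S_k = \{j : y_{j,k}=0\}$. Because removing $\overline{S}_k$ from $G$ leaves exactly the subgraph on $S_k$, in which $k$ and $0$ lie in different components, $\overline{S}_k$ is itself a node cut for the pair $(k,0)$. Every node cut contains a \emph{minimal} node cut as a subset (take any inclusion-minimal cut contained in it; this exists because $V$ is finite), so there is some $\gamma\in\Gamma^k$ with $\gamma\subseteq\overline{S}_k$. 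For this $\gamma$, $y_{j,k}=0$ for every $j\in\gamma$, giving $\sum_{j\in\gamma}y_{j,k}=0<1$, which contradicts \eqref{eqn:node-cut-ineq}.

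The only mildly delicate point is justifying that a node cut always contains a minimal node cut; I would handle this simply by noting that the collection of node cuts for $(k,0)$ contained in a given cut is a non-empty finite family under set inclusion, and any inclusion-minimal element of that family is, by definition, a minimal node cut. Everything else is bookkeeping, so I do not anticipate any real obstacle beyond being careful about which quantifiers are over $k$, over $\gamma$, and over nodes of $\gamma$.
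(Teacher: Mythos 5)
Your proof is correct, and your sufficiency argument is essentially the paper's: the paper also argues by contradiction, building a node cut out of nodes with $y_{j,k}=0$ and then extracting a minimal node cut on which the sum vanishes, violating \eqref{eqn:node-cut-ineq}. The only cosmetic difference is the choice of cut: the paper takes, for each path $p_i\in\mathcal{P}_i$, some unselected node on it and collects these, whereas you take the entire set of unselected nodes $\overline{S}_k$; both are node cuts for the pair $(k,0)$, and your version is arguably cleaner since removing $\overline{S}_k$ leaves exactly the selected subgraph. Where you go beyond the paper is in two respects: the paper does not prove necessity at all (it defers to the cited reference \cite{nigam}, where the inequality is stated as a valid inequality), while you give the short direct argument that any selected path must meet every (minimal) node cut; and you explicitly justify the step that every node cut contains a minimal node cut via finiteness and inclusion-minimality, a point the paper uses without comment. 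So your proposal is a self-contained version of the same argument, with no gaps.
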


\begin{proof}
We shall only prove the sufficiency. The proof of necessity is as given in \cite{nigam}, where they have stated that the above inequality is a \emph{valid} inequality for the relay node placement problem.

We shall prove by contradiction. Suppose, for an assignment of the variables $y_{j,k}, \forall k\in Q\backslash \{0\}, \forall j\in V\backslash \{k,0\}$, the inequality \eqref{eqn:node-cut-ineq} holds, but at least one source, say source $i$, is not connected to the sink. 

Therefore, for the given assignment of the variables $y_{j,i}, \forall j\in V\backslash\{i,0\}$, no path in the set $\mathcal{P}_i$ got selected. Therefore, for each path $p_i\in \mathcal{P}_i$, there exists at least one node $j\in p_i$ such that $y_{j,i}=0$. Thus, the set of all such nodes from all the paths in $\mathcal{P}_i$ form a node cut for the source $i$ and sink. This node cut will contain a minimal node cut for source $i$ and the sink, say, $\gamma^i_{violated}$ for which $\sum_{j\in \gamma^i_{violated}}y_{j,i}=0$. Thus, inequality \eqref{eqn:node-cut-ineq} is violated for the minimal node cut $\gamma^i_{violated}$, which is a contradiction of our earlier proposition. Hence, if inequality \eqref{eqn:node-cut-ineq} holds for an assignment of the variables $y_{j,k}, \forall k\in Q\backslash \{0\}, \forall j\in V\backslash \{k,0\}$, then all the sources must be connected to the sink for that assignment of variables. 
\end{proof}

We now formulate the ILP as follows:

\begin{align}
\min \quad\sum_{j\in R}y_j\label{obj:ilp}\\
\text{Subject to:}\sum_{j\in \gamma}y_{j,k}&\geq 1\quad \forall \gamma\in\Gamma^k;\forall k\in Q\backslash\{0\}\label{constr:conn}\\
y_j &\geq y_{j,k}\quad \forall j\in R;\forall k\in Q\backslash\{0\}\label{constr:nodeselect}\\
\sum_{j\in V\backslash\{k,0\}}y_{j,k}&\leq h_{\max}-1\quad\forall k\in Q\backslash\{0\}\label{constr:hop}\\
y_{j,k}&\in \{0,1\}\quad \forall k\in Q\backslash\{0\};\forall j\in V\backslash\{k,0\}\label{constr:int1}\\
y_j &\in \{0,1\}\quad \forall j\in R\label{constr:int2} 
\end{align}

Constraint \eqref{constr:conn} in the above formulation ensures connectivity from each source to the sink; constraint \eqref{constr:nodeselect} simply says that a relay node gets selected if it is selected for the path of at least one source; constraint \eqref{constr:hop} ensures that a selected path from a source to the sink has no more than $h_{\max}$ hops; constraints \eqref{constr:int1} and \eqref{constr:int2} are the integer constraints on the node selection variables. The objective function \eqref{obj:ilp} simply minimizes the total number of relay nodes selected.

We shall now show that the optimum value of the objective function for the ILP is indeed the same as the optimum solution (i.e., the minimum number of relays) to the original RST-MR-HC problem.

To do that, we introduce the following notations:

\begin{description}
\item $\F = \{\underline{y}=\{\{y_{j,k}\}_{j\in V\backslash\{k,0\},k\in Q\backslash\{0\}},\{y_j\}_{j\in R}\}: \underline{y} \text{ satisfies constraints \eqref{constr:conn}-\eqref{constr:int2}}\}$: set of all feasible solutions to the ILP

\item $\P_k^{'} = \{p_k: \text{$p_k$ consists of $\leq h_{\max}$ hops from source $k$ to sink}\}\subseteq \P_k$: set of all hop count feasible paths from source $k$ to sink

\item $\U_0 = \{\underline{g}\triangleq \{p_k\}_{k=1}^{|Q|-1}: p_k\in \P_k^{'}\}$: all possible combinations of hop count feasible paths from the sources to the sink
\end{description}

Define a set $\F_0$ in a \emph{one-to-one correspondence} to the set $\U_0$ as follows:

For each $\underline{g}=\{p_k\}_{k=1}^{|Q|-1}\in \U_0$, define $\underline{x}(\underline{g})=\{\{x_{j,k}\}_{j\in V\backslash\{k,0\},k\in Q\backslash\{0\}},\{x_j\}_{j\in R}\}\in\F_0$ such that

\begin{equation*}
x_{j,k}=\left\{
\begin{array}{rl}
1 & \text{if $j\in p_k$}\\
0 & \text{otherwise}
\end{array}\right.
\end{equation*}
\begin{equation*}
x_{j}=\left\{
\begin{array}{rl}
1 & \text{if $x_{j,k}=1$ for some $k\in Q\backslash\{0\}$}\\
0 & \text{otherwise}
\end{array}\right.
\end{equation*}

\begin{lemma}
$\F_0\subseteq \F$
\end{lemma}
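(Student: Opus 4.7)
The plan is to take an arbitrary element of $\F_0$ and verify it satisfies every constraint defining $\F$. Specifically, fix $\underline{g} = \{p_k\}_{k=1}^{|Q|-1} \in \U_0$ and let $\underline{x}(\underline{g})$ be the corresponding element of $\F_0$. I then need to check each of the constraints \eqref{constr:conn}--\eqref{constr:int2} in turn.

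The integrality constraints \eqref{constr:int1} and \eqref{constr:int2} are immediate from the definition of $\underline{x}(\underline{g})$, since each $x_{j,k}$ and $x_j$ is explicitly set to an element of $\{0,1\}$. Constraint \eqref{constr:nodeselect} is also a direct consequence of the construction: whenever $x_{j,k} = 1$ for some $k$, the definition forces $x_j = 1$, and if $x_{j,k} = 0$ then $x_j \geq 0 = x_{j,k}$ trivially. The hop constraint \eqref{constr:hop} follows because $p_k \in \P_k'$ contains at most $h_{\max}$ hops, which means it has at most $h_{\max} - 1$ intermediate nodes between source $k$ and the sink $0$; since $x_{j,k} = 1$ exactly for these intermediate nodes, we obtain $\sum_{j \in V \backslash \{k,0\}} x_{j,k} \leq h_{\max} - 1$.

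The only nontrivial verification is the node-cut inequality \eqref{constr:conn}, and even this reduces to a classical property of cuts. For any source $k \in Q \backslash \{0\}$ and any minimal node cut $\gamma \in \Gamma^k$, every path from $k$ to the sink in $G$ must traverse at least one node of $\gamma$ (otherwise, removing $\gamma$ would leave this path intact and $\gamma$ would fail to be a cut). Applying this to $p_k$ itself gives some $j^* \in p_k \cap \gamma$; by construction $x_{j^*, k} = 1$, so $\sum_{j \in \gamma} x_{j,k} \geq 1$. This completes the verification that $\underline{x}(\underline{g}) \in \F$, and since $\underline{g}$ was arbitrary, $\F_0 \subseteq \F$. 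I expect no real obstacle in the argument; the content of the lemma is essentially that the natural encoding of a tuple of hop-feasible paths into the ILP variables produces a feasible ILP solution, and this is a definitional check rather than a proof requiring new ideas.
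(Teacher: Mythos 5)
Your proof is correct and follows exactly the route the paper takes: the paper's proof simply says to verify that any element of $\F_0$ satisfies constraints \eqref{constr:conn}--\eqref{constr:int2}, and you carry out precisely that verification, including the only mildly nontrivial step (that every hop-feasible path $p_k$ must meet each minimal node cut $\gamma\in\Gamma^k$, so $\sum_{j\in\gamma}x_{j,k}\geq 1$). Nothing further is needed.
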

\begin{proof}
Verify that any $\underline{x}\in \F_0$ satisfies constraints \eqref{constr:conn}-\eqref{constr:int2}.
\end{proof}

\begin{corollary}
\label{cor:ilpcor1}
$\min_{\underline{y}\in\F}\sum_{j\in R}y_j\leq \min_{\underline{x}\in\F_0}\sum_{j\in R}x_j$
\end{corollary}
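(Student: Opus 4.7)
The plan is to deduce this directly from the preceding Lemma, since the statement is precisely the monotonicity of minimization with respect to set inclusion. First I would observe that the Lemma establishes $\F_0 \subseteq \F$, i.e., every vector $\underline{x} \in \F_0$ constructed from a tuple of hop-count feasible paths $\underline{g} \in \U_0$ also satisfies all the ILP constraints \eqref{constr:conn}--\eqref{constr:int2}. In particular, the objective function $\sum_{j \in R} y_j$ (which depends only on the $y_j$ coordinates) is well defined on both sets and takes the same values on $\F_0$ when viewed as a subset of $\F$.

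Next I would appeal to the basic fact that if $S_1 \subseteq S_2$ and $f$ is any real-valued function defined on $S_2$, then $\min_{s \in S_2} f(s) \leq \min_{s \in S_1} f(s)$, provided both minima exist. Applying this with $S_1 = \F_0$, $S_2 = \F$, and $f(\underline{y}) = \sum_{j \in R} y_j$ yields the claimed inequality $\min_{\underline{y} \in \F} \sum_{j \in R} y_j \leq \min_{\underline{x} \in \F_0} \sum_{j \in R} x_j$.

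There is essentially no obstacle here; the one minor care one might take is to note existence of both minima: the right-hand minimum exists because $\U_0$ is finite (a finite graph has finitely many simple paths with $\leq h_{\max}$ hops from each source), hence $\F_0$ is finite, and the left-hand minimum exists because $\F$ is a subset of $\{0,1\}^{N}$ for some finite $N$ determined by $|V|$ and $|Q|$. Thus the corollary follows immediately, and no further computation is required.
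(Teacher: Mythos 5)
Your proposal is correct and is essentially the paper's own argument: the corollary is stated as an immediate consequence of the preceding lemma $\F_0\subseteq \F$, since minimizing the objective $\sum_{j\in R}y_j$ over the larger feasible set $\F$ can only give a value no greater than minimizing over the subset $\F_0$. Your added remark on the existence of both minima (finiteness of $\F_0$ and of $\F\subseteq\{0,1\}^N$) is a harmless extra precaution that the paper leaves implicit.
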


Observe that in Corollary~\ref{cor:ilpcor1}, the \emph{L.H.S is the optimum objective function value of the ILP}, whereas the \emph{R.H.S is the optimum solution (i.e., the minimum number of relays) for the RST-MR-HC problem}. Thus, we have proved that the optimum solution to the ILP is a lower bound to the optimum solution to RST-MR-HC problem.

\begin{lemma}
For each $\underline{y}\in\F$, $\exists \:\underline{x}\in\F_0$ such that 
\begin{enumerate}
\item $x_{j,k}\leq y_{j,k}\:\forall j,\forall k$, and hence
\item $\sum_{j\in R}x_j\leq \sum_{j\in R}y_j$
\end{enumerate}
\end{lemma}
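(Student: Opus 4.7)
The plan is to construct the required $\underline{x}\in\F_0$ directly from any given $\underline{y}\in\F$ by extracting, for each source, a single hop-feasible path that uses only nodes already selected by $\underline{y}$ for that source, and then appealing to the one-to-one correspondence between $\U_0$ and $\F_0$.

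Fix an arbitrary $\underline{y}\in\F$ and a source $k\in Q\setminus\{0\}$. I would consider the subgraph $G_k$ of $G$ induced by the vertex set $\{k,0\}\cup\{j\in V\setminus\{k,0\}:y_{j,k}=1\}$. By constraint \eqref{constr:conn} and the previous theorem (which established that the node-cut inequalities are necessary and sufficient for connectivity of every source to the sink), $k$ is connected to $0$ in $G_k$. Hence there exists at least one simple path $p_k$ from $k$ to $0$ in $G_k$.

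Next, I would verify that $p_k\in\P_k'$, i.e., $p_k$ uses at most $h_{\max}$ hops. Since $p_k$ is simple, all its intermediate vertices are distinct and lie in $\{j\in V\setminus\{k,0\}:y_{j,k}=1\}$, so the number of intermediate vertices on $p_k$ is at most $\sum_{j\in V\setminus\{k,0\}}y_{j,k}\le h_{\max}-1$ by constraint \eqref{constr:hop}. A path with at most $h_{\max}-1$ intermediate nodes has at most $h_{\max}$ hops, so $p_k\in\P_k'$. Repeating this for every $k\in Q\setminus\{0\}$ yields $\underline{g}=\{p_k\}_{k=1}^{|Q|-1}\in\U_0$; let $\underline{x}\triangleq\underline{x}(\underline{g})\in\F_0$ be its image under the one-to-one correspondence defined earlier.

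Property (1) follows immediately: if $x_{j,k}=1$ then by construction $j\in p_k$, and every vertex of $p_k$ (other than $k$ and $0$) satisfies $y_{j,k}=1$, so $x_{j,k}\le y_{j,k}$. Property (2) then follows from (1) together with constraint \eqref{constr:nodeselect}: for any $j\in R$, if $x_j=1$ then $x_{j,k}=1$ for some $k$, which gives $y_{j,k}=1$ and hence $y_j\ge y_{j,k}=1\ge x_j$; if $x_j=0$ the inequality $x_j\le y_j$ is trivial. Summing over $j\in R$ gives $\sum_{j\in R}x_j\le\sum_{j\in R}y_j$. The only non-trivial step is the hop-feasibility of $p_k$, and this is essentially automatic from the interplay between constraints \eqref{constr:conn} and \eqref{constr:hop}, so I do not anticipate a real obstacle.
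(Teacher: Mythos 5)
Your proposal is correct and follows essentially the same route as the paper: construct, for each source $k$, a hop-feasible path using only nodes with $y_{j,k}=1$ (invoking constraints \eqref{constr:conn} and \eqref{constr:hop}), pass to the corresponding $\underline{x}\in\F_0$, and deduce $x_j\leq y_j$ via constraint \eqref{constr:nodeselect}. In fact you make explicit the step the paper only sketches, namely that a simple path in the subgraph induced by the selected nodes automatically has at most $h_{\max}$ hops because constraint \eqref{constr:hop} caps the number of selected intermediate vertices at $h_{\max}-1$.
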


\begin{proof}
Given $\underline{y}\in \F$, we can construct paths $p_k\in\P_k^{'},\:k\in Q\backslash\{0\}$ such that $\underline{g}=\{p_k\}_{k=1}^{|Q|-1}\in \U_0$. In doing this, we require constraints \eqref{constr:conn} and \eqref{constr:hop} in the definition of $\F$. Now obtain $\underline{x}\in \F_0$ for this $\underline{g}=\{p_k\}_{k=1}^{|Q|-1}\in \U_0$. Observe that $x_{j,k}\leq y_{j,k}\:\forall j,\forall k$.

Also, since the variables are binary, this implies that $\max_{k\in Q\backslash\{0\}}x_{j,k}\leq \max_{k\in Q\backslash\{0\}}y_{j,k}\:\forall j\in R$, i.e., $x_j\leq y_j\:\forall j\in R$. For otherwise, suppose $\max_{k\in Q\backslash\{0\}}x_{j,k} > \max_{k\in Q\backslash\{0\}}y_{j,k}$ for some $j\in R$. Then that would imply, $\max_{k\in Q\backslash\{0\}}x_{j,k}=1$ and $\max_{k\in Q\backslash\{0\}}y_{j,k}=0$, i.e., for that $j\in R$, $\exists \:k\in Q\backslash\{0\}$ such that $x_{j,k}=1$ and $y_{j,k}=0$. But this contradicts the fact that $x_{j,k}\leq y_{j,k}\:\forall j,\forall k$. Hence the conclusion.

Therefore, it follows that $\sum_{j\in R}x_j\leq \sum_{j\in R}y_j$ 
\end{proof} 

\begin{corollary}
\label{cor:ilpcor2}
\begin{equation*}
\min_{\underline{y}\in\F}\sum_{j\in R}y_j\geq \min_{\underline{x}\in\F_0}\sum_{j\in R}x_j
\end{equation*}
\end{corollary}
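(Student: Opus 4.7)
The plan is to derive this corollary as a direct consequence of the preceding Lemma, which supplies, for every feasible ILP solution $\underline{y}\in\F$, a corresponding $\underline{x}\in\F_0$ with $\sum_{j\in R}x_j\le \sum_{j\in R}y_j$. Coupled with the fact that $\F_0\subseteq \F$ (Lemma on the ``$\F_0$-to-$\F$'' inclusion, used for Corollary~\ref{cor:ilpcor1}), this immediately pins the ILP optimum equal to the RST-MR-HC optimum.

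First I would fix an optimizer $\underline{y}^\star \in \arg\min_{\underline{y}\in\F}\sum_{j\in R}y_j$ of the ILP; existence is guaranteed whenever the problem is feasible, and if $\F=\emptyset$ the right-hand side is trivially $+\infty$ and the inequality holds vacuously. Next I would invoke the preceding Lemma to extract $\underline{x}^\star\in\F_0$ with $\sum_{j\in R}x_j^\star \le \sum_{j\in R}y_j^\star$. Since $\underline{x}^\star$ is a feasible point of the inner minimization on the right-hand side,
\begin{equation*}
\min_{\underline{x}\in\F_0}\sum_{j\in R}x_j \;\le\; \sum_{j\in R}x_j^\star \;\le\; \sum_{j\in R}y_j^\star \;=\; \min_{\underline{y}\in\F}\sum_{j\in R}y_j,
\end{equation*}
which is precisely the required inequality.

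There is essentially no obstacle: the whole content of the argument has been absorbed into the Lemma, whose construction shows how to read off hop-count-feasible paths $\{p_k\}$ from any $\underline{y}\in\F$ (using constraints \eqref{constr:conn} and \eqref{constr:hop}) and then convert them into an element of $\F_0$ that uses no more relays. The only thing worth pointing out explicitly after the chain of inequalities is the consequence when combined with Corollary~\ref{cor:ilpcor1}: together they give
\begin{equation*}
\min_{\underline{y}\in\F}\sum_{j\in R}y_j \;=\; \min_{\underline{x}\in\F_0}\sum_{j\in R}x_j,
\end{equation*}
so that the ILP \eqref{obj:ilp}--\eqref{constr:int2} is an \emph{exact} formulation of the RST-MR-HC problem, and its LP relaxation provides a valid lower bound on the minimum number of relays---which is the property needed for benchmarking SPTiRP on large instances.
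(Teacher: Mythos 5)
Your proposal is correct and follows essentially the same route as the paper: take an ILP optimizer $\underline{y}^\star\in\F$, apply the preceding Lemma to obtain $\underline{x}^\star\in\F_0$ with $\sum_{j\in R}x^\star_j\leq\sum_{j\in R}y^\star_j$, and conclude since the right-hand minimum is at most $\sum_{j\in R}x^\star_j$. The added remarks on vacuous feasibility and on combining with Corollary~\ref{cor:ilpcor1} are fine but do not change the argument.
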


\begin{proof}
Suppose $\underline{y}_{opt}=\arg\min_{\underline{y}\in\F}\sum_{j\in R}y_j$. Then, by the above lemma, $\exists \underline{x}^{'}\in\F_0$ such that $\sum_{j\in R}x^{'}_j\leq \sum_{j\in R}y_{opt,j}$. But clearly, $\min_{\underline{x}\in\F_0}\sum_{j\in R}x_j\leq \sum_{j\in R}x^{'}_j$. Hence the proof.
\end{proof}

\begin{theorem}
\label{thm:ilp-main}
\begin{equation*}
\min_{\underline{y}\in\F}\sum_{j\in R}y_j\:=\:\min_{\underline{x}\in\F_0}\sum_{j\in R}x_j
\end{equation*}
\end{theorem}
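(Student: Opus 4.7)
The theorem asserts the equality of the optimal value of the ILP and the optimal number of relays for the RST-MR-HC problem (encoded via path choices in $\F_0$). My plan is to observe that this equality is immediate from the two corollaries that precede the statement: Corollary~\ref{cor:ilpcor1} gives the inequality $\min_{\underline{y}\in\F}\sum_{j\in R}y_j \leq \min_{\underline{x}\in\F_0}\sum_{j\in R}x_j$ (by virtue of $\F_0 \subseteq \F$, so the minimum over the larger set $\F$ can only be smaller), and Corollary~\ref{cor:ilpcor2} gives the reverse inequality $\min_{\underline{y}\in\F}\sum_{j\in R}y_j \geq \min_{\underline{x}\in\F_0}\sum_{j\in R}x_j$ (obtained from the lemma asserting that every $\underline{y}\in\F$ can be dominated, coordinatewise on $\{y_j\}_{j\in R}$, by some $\underline{x}\in\F_0$).

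Therefore the proof is essentially a two-line argument: invoke both corollaries and apply antisymmetry of $\leq$ on $\mathbb{R}$ to conclude equality. There is no technical obstacle, since all the substantive content has already been established — in particular, the path-extraction step embedded in the lemma preceding Corollary~\ref{cor:ilpcor2}, which uses the node-cut connectivity constraint \eqref{constr:conn} together with the hop bound \eqref{constr:hop} to extract, from any ILP-feasible $\underline{y}$, a concrete tuple of hop-feasible paths $\underline{g}\in\U_0$ whose induced indicator vector $\underline{x}(\underline{g})$ uses no more relays than $\underline{y}$.

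The main conceptual point the proof should emphasise (in a brief remark, since the calculation is trivial) is the semantic identification: $\min_{\underline{x}\in\F_0}\sum_{j\in R}x_j$ equals the true minimum number of relays in any hop-constrained connecting subgraph, by the one-to-one correspondence between $\F_0$ and $\U_0$; hence the equality in Theorem~\ref{thm:ilp-main} states exactly that the ILP is an exact formulation (not just a relaxation or a bound) of the RST-MR-HC problem. I do not anticipate any obstacle — the proof is a direct combination of the two corollaries and will occupy at most a couple of lines.
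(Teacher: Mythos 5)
Your proposal matches the paper's proof exactly: Theorem~\ref{thm:ilp-main} is obtained simply by combining Corollary~\ref{cor:ilpcor1} and Corollary~\ref{cor:ilpcor2}, which supply the two opposite inequalities, and all the substantive work indeed lies in the preceding lemmas. Your added remark on the semantic identification of $\min_{\underline{x}\in\F_0}\sum_{j\in R}x_j$ with the optimal relay count of RST-MR-HC is consistent with how the paper interprets the result.
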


\begin{proof}
The proof follows by combining Corollaries~\ref{cor:ilpcor1} and \ref{cor:ilpcor2}.
\end{proof}

Theorem~\ref{thm:ilp-main} states that the optimum value of the objective function for the ILP is indeed the same as the optimum solution (i.e., the minimum number of relays) to the original RST-MR-HC problem.

To solve the LP relaxation of this ILP to obtain a lower bound on the optimal solution, we use the algorithm presented in \cite{nigam} (with the Master problem being the ILP represented by Equations~\eqref{obj:ilp}-\eqref{constr:int2}), which uses as a sub-program (to find the node cut constraints iteratively), an algorithm presented by Garg et al.\ \cite{garg} in the context of node weighted multiway cuts.

\section{SPTiRP: Numerical Results}
\label{sec:results}
We performed four sets of experiments to test the SPTiRP algorithm. In all these experiments, the relays and the sources are placed randomly. The first two sets of experiments were performed with a large number of relays, in a setting that conforms to the conditions mentioned in Theorem~\ref{thm:average-case-analysis-setting}, and hence a feasible solution is guaranteed with a high probability. However, due to the large number of relays only a lower bound to the optimal value can be obtained. The third set of experiments were performed with a small number of relays, so that feasibility cannot be assured, but the optimal value can be obtained in every feasible instance. Finally, the fourth set of experiments were performed with a different random graph model (compared to the first three), namely, the Erdos-Renyi random graph model to test the performance of our algorithm on non-geometric input graphs. 

In experiment sets 1 and 2, we need a large number of potential relay locations to ensure the high probability of feasibility. As we had mentioned in Section~\ref{sec:ilp-oneconnect}, for such large problem instances, an exhaustive enumeration of all possible solutions to obtain the optimal solution is impractical. Hence, for these problem instances, we solved the LP relaxations of the corresponding ILPs to obtain \emph{lower bounds} on the optimum relay count. 

In experiment sets 3 and 4, however, the number of potential relay locations, and hence, the problem size was moderate; so we obtained the \emph{exact} optimum relay count for each instance by an exhaustive enumeration technique, starting with the solution provided by the SPTiRP algorithm. The details are provided below.

\subsection{Experiment Set 1}
We generated 100 random networks as follows: we chose $r_{\max}=$ 60 meters, and $h_{\max}=$ 4 for this set of experiments. We also chose $\epsilon = \delta = 0.1$ (see Theorem~\ref{thm:average-case-analysis-setting}). For the chosen parameter values and for an area of $216m \times 216m$, the required number of potential relay locations was found to be $n(\epsilon,\delta,h_{\max},r_{\max})\geq 1908$. Hence, 1908 potential relay locations were selected uniformly randomly over a $216m \times 216m$ area. This ensures that any point within a distance $(1-\epsilon)h_{\max}r_{\max}$ from the BS is at most $h_{\max}$ hops away from the BS with a high probability ($\geq (1-\delta)=0.9$). 10 source nodes were deployed uniformly randomly over the quarter circle of radius $(1-\epsilon)h_{\max}r_{\max}=216m$; hence we have a feasible solution with a high probability ($\geq 0.9$). 

The SPTiRP algorithm was run on the 100 scenarios thus generated; none of the 100 scenarios tested turned out to be infeasible. For each scenario, a lower bound on the optimum relay count was obtained by solving the LP relaxation of the corresponding ILP formulation as described in Section~\ref{sec:ilp-oneconnect}.

The results are summarized in Table~\ref{tbl:efficiency1}.

\begin{table}[ht]
  \centering
\caption{Test Set 1: Performance of the SPTiRP algorithm Compared to Lower Bound on Optimal Solution}
\label{tbl:efficiency1}
\footnotesize
  \begin{tabular}{|c|c|c|c|c|c|}\hline
    Potential & Scenarios & Optimal Design & Off by one & Max off \\
   Relay && matched with & from & from \\
   count && lower bound & lower bound & lower bound\\
    \hline
   1908 & 100 & 23 & 21 & 10\\
   
  \hline
\end{tabular}
\normalsize
\end{table} 

\if 0
\begin{figure}[ht]
\begin{center}
\includegraphics[scale=0.4]{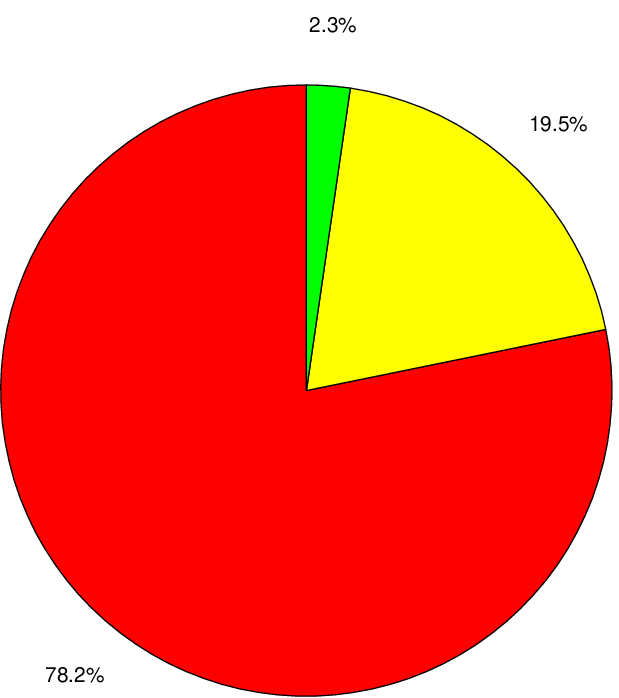}
\end{center}
\caption{ Efficiency of the Algorithm as suggested by Test Results; Red: SPTiRP gives optimal solution; Yellow: SPTiRP is off from optimum by one relay; Green: SPTiRP off from optimum by 2 or more relays}
\label{fig:efficiency}
\end{figure}

The efficiency of the algorithm can be easily visualized from the pie chart in Figure~\ref{fig:efficiency}.
\fi
\noindent

\textbf{Observations}
\begin{enumerate}
\item In 44\% of the tested scenarios, the algorithm ends up giving optimal or near-optimal (exceeding optimum just by one relay) solutions. However, note that the comparison was only against a lower bound on the optimal solution, which can potentially be loose depending on the problem scenario, and we suspect the actual performance of the algorithm to be much better (indeed, as we shall see in Experiment Set 3 by comparing against the actual optimal solution, the algorithm performed close to optimal in most of the tested scenarios).   
\item In the remaining cases, where it is off by more than one relay, the maximum difference from the lower bound was found to be 10 relays.
\item We computed the empirical worst case approximation factor from the experiments as follows: for each scenario, we computed the approximation factor given by the SPTiRP algorithm w.r.t the lower bound obtained from the LP relaxation as \emph{approximation factor} $= \frac{Relay_{Algo}}{Relay_{lowerbound}}$. The maximum of these over all the tested scenarios (in the current set of experiments) was taken to be the (empirical) worst case approximation factor. 
\item We also computed the theoretical bound on the average approximation ratio for the given setting and parameter values using Equation~\ref{eqn:avg-approx-oneconnect}, and compared it against the empirical average case approximation ratio obtained from the experiments as 
\begin{equation}
\label{eqn:empirical-avg-approx}
\text{Empirical average case approx. ratio}\:=\:\frac{\text{Average relaycount of SPTiRP over 100 scenarios}}{\text{Average lower bound from LP relaxation over 100 scenarios}}
\end{equation}
The results are summarized in Table~\ref{tbl:approx-ratio1}. 
\end{enumerate}

\begin{table}[ht]
  \centering
\caption{Test Set 1: Approximation ratio for the SPTiRP algorithm}
\label{tbl:approx-ratio1}
\footnotesize
  \begin{tabular}{|c|c|c|c|c|c|}\hline
    Potential & Scenarios & \multicolumn{2}{c|}{Worst case} & \multicolumn{2}{c|}{Average case} \\
   Relay && \multicolumn{2}{c|}{approximation ratio} & \multicolumn{2}{c|}{approximation ratio}\\
   count && Theoretical & Experimental & Theoretical bound (Eqn.~\eqref{eqn:avg-approx-oneconnect})& Experimental(Eqn.~\eqref{eqn:empirical-avg-approx})\\
    \hline
   1908 & 100 & 30 & 5 & 14 & 1.66\\ 
  \hline
\end{tabular}
\normalsize
\end{table} 

In Table~\ref{tbl:exectime_one_1}, we have compared the execution time of the SPTiRP algorithm against the time required to compute a lower bound on the optimal solution by solving the LP relaxation. Both the algorithms were run in MATLAB 7.11 on the Sankhya cluster of the ECE Department, IISc, using a single compute node (linux based) with 16 GB main memory, and a single processor with 4 cores, i.e., 4 CPUs. As can be seen from the table, while the SPTiRP algorithm computes a very good (often optimal) solution in at most a few seconds, computing even the lower bound on the optimal solution (i.e., solving the LP relaxation instead of the actual ILP) can be actually quite time consuming, running into several hours (upto about 12 hours in the worst case). 

\begin{table*}[ht]
  \centering
\caption{Test Set 1: Computation time of the SPTiRP algorithm compared to Optimal solution (lower bound) computation}
\label{tbl:exectime_one_1}
\footnotesize
  \begin{tabular}{|c|c|c|c|c|c|}\hline
Potential & Scenarios & Mean execution time & Mean Execution time & Max execution time & Max execution time \\
   Relay && of SPTiRP & of obtaining  & of SPTiRP & of obtaining \\
          &&           & a lower bound on optimal solution &          & a lower bound on optimal solution\\
   Count &&  in sec & in sec & in sec & in sec\\
    \hline
   1908 & 100 & 6.6621 & 7002 & 18.4438 & 41716\\
  \hline
\end{tabular}
\normalsize
\end{table*}

\subsection{Experiment Set 2}
\label{subsec:sptirp-expt2}
The setting for this set of experiments is very similar to that in Experiment Set 1, except that now the sources were also deployed over the same square area as the potential relay locations, instead of a quarter circle. 

We generated 100 random networks as follows: we chose $r_{\max}=$ 60 meters, and $h_{\max}=$ 4 for this set of experiments. We also chose $\epsilon = \delta = 0.1$. For the chosen parameter values and for an area of $150m \times 150m$, the required number of potential relay locations was found to be $n(\epsilon,\delta,h_{\max},r_{\max})\geq 920$. Hence, 920 potential relay locations were selected uniformly randomly over a $150m \times 150m$ area. This ensures that any point within a distance $(1-\epsilon)h_{\max}r_{\max}$ from the BS is at most $h_{\max}$ hops away from the BS with a high probability ($\geq (1-\delta)=0.9$). 10 source nodes were deployed uniformly randomly over the $150m \times 150m$ area. Note that all the sources are within a radius $(1-\epsilon)h_{\max}r_{\max}\:=\:216$ meters from the BS, since the diagonal of the deployment area is less than 216 meters; hence we have a feasible solution with a high probability ($\geq 0.9$). 

The SPTiRP algorithm was run on the 100 scenarios thus generated; none of the 100 scenarios tested turned out to be infeasible. For each scenario, a lower bound on the optimum relay count was obtained by solving the LP relaxation of the corresponding ILP formulation as described in Section~\ref{sec:ilp-oneconnect}.

The results are summarized in Table~\ref{tbl:efficiency2}.

\begin{table}[ht]
  \centering
\caption{Test Set 2: Efficiency of the SPTiRP algorithm in obtaining the optimal design}
\label{tbl:efficiency2}
\footnotesize
  \begin{tabular}{|c|c|c|c|c|c|}\hline
    Potential & Scenarios & Optimal Design & Off by one & Max off \\
   Relay && w.r.t & from & from \\
   count && lower bound & lower bound & lower bound\\
    \hline
   920 & 100 & 82 & 15 & 2\\
   
  \hline
\end{tabular}
\normalsize
\end{table} 

\if 0
\begin{figure}[ht]
\begin{center}
\includegraphics[scale=0.4]{figures/algo_result_piechart_hop.eps}
\end{center}
\caption{ Efficiency of the Algorithm as suggested by Test Results; Red: SPTiRP gives optimal solution; Yellow: SPTiRP is off from optimum by one relay; Green: SPTiRP off from optimum by 2 or more relays}
\label{fig:efficiency}
\end{figure}

The efficiency of the algorithm can be easily visualized from the pie chart in Figure~\ref{fig:efficiency}.
\fi
\noindent

\textbf{Observations}
\begin{enumerate}
\item In over 97\% of the tested scenarios, the algorithm ends up giving optimal or near-optimal (exceeding optimum just by one relay) solutions.
\item In the remaining cases, where it is off by more than one relay, the maximum difference was found to be 2 relays.
\item We computed the empirical worst case approximation ratio in the same manner as was done in Experiment Set 1.
\item We also computed the theoretical bound on the average approximation ratio for the given setting and parameter values using Equation~\ref{eqn:avg-approx-oneconnect}, and compared it against the empirical average case approximation ratio obtained from the experiments as was done in Experiment Set 1. 

The results are summarized in Table~\ref{tbl:approx-ratio2}. 
\end{enumerate}

\begin{table}[ht]
  \centering
\caption{Test Set 2: Approximation ratio for the SPTiRP algorithm}
\label{tbl:approx-ratio2}
\footnotesize
  \begin{tabular}{|c|c|c|c|c|c|}\hline
    Potential & Scenarios & \multicolumn{2}{c|}{Worst case} & \multicolumn{2}{c|}{Average case} \\
   Relay && \multicolumn{2}{c|}{approximation ratio} & \multicolumn{2}{c|}{approximation ratio}\\
   count && Theoretical & Experimental & Theoretical bound (Eqn.~\eqref{eqn:avg-approx-oneconnect}) & Experimental (Eqn.~\eqref{eqn:empirical-avg-approx})\\
    \hline
   920 & 100 & 30 & 2 & 14 & 1.13\\ 
  \hline
\end{tabular}
\normalsize
\end{table} 

In Table~\ref{tbl:exectime_one_2}, we have compared the execution time of the SPTiRP algorithm against the time required to compute a lower bound on the optimal solution by solving the LP relaxation. Both the algorithms were run in MATLAB 7.11 on the Sankhya cluster of the ECE Department, IISc, using a single compute node (linux based) with 16 GB main memory, and a single processor with 4 cores, i.e., 4 CPUs. As can be seen from the table, while the SPTiRP algorithm computes a very good (often optimal) solution in at most a few seconds, computing even the lower bound on the optimal solution (i.e., solving the LP relaxation instead of the actual ILP) was quite time consuming, running well beyond an hour. 

\begin{table*}[ht]
  \centering
\caption{Test Set 2: Computation time of the SPTiRP algorithm compared to Optimal solution (lower bound) computation}
\label{tbl:exectime_one_2}
\footnotesize
  \begin{tabular}{|c|c|c|c|c|c|}\hline
    Potential & Scenarios & Mean execution time & Mean execution time & Max execution time & Max execution time \\
   Relay && of SPTiRP & of obtaining & of SPTiRP & of obtaining \\
          &&           & a lower bound on optimal solution &   & a lower bound on optimal solution\\
   Count &&  in sec & in sec & in sec & in sec\\
    \hline
   920 & 100 & 2.4222 & 2489.2 & 5.5684 & 5902.4\\
  \hline
\end{tabular}
\normalsize
\end{table*}

\subsection{Experiment Set 3}
\label{subsec:sptirp-expt3}
In this set of experiments we deployed a smaller number of relays randomly. Due to the small number of relays, the probabilistic analysis of feasibility is not useful. We generated 1000 random networks as follows: A $150m \times 150m$ area is partitioned into square cells of side 10$m$. Consider the lattice created by the corner points of the cells. 10 source nodes are placed at random over these lattice points. Then the potential relay locations are obtained by selecting $n$ points uniformly randomly over the $150m \times 150m$; $n$ was varied from 100 to 140 in steps of 10, and for each value of $n$, we generated 200 random network scenarios (thus yielding 1000 test cases). We chose $r_{\max}=$ 60 meters, and $h_{\max}=$ 6 for the experiments. 

Given the outcome of the SPTiRP algorithm, an optimal solution can be obtained as follows: Suppose the SPTiRP uses $n$ relays. Then perform an exhaustive search over all possible combinations of $(n-1)$ and fewer relays to check if the performance constraints can still be met. 

In none of the 1000 scenarios tested, the hop constraint turned out to be infeasible.The results are summarized in Table~\ref{tbl:efficiency}.

\begin{table}[ht]
  \centering
\caption{Test Set 3: Efficiency of the SPTiRP algorithm in obtaining the optimal design}
\label{tbl:efficiency}
\footnotesize
  \begin{tabular}{|c|c|c|c|c|c|}\hline
    Potential & Scenarios & Optimal Design & Off by one & Max off \\
   Relay &&&& from \\
   count &&&& optimal\\
    \hline
   100 & 200 & 154 & 42 & 3\\
   110 & 200 & 154 & 40 & 2\\
   120 & 200 & 158 & 39 & 2\\
   130 & 200 & 155 & 36 & 2\\
   140 & 200 & 161 & 38 & 2\\
   \hline
   Total & 1000 & 782 & 195 & 3\\
  \hline
\end{tabular}
\normalsize
\end{table} 

\begin{figure}[ht]
\begin{center}
\includegraphics[scale=0.4]{figures/algo_result_piechart_hop.eps}
\end{center}
\caption{ Efficiency of the Algorithm as suggested by Test Results; Red: SPTiRP gives optimal solution; Yellow: SPTiRP is off from optimum by one relay; Green: SPTiRP off from optimum by 2 or more relays}
\label{fig:efficiency}
\end{figure}

The efficiency of the algorithm can be easily visualized from the pie chart in Figure~\ref{fig:efficiency}.

\noindent

\textbf{Observations}
\begin{enumerate}
\item As in the case of test set 1, even for test set 2, in over 97\% of the tested scenarios, the algorithm ends up giving optimal or near-optimal (exceeding optimum just by one relay) solutions.
\item In the remaining cases, where it is off by more than one relay, the maximum difference was found to be 3 relays.
\end{enumerate}

In Table~\ref{tbl:exectime_one}, we have compared the execution time of the SPTiRP algorithm against the time required to compute an optimal solution, given the outcome of the SPTiRP algorithm. Both the SPTiRP algorithm, and the postprocessing on its outcome were run in MATLAB 7.0.1 on a Windows Vista (basic) based PC (Dell Inspiron 1525) having Intel Core 2 Duo T5800 CPU with processor speed of 2 GHz, and 3 GB RAM. Again, while the SPTiRP algorithm computed a very good (often optimal) solution in at most a second or two (averaging less than a second), computing the optimal solution even after being provided with a very good upper bound on the required number of relays by SPTiRP, turned out to be quite time consuming, running into several minutes. 

\begin{table*}[ht]
  \centering
\caption{Test Set 3: Computation time of the SPTiRP algorithm compared to Optimal solution computation}
\label{tbl:exectime_one}
\footnotesize
  \begin{tabular}{|c|c|c|c|c|c|}\hline
    Potential & Scenarios & Mean execution time & Mean execution time & Max execution time & Max execution time \\
   Relay && of SPTiRP & of directly obtaining & of SPTiRP & of directly obtaining \\
          &&          & an optimal solution  &          & an optimal solution\\
   Count &&  in sec & in sec & in sec & in sec\\
    \hline
   100 & 200 & 0.58812 & 661.485 & 1.638 & 1828.7\\
   110 & 200 & 0.70544 & 240.85 & 2.081 & 722.29\\
   120 & 200 & 0.81154 & 423.89 & 1.591 & 944.74\\
   130 & 200 & 0.99343 & 951.495 & 2.606 & 2674.9\\
   140 & 200 & 1.1438 & 140.7 & 2.808 & 355.46\\
   \hline
   Overall & 1000 & 0.84847 & 483.684 & 2.808 & 2674.9\\
  \hline
\end{tabular}
\normalsize
\end{table*}

Also, we note from Table~\ref{tbl:exectime_one} that, as the node density increases, the computation time of the SPTiRP algorithm also increases.

\subsection{Experiment Set 4}
\label{subsec:sptirp-expt4}
This set of experiments were performed using the Erdos-Renyi random graph model to generate the input graphs. We generated 500 random networks as follows: A $150m \times 150m$ area is partitioned into square cells of side 10$m$. Consider the lattice created by the corner points of the cells. 10 source nodes are placed at random over these lattice points. Then the potential relay locations are obtained by selecting $n$ points uniformly randomly over the $150m \times 150m$; $n$ was varied from 100 to 140 in steps of 10, and for each value of $n$, we generated 100 random network scenarios (thus yielding 500 test cases). For each instance, the edges in the input graph were selected iid with probability 0.5, i.e., for each possible (unordered) node pair $(i,j)$, the edge $(i,j)$ was chosen to be a feasible edge with probability 0.5. We chose $h_{\max}=$ 6 for the experiments. Note that apart from the method used for creating the feasible edges, the setting is similar to that in Experiment Set 3. 

Given the outcome of the SPTiRP algorithm, an optimal solution can be obtained in the same manner as in Experiment Set 3. 

In none of the 500 scenarios tested, the hop constraint turned out to be infeasible. Our observations are summarized below.

\gap
\noindent
\textbf{Observations:}
\begin{enumerate}
\item In each of the 500 test cases, the SPTiRP algorithm returned an optimal solution.
\item In 491 cases, no relay was required. In the remaining cases, only one relay was required. 
\end{enumerate}

In Table~\ref{tbl:exectime_one}, we have presented the execution time of the SPTiRP algorithm. Note that since the outcome of the algorithm in all the test cases were zero or one relay, computing the optimum solution given the outcome of the algorithm was trivial. The SPTiRP algorithm was run in MATLAB R2011b on a Windows Vista (basic) based PC (Dell Inspiron 1525) having Intel Core 2 Duo T5800 CPU with processor speed of 2 GHz, and 3 GB RAM. 
\begin{table*}[ht]
  \centering
\caption{Test Set 4: Computation time of the SPTiRP algorithm}
\label{tbl:exectime_one}
\footnotesize
  \begin{tabular}{|c|c|c|c|c|c|}\hline
    Potential & Scenarios & Mean execution time & Max execution time\\
   Relay && of SPTiRP & of SPTiRP \\
   Count &&  in sec & in sec\\
    \hline
   100 & 100 & 0.1199 & 0.4412\\
   110 & 100 & 0.1144 & 0.4403\\
   120 & 100 & 0.1279 & 0.3767\\
   130 & 100 & 0.1204 & 0.2681\\
   140 & 100 & 0.1408 & 0.3675\\
   \hline
   Overall & 500 & 0.1247 & 0.4412\\
  \hline
\end{tabular}
\normalsize
\end{table*}

\if 0
\begin{figure}[t]
\begin{center}
\includegraphics[scale=0.4]{../figures/scenario6nodes.eps}
\vspace{4mm}
\includegraphics[scale=0.4]{../figures/path_scenario6.eps}
\end{center}
\caption{RST-MR-DC problem on a random scenario where our algorithm gives optimal design; Top panel: Source node locations and potential relay locations. Bottom panel: The relay placements and paths obtained by the algorithm; circles indicate relay locations}
\label{fig:scenario6DC} 
\end{figure}

\begin{figure}[t]
\begin{center}
\includegraphics[scale=0.4]{../figures/scenario8nodes.eps}
\vspace{4mm}
\includegraphics[scale=0.4]{../figures/path_scenario8_algo.eps}
\vspace{4mm}
\includegraphics[scale=0.4]{../figures/path_scenario8_opt.eps}
\end{center}
\caption{RST-MR-DC problem on a random scenario where algorithm gives near optimal design; Top panel: Source node locations and potential relay locations. Middle panel: The relay placements and paths obtained by the algorithm; circles indicate relay locations. Bottom panel: An Optimal design}
\label{fig:scenario8DC} 
\end{figure}

\begin{figure}[t]
\begin{center}
\includegraphics[scale=0.4]{../figures/worstcasenodes.eps}
\vspace{4mm}
\includegraphics[scale=0.4]{../figures/path_worstcase.eps}
\vspace{4mm}
\includegraphics[scale=0.4]{../figures/path_worstcase_optim.eps}
\end{center}
\caption{RST-MR-DC problem on a random scenario where algorithm is off by more than one (2) from optimal design; Top panel: Source node locations and potential relay locations. Middle panel: The relay placements and paths obtained by the algorithm; circles indicate relay locations (using 4 relays). Bottom Panel: An Optimal design (using 2 relays)}
\label{fig:algoworstcase} 
\end{figure}

In Figure~\ref{fig:algoworstcase}, we show the node placements and the paths obtained by the algorithm for the RST-MR-DC problem on a random scenario where the algorithm is off from the optimal by 2. 

Finally, in Figure~\ref{fig:scenario6HC}, we present the relay placements and paths obtained by the algorithm for the RST-MR-HC problem on a random scenario where the algorithm achieves optimal design.
 
\begin{figure}[ht]
\begin{center}
\includegraphics[scale=0.4]{../figures/scenario6nodes.eps}
\vspace{4mm}
\includegraphics[scale=0.4]{../figures/path_scenario6_6hop.eps}
\end{center}
\caption{RST-MR-HC problem on a random scenario where algorithm achieves optimal design; Top panel: Source node locations and potential relay locations. Bottom panel: The relay placements and paths obtained by the algorithm; circles indicate relay locations}
\label{fig:scenario6HC} 
\end{figure}

\fi

\section{SPTiRP: Simulation Results}
\label{sec:simulation}
To test the QoS under positive traffic arrival rates, of the network topologies obtained using SPTiRP algorithm, we performed extensive simulations using Qualnet v4.5 \cite{qualnet}. For these simulations, we assumed the PHY and MAC layers to be as specified in the IEEE~802.15.4 standard\cite{IEEE}. 

We generated 20 network topologies as follows: in each case, 10 source nodes, and 120 potential relay locations were randomly selected in a $150m \times 150m$ area in exactly the same way as described in Section~\ref{subsec:sptirp-expt3}. As before, the BS was assumed to be at the corner $(0,0)$. We chose the \textbf{maximum communication range, $r_{\max}=30$ meters}, which, for a \textbf{transmit power of 0 dBm}, and a \textbf{PHY layer packet size of 131 bytes}, corresponds to a \textbf{PER of $\leq 1\%$}, assuming the path loss model given in the standard \cite{IEEE,IEEE152}, a \textbf{fade margin of 20 dB}, and \textbf{receiver sensitivity of $-98.8$ dBm}. The \emph{hop constraint was chosen as $h_{\max}=9$}, which, for a PER of 1\%, corresponds to an \emph{end-to-end delivery probability of 91.35\% (under the lone packet model)}, and an \emph{end-to-end mean delay of 56.16 msec, also under the lone packet model}, assuming the CSMA/CA backoff parameters given in the standard, and a PHY layer packet size of 131 bytes (see \cite{abhijitmeth} for details of how this mean end-to-end delay can be computed). Having chosen $r_{\max}$, we had a graph on the sources, and the potential relay locations. We used the SPTiRP algorithm on this network graph with the above mentioned hop constraint, to obtain a tree topology connecting the sources to the BS using a small number of relays, and satisfying the hop constraint.  

Qualnet simulation was performed on each of the 20 network topologies thus generated, for \textbf{six different traffic arrival rates}, namely, $\lambda =$0.1, 0.2, 0.3, 0.4, 0.5, and 2 packets/sec from each source. The \emph{arrival process was assumed to be Poisson}. The simulation procedure is described below:

\begin{enumerate}
\item We used the following \emph{interference model} in Qualnet: any two nodes that are within \emph{Carrier Sense} (CS) range of each other can hear each other's transmission. If two nodes are within the CS range of a receiver node, then their transmissions interfere with each other at the receiver node. The CS range, $r_{cs}$, was set equal to $r_{\max}$ for the simulations (see above).
\item We used the \emph{collision model} in Qualnet to account for packet losses due to interference. If two or more packet transmissions interfere with one another at a receiver node, then all of those packets are lost. 
\item For each topology, and each arrival rate, \emph{the simulation was repeated for 25 iterations, with each iteration being run for 1500 seconds of simulated time.}.
\item For each topology, and each arrival rate, we recorded the end-to-end delivery probability (we shall use the shorthand $p_{del}$ for this from now on, with slight abuse of notation) from each source to the sink, averaged over the 25 iterations, and the mean end-to-end packet delay from each source to the sink, also averaged over 25 iterations. 

The results are summarized in Table~\ref{tbl:simulation_summary}. To keep the table concise, we have adopted the following strategy: for each arrival rate and each topology, we have computed $p_{del}$ \emph{averaged over the 10 sources}, and reported only the minimum average $p_{del}$ for each rate, the minimum being taken over the 20 scenarios. This constitutes column 3 of Table~\ref{tbl:simulation_summary}. A similar strategy has been adopted for reporting the end-to-end delay (column 6 of the table). We have also reported the minimum $p_{del}$ and the maximum delay \emph{encountered over all sources and all the 20 scenarios} for each rate (columns 2 and 5 respectively), and the maximum $p_{del}$ and the minimum delay \emph{encountered over all sources and scenarios} for each rate (columns 4 and 7 respectively). 
\end{enumerate} 

\begin{table*}[ht]
  \begin{center}
\caption{Summary of Qualnet simulation results for 20 network topologies}
\label{tbl:simulation_summary}
\footnotesize
  \begin{tabular}{|c|c|c|c|c|c|c|}\hline
    Arrival & Minimum & Minimum & Maximum & Maximum & Maximum & Minimum\\
   rate & $p_{del}$ & average $p_{del}$ & $p_{del}$ & delay & average delay & delay\\
   in pkts/sec &         &         &         & in sec & in sec & in sec\\
    \hline
   0.1 & 0.874 & 0.905 & 0.982 & 0.0509 & 0.0401 & 0.0113\\
   0.2 & 0.860 & 0.893 & 0.980 & 0.0510 & 0.0401 & 0.0113\\
   0.3 & 0.846 & 0.879 & 0.981 & 0.0511 & 0.0402 & 0.0113\\
   0.4 & 0.826 & 0.865 & 0.980 & 0.0513 & 0.0403 & 0.0114\\
   0.5 & 0.802 & 0.848 & 0.978 & 0.0514 & 0.0404 & 0.0114\\
   2.0 & 0.557 & 0.667 & 0.967 & 0.0535 & 0.0417 & 0.0117\\
   \hline
\end{tabular}

\end{center}
\normalsize
\end{table*}

\gap
\noindent
\textbf{Observations:}
\begin{enumerate}
\item From Table~\ref{tbl:simulation_summary}, we observe that the mean end-to-end delay \emph{never exceeded} the lone-packet target end-to-end delay of 56.16 msec. 
\item For low arrival rates, the minimum $p_{del}$ violated the lone-packet target $p_{del}$ only by a small margin. For each rate, we quantify this margin of violation as follows: 
\begin{equation*}
\text{Percentage violation in $p_{del}$} = \frac{\text{Lone-packet target $p_{del}$} - \text{Minimum $p_{del}$ under current arrival rate}}{\text{Lone-packet target $p_{del}$}}\times 100
\end{equation*}
These results are summarized in Table~\ref{tbl:pdel-vilation}.

\begin{table*}[ht]
  \begin{center}
\caption{Proximity of positive traffic QoS to lone-packet target QoS}
\label{tbl:pdel-vilation}
\footnotesize
  \begin{tabular}{|c|c|}\hline
    Arrival & Maximum percent violation in $p_{del}$\\
   rate & w.r.t lone-packet target\\
   in pkts/sec &  (over the 20 scenarios tested)\\
    \hline
   0.1 &	4.3721\\
   0.2 & 5.8784\\
   0.3 & 7.3715\\
   0.4 &	9.5948\\
   0.5 & 12.1749\\
   2.0 & 39.0019\\
   \hline
\end{tabular}
\end{center}
\normalsize
\end{table*}

\item From Table~\ref{tbl:pdel-vilation}, we note that the tested network topologies, although designed for the lone-packet model, \emph{can handle light positive traffic arrival rate (upto 0.4 packets/sec) from each source, without exceeding the lone-packet QoS by more than 10\%}. 
\end{enumerate}

\section{RSN$k$-MR-HC: Complexity of Obtaining a Feasible Solution}
\label{sec:formulation_kconnect}

We have already seen that the RSN$k$-MR-HC problem is NP-Hard. In this Section, we shall show that even the problem of obtaining a \emph{feasible} solution (as opposed to an optimal solution) to the RSN$k$-MR-HC problem is NP-Complete. Hence, we cannot hope for a polynomial time \emph{approximation algorithm} for the general RSN$k$-MR-HC problem, and have to resort to developing good heuristics instead. 

\begin{lemma}
\label{l2}
Given a graph $G= (V, E)$, specified vertices $s$ and $t$, positive integers $k \geq 2$, and $H \leq |V|$. 
The problem to determine if $G$ contains $k$ or more mutually vertex disjoint paths from $s$ to $t$, none involving more than $H$ edges is NP-Complete for all fixed $H\geq 5$.
\end{lemma}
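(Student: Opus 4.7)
The plan is to first verify membership in NP by observing that a certificate consisting of $k$ paths can be checked in polynomial time for mutual vertex-disjointness and for the hop bound on each path. For NP-hardness, I would reduce from 3-SAT (an alternative would be the NP-complete $2$-vertex-disjoint paths problem in directed graphs, but 3-SAT gives cleaner control over the diameter). Given a 3-CNF formula $\phi$ on variables $x_1, \ldots, x_n$ and clauses $C_1, \ldots, C_m$, I would construct a graph $G_\phi$ with distinguished vertices $s$, $t$, together with integers $k$ (polynomial in $n+m$) and $H = 5$, such that $G_\phi$ admits $k$ vertex-disjoint $s$-$t$ paths of length at most $H$ if and only if $\phi$ is satisfiable.

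The construction would use two kinds of gadgets of constant diameter. A \emph{variable gadget} for $x_i$ has two internally vertex-disjoint short branches between a common entry and exit, representing the assignments $x_i = \mathrm{true}$ and $x_i = \mathrm{false}$. A \emph{clause gadget} for $C_j$ has three internally vertex-disjoint short branches between its entry and exit, one per literal of $C_j$. These are wired together so that each of the three literal-branches of $C_j$'s clause gadget shares an interior vertex with the corresponding literal-side of the variable gadget involved. Finally, $s$ and $t$ are attached to the gadget network so that any $s$-$t$ path must traverse one branch of a single variable gadget or one branch of a single clause gadget, and the target count $k$ is taken to be $n + m$ (one path per variable gadget plus one per clause gadget).

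With this setup, any collection of $k$ vertex-disjoint $s$-$t$ paths with at most $H$ edges must (i) pick exactly one branch of each variable gadget, which yields a truth assignment, and (ii) route a path through one literal branch of each clause gadget whose corresponding variable-branch has not already been blocked by a variable-path; by the way the gadgets share vertices, the latter is possible precisely when the assignment satisfies every clause. The converse direction is then immediate: a satisfying assignment directly gives $k$ such disjoint paths.

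The main obstacle is the engineering of the gadgets so that every $s$-$t$ path has length exactly (or at most) $5$: the hop bound must be tight enough to forbid ``cheating detours'' that could bypass the interlock between variable and clause gadgets, yet loose enough for the legitimate $k$ paths to coexist. I would spend most of the work verifying this ``no shortcut'' property for the intended $H = 5$ construction. Once the result is established at $H = 5$, extension to every fixed $H \geq 5$ is routine: one pads the construction by inserting a chain of $H - 5$ new degree-$2$ vertices in series along each $s$-$t$ path (equivalently, subdividing the edge incident on $t$), uniformly increasing all path lengths by the same amount without affecting vertex-disjointness or the correspondence with satisfying assignments.
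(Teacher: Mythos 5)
Your overall plan is aimed at the right target: the paper itself does not give a construction at all, but simply cites Itai, Perl and Shiloach, whose proof is indeed a transformation from 3-Satisfiability, so your choice of source problem matches the known argument. Membership in NP and the padding step for general fixed $H \geq 5$ (subdividing each edge entering $t$ into a chain of $H-5$ extra degree-$2$ vertices, which adds exactly $H-5$ to every $s$--$t$ path while preserving internal vertex-disjointness) are both fine.

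However, as written the proposal has a genuine gap, and it is exactly the part you yourself defer: the concrete variable and clause gadgets, and the verification that with $H=5$ (i) the $k=n+m$ intended paths coexist within the hop bound, and (ii) no ``cheating'' path can slip through the shared vertices and decouple the variable choice from the clause check. This is not a routine detail that can be left as an obstacle to be handled later --- it is the entire technical content of the theorem. The constant $5$ is delicate: for length bounds up to $4$ the disjoint bounded-length paths problem is polynomial-time solvable, so any correct reduction must exploit structure that only becomes available at length $5$; a generic ``constant-diameter gadget'' scheme cannot be presumed to fit, and a careless wiring of $s$, the entries/exits, the shared interior vertices, and $t$ either forces paths of length $6$ or more or admits shortcuts that destroy the equivalence. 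A second, smaller point that needs to be pinned down is the truth-assignment correspondence: since a variable path \emph{blocks} the branch it uses, the extracted assignment must set a literal true precisely when its branch is \emph{not} used by the variable path (and conversely, a satisfying assignment must route each variable path through the branch of a false literal); your phrasing leaves this convention implicit, and without it the ``if and only if'' does not follow. In summary, the strategy is the standard one and could be completed, but the proposal stops short of a proof; to make it self-contained you must either exhibit and verify the explicit $H=5$ gadgets or, as the paper does, invoke the Itai--Perl--Shiloach result directly.
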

\begin{proof}
See [Itai, Perl, and Shiloach, 1977]. The proof there is via transformation from 3 Satisfiability.
\end{proof}

\begin{corollary}
\label{cor1}
Given an edge weighted graph $G= (V, E)$, with $V= Q\cup R$, the problem of finding a subgraph with $k\geq 2$ vertex disjoint paths from each source node to sink such that each path has hop count $\leq h_{\max}$ (RSN$k$-HC) is NP-Complete. 
\end{corollary}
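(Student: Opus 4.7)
The plan is to establish NP-completeness in the two standard steps: show membership in NP, and then reduce from the bounded-length vertex-disjoint paths problem cited in Lemma~\ref{l2}.

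For NP-membership, I would take as the certificate a candidate subgraph $H'$ together with, for each source $q\in Q$, an explicit listing of $k$ paths from $q$ to the sink inside $H'$. A verifier can check in polynomial time that every listed path lies in $H'$, has at most $h_{\max}$ edges, and that for each source the $k$ paths are pairwise vertex-disjoint; together these certify that $H'$ meets all the constraints of the RSN$k$-HC problem.

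For NP-hardness I would use a trivial restriction reduction. Given an instance $(G, s, t, k, H)$ of the problem in Lemma~\ref{l2} with fixed $H\geq 5$, construct an RSN$k$-HC instance by letting the required-vertex set be $Q=\{s,t\}$ (with $t$ playing the role of the BS), letting the Steiner/relay set be $R=V\setminus\{s,t\}$, keeping the edge set $E$ unchanged, and setting $h_{\max}=H$. A feasible subgraph for this RSN$k$-HC instance exists if and only if $G$ already contains $k$ vertex-disjoint $s$-to-$t$ paths of length at most $H$: in the ``yes'' direction, $G$ itself (restricted to such a bundle of paths) is a feasible subgraph; in the ``no'' direction, any feasible subgraph is a subgraph of $G$ and by definition exhibits the desired paths from the lone source $s$ to $t$. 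The reduction is clearly polynomial, so Lemma~\ref{l2} gives NP-hardness.

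The only subtlety worth flagging is that the corollary concerns the \emph{feasibility} question (RSN$k$-HC) rather than the minimum-relay optimization version (RSN$k$-MR-HC), so the relay-count objective plays no role in the reduction; beyond noting this, I do not expect a real obstacle, as both directions of the correspondence are immediate once the single-source restriction is identified.
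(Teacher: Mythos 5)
Your proof is correct and follows essentially the same route as the paper: the paper's argument is precisely the restriction to a single source--sink pair with $h_{\max}=H\geq 5$, invoking Lemma~\ref{l2} (Itai, Perl, and Shiloach), which you have simply spelled out explicitly together with the routine NP-membership check.
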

\begin{proof}
From Lemma~\ref{l2}, it follows that this problem is NP-Complete for any $h_{max}\geq 5$. Hence, the general problem is NP-Complete.
\end{proof}

\begin{corollary}
\label{thm1}
Unless $P=NP$, there does not exist any polynomial time complexity algorithm for providing a feasible solution to the RSN$k$-MR-HC problem.
\end{corollary}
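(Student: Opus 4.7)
The plan is a one-line reduction from the decision problem RSN$k$-HC, shown NP-Complete in Corollary~\ref{cor1}, to the search problem of producing a feasible solution to RSN$k$-MR-HC. The key observation is that the two questions are essentially the same: both ask whether the input graph contains \emph{any} subgraph providing $k$ node-disjoint paths of at most $h_{\max}$ hops from each source to the BS. The RSN$k$-MR-HC problem merely superimposes on this feasibility question the additional objective of minimizing the relay count, but producing \emph{any} feasible solution already answers the underlying existence question.

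Concretely, I would suppose for contradiction that some algorithm $\mathcal{A}$ runs in polynomial time and, on input an instance of RSN$k$-MR-HC, returns a feasible subgraph whenever one exists (and otherwise reports infeasibility). Given an arbitrary instance $\langle G=(V,E),\,k,\,h_{\max}\rangle$ of the RSN$k$-HC decision problem, I would feed it unchanged to $\mathcal{A}$, since it is already a well-formed RSN$k$-MR-HC instance with the same graph and hop bound, and answer YES exactly when $\mathcal{A}$ returns a subgraph. Since $\mathcal{A}$ runs in polynomial time, this yields a polynomial-time decision procedure for RSN$k$-HC, contradicting Corollary~\ref{cor1} unless $P=NP$.

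The only subtlety, and it is a minor one, lies in interpreting the phrase ``providing a feasible solution''. If $\mathcal{A}$ is not required to explicitly certify infeasibility, one can instead simply verify its output in polynomial time: checking that a returned subgraph has $k$ node-disjoint source-to-sink paths is polynomial (via a standard max-flow routine on the vertex-split graph), and checking the hop bound on each of them is polynomial (e.g., by BFS). In either interpretation the reduction goes through, so there is no substantive obstacle; the corollary follows immediately from Corollary~\ref{cor1}.
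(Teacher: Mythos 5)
Your proposal is correct and matches the paper's intent: the paper states this corollary without a separate proof, treating it as an immediate consequence of Corollary~\ref{cor1} (NP-Completeness of the feasibility question RSN$k$-HC), which is exactly the reduction you spell out. Your added remark about verifying the returned subgraph in polynomial time is a harmless elaboration of the same argument, not a different route.
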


\begin{corollary}
\label{thm2}
Unless $P=NP$, no polynomial time complexity algorithm can provide finite approximation guarantee for the general RSN$k$-MR-HC problem.
\end{corollary}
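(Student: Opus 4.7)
The plan is a standard reduction from the feasibility decision problem to the approximation problem, leveraging Corollary~\ref{cor1} which states that deciding feasibility of RSN$k$-MR-HC is NP-Complete. The key observation is that any finite approximation guarantee presupposes the algorithm returns \emph{some} solution with finite cost whenever a feasible solution exists, which in turn lets us decide feasibility.

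Concretely, I would argue by contradiction. Assume there exists a polynomial-time algorithm $\mathcal{A}$ that, on every instance $I$ of RSN$k$-MR-HC, returns a solution whose relay count is at most $\rho(|I|) \cdot \mathrm{OPT}(I)$ for some finite (possibly input-dependent, but computable) function $\rho(\cdot)$, whenever $I$ is feasible. I would then construct a polynomial-time decision procedure for the RSN$k$-HC feasibility problem as follows: given an arbitrary instance $I'$ of RSN$k$-HC (i.e., graph $G = (V, E)$ with $V = Q \cup R$, hop bound $h_{\max}$, and connectivity requirement $k$), view $I'$ as an instance of RSN$k$-MR-HC and run $\mathcal{A}$ on $I'$. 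If $I'$ is feasible, then $\mathrm{OPT}(I')$ is a finite non-negative integer (at most $|R|$), so $\mathcal{A}$ must output a feasible subgraph using at most $\rho(|I'|)\cdot \mathrm{OPT}(I') < \infty$ relays. If $I'$ is infeasible, no feasible subgraph exists, so $\mathcal{A}$ cannot return one. Thus, simply checking whether $\mathcal{A}$'s output is a feasible subgraph (a polynomial-time verification: count node-disjoint paths and their hop lengths from each source to the BS) decides feasibility in polynomial time.

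This contradicts Corollary~\ref{cor1} (RSN$k$-HC is NP-Complete) unless $P = NP$, completing the argument. The main subtlety — and the only step requiring care — is to make sure the approximation guarantee is \emph{instance-independent enough} that feasibility can be read off from $\mathcal{A}$'s behavior. As long as $\rho(|I|)$ is finite for every input size (which is what ``finite approximation guarantee'' means), the argument goes through, since we do not need to know $\rho$ explicitly; we only need that $\mathcal{A}$ produces a feasible output whenever $\mathrm{OPT}$ is finite. A minor technical point is to handle how $\mathcal{A}$ advertises failure on infeasible instances (e.g., by returning $\infty$, a flag, or a non-feasible subgraph); in all cases, the feasibility-verification post-check suffices. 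No other step requires substantial work — the heavy lifting (NP-Completeness of the feasibility problem) is already done in Lemma~\ref{l2} and Corollary~\ref{cor1}.
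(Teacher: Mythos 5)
Your reduction has a concrete gap: the post-check you rely on --- ``checking whether $\mathcal{A}$'s output is a feasible subgraph (a polynomial-time verification: count node-disjoint paths and their hop lengths from each source to the BS)'' --- is not a polynomial-time step. Deciding whether a given graph contains $k$ mutually node-disjoint paths of at most $h_{\max}$ hops from a source to the sink is precisely the problem shown NP-Complete in Lemma~\ref{l2} (and Corollary~\ref{cor1}), and applying it to $\mathcal{A}$'s output subgraph is no easier than the original feasibility question, so as written your decision procedure is circular. The gap is fixable --- e.g.\ by stipulating that a solution must be returned together with explicit witness paths (as E-SPTiRP's output indeed is), or that the algorithm must correctly flag infeasibility --- but once you adopt that strong convention (``the algorithm returns a verifiable feasible solution on every feasible instance''), Corollary~\ref{thm2} follows immediately from Corollary~\ref{thm1}, and your argument is essentially a re-derivation of that preceding corollary rather than a proof of something new.

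The paper's own proof takes a genuinely different route, aimed at the weaker notion of guarantee it actually uses later: it considers instances on which the optimal solution uses \emph{zero} relays, i.e.\ the source nodes alone already provide $k$ node-disjoint hop-constrained paths to the BS. By Corollary~\ref{cor1}, no polynomial-time algorithm is guaranteed to recognize such instances, so a polynomial-time algorithm may terminate with a feasible solution using a positive number of relays while the optimum is zero, making the ratio $\frac{Relay_{Algo}}{Relay_{Opt}}$ unbounded. This argument survives even if the algorithm is allowed to declare ``possibly infeasible'' on some feasible instances --- the setting in which Theorem~\ref{thm:approxfactor} later gives a positive guarantee only for instances with a nonzero one-connectivity optimum and only ``whenever the algorithm terminates with a feasible solution'' --- whereas your reduction collapses under that convention, since a declared failure then no longer certifies infeasibility. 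So your approach both differs from the paper's (feasibility-deciding reduction versus the $R_{Opt}=0$ ratio blow-up) and, as stated, leans on a verification step that is itself NP-Complete.
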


\begin{proof}
Suppose, in a certain instance of the RSN$k$-MR-HC problem, the source nodes alone are sufficient to meet the design requirements, i.e., there exist node disjoint hop constrained paths from each source to the sink, involving only other source nodes, and no additional relay nodes. But, from Corollary~\ref{cor1}, it follows that no polynomial time algorithm is guaranteed to predict the existence of such a solution even when there exists one. 

Therefore (unlike the SPTiRP algorithm which uses zero relays whenever the optimal solution uses zero relays), in this case, one might end up using a non zero number of relay nodes despite the fact that the optimal solution uses zero relays. Hence, for the general RSN$k$-MR-HC problem, a polynomial time algorithm cannot provide finite approximation guarantee.  
\end{proof}

\section{E-SPTiRP: A Polynomial Time Heuristic for RSN$k$-MR-HC}
\label{sec:algorithms}

We propose Extended SPTiRP, a polynomial time heuristic for the RSN$k$-MR-HC problem, which builds on the SPTiRP algorithm for one connectivity, described in Section~\ref{sec:sptalgo}. Before we discuss the algorithm, we describe below two limitations that are common to any \emph{polynomial time} algorithm for the RSN$k$-MR-HC problem.

Recall from the proof of Corollary~\ref{thm2} that no polynomial time algorithm for the RSN$k$-MR-HC problem is guaranteed to predict the existence of a solution involving only the source nodes whenever such a solution exists.

Also, from Corollary~\ref{thm1}, it follows that unless $P=NP$, no polynomial time algorithm for the RSN$k$-MR-HC problem is guaranteed to find a feasible solution whenever there exists one. Therefore, if a polynomial time heuristic for the RSN$k$-MR-HC problem fails to find a feasible solution, we shall say that the problem is \emph{possibly infeasible}. 

However, note that it is possible to determine, in polynomial time, if the corresponding RST-MR-HC problem (i.e., the problem of obtaining a one-connected, hop constrained subgraph) is feasible (simply by computing the shortest path tree, and checking if the paths from the sources to the sink therein meet the hop count bound). Hence, \emph{if we cannot find even one path with desired hop count bound from some of the sources to the sink, then the problem is actually infeasible}.

\subsection{Algorithm E-SPTiRP}

Since the algorithm consists of many steps, we organize the presentation of the algorithm as follows: we shall first present the key steps of the algorithm in Section~\ref{subsec:ideamain}. Then, each key step will be explained in more detail along with pseudo code in the subsequent subsections.

\subsubsection{Main Idea/Key Steps in the Algorithm}
\label{subsec:ideamain}

Given $G= (V, E)$, where $V=Q\cup R$, connectivity requirement $k$, and hop constraint $h_{\max}$

\begin{enumerate}
\item \textbf{Phase 1: Checking for $k$-connectivity on $Q$ alone}

Check for $k$ connectivity with hop constraint on $Q$ alone
\begin{itemize}
\item If the answer to this step is positive, done
\item Else go to the next step
\end{itemize}
See Section~\ref{subsubsec:phase1}, Steps 1-6 of the pseudo code for details of this phase.
\item \textbf{Phase 2: Obtaining node-disjoint paths, with a small relay count, from each source to the sink}
We come to this phase if Phase 1 fails to find a network on $Q$ alone satisfying the design objectives. Our objective in this phase is to obtain $k$ node disjoint, hop constrained paths from each source to the sink, using as few additional relays as possible. To that end, we proceed as follows.
\begin{enumerate}
\item \textbf{Obtaining a one-connected, hop constrained network: }Run the SPTiRP algorithm on the entire graph of sources and potential relay locations, to obtain a one connected hop constrained network with a small number of relays. If this step completes successfully, we get a hop constrained path from each source to the sink.
\begin{itemize}
\item If the SPTiRP algorithm returns failure, we can declare the problem to be \emph{infeasible}, and stop, as we could not even obtain a one connected network satisfying the hop constraint.
\end{itemize}
See Section~\ref{subsubsec:phase2}, Steps 1-2 of the pseudo code for more details.
\item \textbf{Obtaining alternate node disjoint, hop constrained routes: }Next, \emph{for each source}, we aim to obtain an alternate node disjoint hop count feasible path to the sink.
\begin{itemize} 
\item If we fail to find an alternate hop count feasible node disjoint route for some of the sources, we declare the problem to be \emph{possibly infeasible}, and stop.
\end{itemize} 
Details of this step are provided in Section~\ref{subsubsec:phase2}, Steps 3-11 of the pseudo code.
\item \textbf{Relay pruning: }If we can find an alternate hop count feasible node disjoint route from a source to the sink, we start with that feasible solution, and aim to prune relays, while retaining hop count feasibility, in order to obtain a better solution in terms of relay count. 

\begin{itemize}
\item we pursue a relay pruning strategy wherein, we aim to prioritize the reuse of relays used by the solution so far, and minimize the use of relays that are
unused as yet. See Section~\ref{subsubsec:phase2}, Steps 12-16 of the pseudo code for detailed procedure.
\end{itemize}

\item This procedure is repeated until all the sources have $k$ node disjoint hop constrained paths to the sink, or the problem has been declared (possibly) infeasible.

\end{enumerate}

\end{enumerate}

Next, we explain each of the above steps in more details.

\subsubsection{\textbf{Phase 1: Checking for $k$-connectivity on $Q$ alone}}
\label{subsubsec:phase1}
In this phase, we shall check if the design objectives ($k$ connectivity with hop constraint) can be met using only the source nodes, and no additional relays. In other words, we aim at finding $k$ node disjoint hop constrained paths from each source to the sink, \emph{using only other source nodes}. 

\gap
\noindent
\fbox{
\begin{minipage}[t]{3in}
\textbf{Input:} $G_Q=(Q,E_Q)$, $h_{\max}$, $k$\\
\comment $E_Q$ is the set of all edges of length $\leq r_{\max}$ on $Q$\\
\textbf{Output:} $T$ (the desired network)\\
\textbf{flags:} boolean $F_{inf}$ (if $F_{inf}= 1$, no feasible solution found)\\
\textbf{Initialize:} $T=\varnothing$, $F_{inf} = 0$

\gap
\noindent
\textbf{Outer loop: } \emph{for} each source $S_i$, $1 \leq i \leq |Q|-1$ (\comment the following steps will be repeated for each source)

\gap
\step {1}: $\overline{Q}=\{S_i,0\}$; $\overline{R}=Q \backslash \overline{Q}$
\end{minipage}}

\gap
\noindent
\remark For each source, we treat all the other sources as relays (the set $\overline{R}$), and try to obtain $k$ node disjoint hop constrained paths from the source to the sink, using the nodes in $\overline{R}$. 

\gap
\noindent
\fbox{
\begin{minipage}[t]{3in}
\step {2}: $l=1$ ($l$ is the loop variable for the inner loop, described next)

\noindent
\textbf{Inner loop:} \emph{while} $l\leq k$ (\comment the following steps (Steps 3 to 6) will be repeated until we have $k$ node disjoint paths from source $i$ to sink) 

\gap
\step {3}: $path_{\max}(S_i,0) \leftarrow SPTiRP(\overline{Q}\cup \overline{R}, E_Q)$
\end{minipage}}

\gap
\noindent
\remark Treating the remaining sources as relays, we run the SPTiRP algorithm to obtain the $l^{th}$ node disjoint path ($path_{\max}(S_i,0)$) from source $i$ to sink; the reason for using the SPTiRP algorithm is to use as few nodes as possible from the set $\overline{R}$, so that there are enough nodes left to construct the $(l+1)^{th}$ node disjoint path in the next iteration of the inner loop.

\gap
\noindent
\fbox{
\begin{minipage}[t]{3in}
\step {4}: \emph{if} hopcount($path_{\max}(S_i,0)$) $> h_{\max}$

\hspace{9mm} $\{F_{inf}\leftarrow 1$;

\hspace{10mm}\emph{exit Phase 1}\}

\gap 
\hspace{9mm}\emph{else} go to next Step
\end{minipage}}

\gap
\noindent
\remark Note that if the hop constraint cannot be met in the first iteration (of the inner loop) itself (i.e., for $l=1$), it implies that that the shortest path from source $i$ to sink using only the other source nodes does not satisfy the hop constraint. Then, we can conclude for sure that the design objectives cannot be met using only the source nodes, and we can proceed to Phase 2 of the algorithm. 

However, if the hop constraint is met in the first iteration, and cannot be satisfied in some subsequent iteration (i.e., for some $l>1$), we cannot conclude for sure that $Q$ alone was not sufficient to meet the design requirements (recall Corollary~\ref{cor1}, and our discussion at the beginning of Section~\ref{sec:algorithms}). All we can say at this point is that Phase 1 of our algorithm failed to find a feasible solution on $Q$ alone, and therefore, we shall proceed to the next phase of the algorithm, assuming that the problem on $Q$ alone is \emph{possibly} infeasible.

\gap
\noindent
\fbox{
\begin{minipage}[t]{3in}
\step {5}: $T\leftarrow T\cup path_{\max}(S_i,0)$

\comment We augment the network with the current feasible path. 

\gap
\step {6}: $R_{used}\leftarrow \overline{R}\cap path_{\max}(S_i,0)$

\hspace{10mm}$ \overline{R}\leftarrow \overline{R}\backslash R_{used}$

\hspace{10mm}$l\leftarrow l+1$

\comment We identify the nodes in $\overline{R}$ used in the current path from source $i$ to sink, and remove them from $\overline{R}$ before proceeding to the next iteration of the inner loop. 
\end{minipage}}

\gap
\noindent
\remark The above step is necessary since in each iteration, we need to identify \emph{node disjoint} paths from the source to the sink.   

At the end of Phase 1, we either have a network $T$, consisting only of the source nodes, and meeting the design requirements (in which case, we are done), or we find that the problem on $Q$ alone is \emph{possibly} infeasible ($F_{inf} = 1$), in which case, we proceed to Phase 2 of the algorithm.

\subsubsection{\textbf{Phase 2: Obtaining node-disjoint paths from each source to the sink with a small relay count}}
\label{subsubsec:phase2}
We come to this phase if phase 1 fails to find a network on $Q$ alone satisfying the design objectives. Our objective in this phase is to obtain $k$ node disjoint hop constrained paths from each source to the sink, using as few additional relays as possible. To that end, we proceed as explained in Section~\ref{subsec:ideamain}.

We present below, the detailed pseudo code for this phase, along with necessary remarks, and explanations. 

\gap 
\noindent
\fbox{
\begin{minipage}[t]{3in}
\textbf{Input:} $G=(V, E)$, $h_{\max}$, $k$\\
\comment $V=Q\cup R$, and $E$ is the set of all edges of length $\leq r_{\max}$ on $Q\cup R$.\\
\textbf{Output:} $T$ (the desired network)\\
\textbf{flag:} boolean $F_{inf}$ (if $F_{inf}=1$, problem is (possibly) infeasible)\\
\textbf{Initialize:} $T=\varnothing$, $F_{inf}=0$

\gap
\noindent
\step {1}: $(T,F_{inf})\leftarrow SPTiRP(G)$
\end{minipage}}

\gap
\noindent
\remark We run the SPTiRP algorithm on $G$ to obtain a one connected hop constrained network with as few relays as possible.

\gap
\noindent
\fbox{
\begin{minipage}[t]{3in}
\step {2}: \emph{if} $F_{inf}=1$

\hspace{10mm} exit Phase 2

\hspace{7mm}\emph{else} go to the next step
\end{minipage}}

\gap
\noindent
\remark If the SPTiRP algorithm fails to meet the hop constraint for some of the sources, we declare the problem to be infeasible, and stop. Otherwise, we proceed \emph{to find alternate node disjoint hop constrained paths} from each of the sources to the sink, as below.

\gap
\noindent
\fbox{
\begin{minipage}[t]{3in}
\step {3}: $Q\leftarrow sort(Q)$ \emph{in decreasing order of Euclidean distance from the sink}
\end{minipage}}

\gap
\noindent
\remark We arrange the sources in decreasing order of their Euclidean distances from the sink; we shall start the alternate route determination procedure with the farthest source and proceed in that order. The logic behind this approach is as follows:
\begin{itemize}
\item Farther sources are likely to consume more relays.
\item So meet their need first.
\item As we consider sources closer to the sink, coax these sources to share the already used relays.
\end{itemize} 

\gap
\noindent
\fbox{
\begin{minipage}[t]{3in}
\step {4}: $n = 2$ (\comment $n$ is the loop variable for the outer loop to be defined next; $n$ keeps track of the number of node disjoint paths discovered, including current iteration)

\gap
\textbf{Outer loop:} \emph{while} $n\leq k$ (\comment the following steps will be repeated until all the sources have $k$ node disjoint hop constrained paths to the sink)

\gap
\noindent
\step {5}: $L^{(r)}\leftarrow R \cap T$

\hspace{5mm}$\overline{R}\leftarrow R\backslash L^{(r)}$
\end{minipage}}

\gap
\noindent
\remark In the $n^{th}$ iteration of the outer loop, we shall try to obtain the $n^{th}$ node disjoint, hop constrained path from each source to the sink, using as few relays as possible. 

We define a \emph{Locked set} $L^{(r)}$ as the set of relays used so far in the course of the network design algorithm (and hence, are part of the final desired network). For example, at the start of the $n^{(th)}$ iteration ($n=2, \ldots, k$), $L^{(r)}$ consists of the relays used in the first $(n-1)$ node disjoint paths from each of the sources to the sink. 

We also define a \emph{free} relay set, $\overline{R}$, as the set of relays \emph{not used} so far in the network design. 

Therefore, in our attempt to minimize the number of additional relays used, we shall try to reuse relays from the Locked set $L^{(r)}$ whenever possible, and try to minimize the use of relays from the free relay set. With this in mind,we proceed to obtain the alternate node disjoint paths from each source to the sink as below.

\gap
\noindent
\fbox{
\begin{minipage}[t]{3in}
\step {6}: 

\gap
\textbf{Inner loop:} \emph{for} each source $S_i\in Q$, $1\leq i \leq |Q|-1$ (\comment The following steps will be repeated for each source, starting with the source farthest from the sink)

\gap
\noindent
\step {7}: $V^{used}_i\leftarrow V \cap \{\cup_{l=1}^{n-1} path_{\max}(S_i,0)\}$
\end{minipage}}

\gap
\noindent
\remark We designate by $path_{\max}(S_i,0)$, the $l^{th}$ node disjoint hop constrained path from source $S_i$ to sink ($1\leq l\leq k$). In Step 7, we identify the set of nodes (designated by $V^{used}_i$) used by the first $n-1$ node disjoint paths from source $S_i$ to sink. Since we want to find another \emph{node disjoint} path from source $S_i$ to sink in the current ($n^{th}$) iteration (of the outer loop), we need to remove the set of vertices, $V^{used}_i$, except $S_i$ and 0, from consideration for the $n^{th}$ path before proceeding further in the current iteration. We do that in the next step.

Note that at the end of Steps 7 to 11, we shall either have a feasible solution for the $n^{th}$ node disjoint path from source $S_i$ to sink, or we shall end up with \emph{possible} infeasibility. 

\gap
\noindent
\fbox{
\begin{minipage}[t]{3in}
\step {8}: $V_i^{(r)} \leftarrow \{V\backslash V_i^{used}\} \cup \{S_i,0\}$
\end{minipage}}

\gap
\remark $V_i^{(r)}$ is the set of vertices not used in the first $n-1$ paths from source $S_i$ to the sink, and therefore, eligible to be part of the $n^{th}$ node disjoint path from source $S_i$ to sink.

\gap
\noindent
\fbox{
\begin{minipage}[t]{3in}
\step {9}: $G_i^n \leftarrow$ \emph{restriction of} $G$ to $V_i^{(r)}$   
\end{minipage}}

\gap
\remark In order to obtain the $n^{th}$ node disjoint path from source $S_i$ to sink, we restrict the graph $G$ to the \emph{eligible} node set $V_i^{(r)}$.

\gap
\noindent
\fbox{
\begin{minipage}[t]{3in}
\step {10}: $path_n^{sh}(S_i,0) \leftarrow SPT(G_i^n)$

\gap
\noindent
\step {11}: \emph{if} hopcount($path_n^{sh}(S_i,0)$) $> h_{\max}$

\hspace{9mm} $\{F_{inf}\leftarrow 1$;

\hspace{10mm}\emph{exit Phase 2}\}

\gap 
\hspace{9mm}\emph{else} go to the next Step
\end{minipage}}

\gap
\noindent
\remark \begin{enumerate}
\item If $path_n^{sh}(S_i,0)$, the shortest path from source $S_i$ to sink in $G_i^n$, does not meet the hop constraint, we declare the problem to be \emph{possibly} infeasible, and stop, as we have failed to obtain a feasible solution for the $n^{th}$ node disjoint path ($1<n \leq k$) from source $S_i$ to sink. If hop constraint is satisfied by $path_n^{sh}(S_i,0)$, we have a feasible solution for the $n^{th}$ node disjoint path from source $S_i$ to sink, and we proceed to the next step to prune relays from this feasible solution in order to achieve a better solution in terms of relay count. 
\end{enumerate}

\gap
\noindent
\fbox{
\begin{minipage}[t]{3in}
\step {12}: $L_i^{(r)}\leftarrow L^{(r)}\cap V_i^{(r)}$

\hspace{5mm} $Q_i^{(r)}\leftarrow Q\cap V_i^{(r)}$
\end{minipage}}

\gap
\noindent
\remark We designate by $L_i^{(r)}$, the members of the \emph{locked} relay set that are eligible to be part of the $n^{th}$ node disjoint path from source $S_i$ to sink. Similarly, $Q_i^{(r)}$ denotes the set of source nodes that are part of the eligible node set $V_i^{(r)}$. Note that, apart from the sets $L_i^{(r)}$ and $Q_i^{(r)}$, the only other component of the eligible node set $V_i^{(r)}$ is the \emph{free} relay set, $\overline{R}$, i.e., $V_i^{(r)}=L_i^{(r)}\cup Q_i^{(r)}\cup \overline{R}$. 

Now, \emph{in our attempt to minimize the number of relays used, we shall try to prune one at a time, the \emph{free} relays (i.e., relays in $\overline{R}$) that are part of the initial feasible path, $path_n^{sh}(S_i,0)$, and try to obtain an alternate hop constrained path, reusing the relays in the \emph{eligible locked} set $L_i^{(r)}$ as much as possible.} We explain this procedure in the next steps. 

\gap
\noindent
\fbox{
\begin{minipage}[t]{3in}
\step {13}: $\overline{R_i}\leftarrow \overline{R}\cap path_n^{sh}(S_i,0)$
\end{minipage}}

\gap
\noindent
\remark We identify as $\overline{R_i}$, the set of \emph{free} relays that are part of the initial feasible solution, $path_n^{sh}(S_i,0)$, for the $n^{th}$ node disjoint path from source $S_i$ to sink. To reduce the relay count, we shall try to prune the relays in $\overline{R_i}$ one at a time, while maintaining the hop constraint.

\gap
\noindent
\fbox{
\begin{minipage}[t]{3in}
\step {14}: $V_i\prime^{(r)}\leftarrow Q_i^{(r)}\cup L_i^{(r)}\cup \overline{R_i}$

$G_i\prime^{(r)}\leftarrow $ \emph{restriction} of $G_i^n$ to $V_i\prime^{(r)}$ (\comment After pruning a relay from the initial feasible path, we shall search for a hop constrained path over this graph, and not over $G_i^n$).
\end{minipage}}

\gap
\noindent
\remark After pruning a ``free" relay from the feasible path, $path_n^{sh}(S_i,0)$, we shall search for a better path (in terms of relay count) using only the remaining \emph{free} relays in $path_n^{sh}(S_i,0)$, and the \emph{locked} relays and sources in $V_i^{n}$, irrespective of whether they are part of $path_n^{sh}(S_i,0)$. In other words, we shall restrict our search space for the hop constrained path to the graph $G_i\prime^{(r)}$, thereby disallowing the use of \emph{free} relays that are not part of $path_n^{sh}(S_i,0)$, while still allowing the use of (eligible) \emph{locked} relays and sources even if they are not part of the initial feasible path. The idea behind this selection of search space is as follows:

\begin{itemize}
\item Restricting the search space to only the \emph{free} relays in $path_n^{sh}(S_i,0)$ ensures reduction in relay count, if we can find a hop constrained path after relay pruning.
\item Since the relays in $L_i^{(r)}$, and the sources $Q_i^{(r)}$ are already part of the final desired network, their inclusion in the current search space does not contradict our objective of reducing relay count, but helps by improving the chance of finding a hop constrained path after pruning a \emph{free} relay. 
\item This selection of search space, thus, enforces the reuse of relays in \emph{Locked} set, while trying to avoid the use of \emph{free} relays, thereby improving the relay count.  
\end{itemize}

\gap
\noindent
\fbox{
\begin{minipage}[t]{3in}
\step {15}: \emph{for} each node $j\in \overline{R_i}$ (\comment the following sub steps will be repeated until no more relay pruning from the set $\overline{R_i}$ is possible without violating the hop constraint)        

\gap
\noindent
\step {15a}: $TempPath(S_i,0)\leftarrow SPT(G_i\prime^{(r)}\backslash j)$
\end{minipage}}

\gap
\remark With slight abuse of notation, we designate by $G_i\prime^{(r)}\backslash j$, the restriction of the graph $G_i\prime^{(r)}$ to the node set $V_i\prime^{(r)}\backslash j$. After pruning a relay $j$, we obtain the shortest path ($TempPath(S_i,0)$) from source $S_i$ to sink, using the remaining vertices in $V_i\prime^{(r)}$. 

\gap
\noindent
\fbox{
\begin{minipage}[t]{3in}
\step {15b}: \emph{if} hopcount($TempPath(S_i,0)$) $> h_{\max}$

\hspace{9mm} \emph{continue}; (\comment Go back to Step 15, and try pruning the next relay in $\overline{R_i}$)

\gap 
\emph{else} go to next Step
\end{minipage}}

\gap
\noindent
\remark If the shortest path from $S_i$ to sink in $G_i\prime^{(r)}$ after pruning relay $j$ does not satisfy the hop constraint, we replace back the relay $j$, and try pruning the next relay in $\overline{R_i}$. 

\gap
\noindent
\fbox{
\begin{minipage}[t]{3in}
\step {15c}: $path_n(S_i,0)\leftarrow TempPath(S_i,0)$

$V_i\prime^{(r)} \leftarrow V_i\prime^{(r)}\backslash j$

$\overline{R_i} \leftarrow \overline{R_i} \backslash j$

$G_i\prime^{(r)} \leftarrow restriction$ of $G_i^n$ to $V_i\prime^{(r)}$
\end{minipage}}

\gap
\noindent
\remark If relay $j$ can be pruned successfully without violating the hop constraint, we update the $n^{th}$ node disjoint path, $path_n(S_i,0)$, and the sets $V_i\prime^{(r)}$ (the set of vertices to be part of the search space for a feasible path) and $\overline{R_i}$ (the set of \emph{free} relays to be pruned) as above before proceeding to the next iteration of the loop (i.e., before pruning the next relay in $\overline{R_i}$). 

Note that since relay $j$ has been pruned, the updated candidate path will have at least one relay less than the relay set used by the initial feasible path, $path_n^{sh}(S_i,0)$.

\gap
\noindent
\fbox{
\begin{minipage}[t]{3in}
\step {16}: $T\leftarrow T\cup path_n{S_i,0}$

\hspace{5mm} $L^{(r)}\leftarrow L^{(r)}\cup \{R\cap path_n{S_i,0}\}$

\hspace{5mm} $\overline{R}\leftarrow R\backslash L^{(r)}$

\textbf{End of Inner loop}
\end{minipage}}

\gap
\noindent
\remark After obtaining the $n^{th}$ node disjoint, hop constrained path from a source $S_i$ to the sink using as few relays as possible, we augment the network $T$ with the path from source $S_i$ to sink, namely, $path_n(S_i,0)$. Also, before proceeding to the next source (i.e., the next iteration of the inner loop), we update the \emph{Locked relay set} with the new relays used in $path_n(S_i,0)$; the \emph{free relay set} $\overline{R}$ is also updated accordingly.   

\gap
\noindent
\fbox{
\begin{minipage}[t]{3in}
\step {17}: $n\leftarrow n+1$

\textbf{End of Outer loop}
\end{minipage}}

\gap
\noindent
\remark When we have obtained a node disjoint, hop constrained path from all the sources to the sink in the current iteration (of the outer loop), we proceed to the next iteration of the outer loop.

\noindent
\fbox{
\begin{minipage}[t]{3in}
\textbf{End of pseudo code for E-SPTiRP}
\end{minipage}}

\gap
\noindent
\textbf{Remarks:}

Two ideas play a key role in improving the performance of the E-SPTiRP algorithm, namely,

\begin{itemize}
\item prioritizing the reuse of relays already used by the solution thus far (see Section~\ref{subsubsec:phase2}, Step 5 of the pseudo code).
\item ordering the sources in order of their distances from the sink, and searching for alternate path starting with the farthest source (see Section~\ref{subsubsec:phase2}, Step 3 of the pseudo code).
\end{itemize}

To demonstrate that these two ideas do improve performance, we devised another empty algorithm which mimics most of the steps of the E-SPTiRP algorithm, except that, while finding alternate node disjoint routes for the sources, 
\begin{itemize}
\item We did not put any emphasis on the reuse of relays 
\begin{itemize}
\item that are already part of the one connected network that is obtained in Step 1 of Phase 2
\item relays that have been used to construct alternate routes for the previous sources
\end{itemize}

\item We did not impose any ordering on the sources according to their distances from the BS.  
\end{itemize}
The details of the dummy algorithm are presented in Section~\ref{subsec:dummyalgo}. As we shall see in Section~\ref{sec:results_kconnect}, the performance of this dummy algorithm was significantly worse compared to that of E-SPTiRP in terms of relay count.

\subsection{Dummy Algorithm}
\label{subsec:dummyalgo}
\subsubsection{\textbf{Phase 1: Checking for $k$-connectivity on $Q$ alone}}

\begin{itemize}
\item We aim at finding $k$ node disjoint, hop constrained paths from each source to the sink, \emph{using only other source nodes}.
\item The procedure, and the conclusions are the same as in E-SPTiRP.
\item If hop constraint is met for all sources, we are done. No relays required for $k$-connectivity. Else, proceed to the next phase.  
\end{itemize}

\subsubsection{\textbf{Phase 2: Obtaining node-disjoint paths from each source to the sink with a small relay count}}

We come to this phase if Phase 1 fails to find a network on $Q$ alone satisfying the design objectives. Our objective in this phase is to obtain $k$ node disjoint hop constrained paths from each source to the sink, using as few additional relays as possible. 

\gap 
\noindent
\fbox{
\begin{minipage}[t]{3in}
\textbf{Input:} $G=(V, E)$, $h_{\max}$, $k$\\
\comment $V=Q\cup R$, and $E$ is the set of all edges of length $\leq r_{\max}$ on $Q\cup R$.\\
\textbf{Output:} $T$ (the desired network)\\
\textbf{flag:} boolean $F_{inf}$ (if $F_{inf}=1$, problem is (possibly) infeasible)\\
\textbf{Initialize:} $T=\varnothing$, $F_{inf}=0$

\gap
\noindent
\step {1}: $(T,F_{inf})\leftarrow SPTiRP(G)$
\end{minipage}} 

\gap
\noindent
\remark We run the SPTiRP algorithm on $G$ to obtain a one connected hop constrained network with as few relays as possible.

\gap
\noindent
\fbox{
\begin{minipage}[t]{3in}
\step {2}: \emph{if} $F_{inf}=1$

\hspace{10mm} exit Phase 2

\hspace{7mm}\emph{else} go to next step
\end{minipage}}

\gap
\noindent
\remark If the SPTiRP algorithm fails to meet the hop constraint for some of the sources, we declare the problem to be infeasible, and stop.

\gap
\noindent
\fbox{
\begin{minipage}[t]{3in}
\step {3}: $n=2$ (\comment $n$ is the loop variable for the outer loop described next)

\noindent
\textbf{Outer loop: }\emph{while} $n \leq k$ (\comment the following steps (Steps 4 to 12) will be repeated until all the sources have $k$ node disjoint hop constrained paths to the sink)

\gap
\noindent
\textbf{Inner loop: } \emph{for} each source $S_i \in Q$, $1 \leq i \leq |Q|-1$ (\comment Steps 4 to 11 will be repeated for each source)

\gap
\noindent
\step {4}: $V^{used}_i\leftarrow V \cap \{\cup_{l=1}^{n-1} path_{\max}(S_i,0)\}$
\end{minipage}}

\gap
\noindent
\remark We designate by $path_{\max}(S_i,0)$, the $l^{th}$ node disjoint hop constrained path from source $S_i$ to sink ($1\leq l\leq k$). In Step 4, we identify the set of nodes (designated by $V^{used}_i$) used by the first $n-1$ node disjoint paths from source $S_i$ to sink. Since we want to find another \emph{node disjoint} path from source $S_i$ to sink in the current ($n^{th}$) iteration (of the outer loop), we need to remove the set of vertices, $V^{used}_i$, except $S_i$ and 0, from consideration for the $n^{th}$ path before proceeding further in the current iteration. We do that in the next step.

\gap
\noindent
\fbox{
\begin{minipage}[t]{3in}
\step {5}: $V_i^{(r)} \leftarrow \{V\backslash V_i^{used}\} \cup \{S_i,0\}$
\end{minipage}}

\gap
\noindent
\remark $V_i^{(r)}$ is the set of vertices not used in the first $n-1$ paths from source $S_i$ to the sink, and therefore, eligible to be part of the $n^{th}$ node disjoint path from source $S_i$ to sink.

\gap
\noindent
\fbox{
\begin{minipage}[t]{3in}
\step {6}: $G_i^n \leftarrow$ \emph{restriction of} $G$ to $V_i^{(r)}$   
\end{minipage}}

\gap
\noindent
\remark In order to obtain the $n^{th}$ node disjoint path from source $S_i$ to sink, we restrict the graph $G$ to the \emph{eligible} node set $V_i^{(r)}$.

\gap
\noindent
\fbox{
\begin{minipage}[t]{3in}
\step {7}: $path_n^{sh}(S_i,0) \leftarrow SPT(G_i^n)$

\gap
\noindent
\step {8}: \emph{if} hopcount($path_n^{sh}(S_i,0)$) $> h_{\max}$

\hspace{9mm} $\{F_{inf}\leftarrow 1$;

\hspace{10mm}\emph{exit Phase 2}\}

\gap 
\hspace{9mm}\emph{else} go to next Step
\end{minipage}}

\gap
\noindent
\remark \begin{enumerate}
\item We obtain the shortest path (designated by $path_n^{sh}(S_i,0)$)from source $S_i$ to sink in $G_i^n$; if this path meets the hop constraint, we have a feasible solution for the $n^{th}$ node disjoint path from source $S_i$ to sink, and we can proceed to the next step to prune relays from the feasible solution in order to achieve a better solution in terms of relay count. If, however, the path obtained in Step 7 does not meet the hop constraint, we declare the problem to be \emph{possibly} infeasible, and stop, as we have failed to obtain a feasible solution for the $n^{th}$ node disjoint path ($1<n \leq k$) from source $S_i$ to sink.
\item Successful completion of Steps 7 and 8 thus guarantee the existence of a hop count feasible, node disjoint path from source $S_i$ to sink. Now, since our objective is to meet the design requirements using as few relays as possible, in the next step, we shall try to obtain a better path (in terms of relay count) from source $S_i$ to sink, by pruning relays from the feasible path obtained earlier. \emph{The dummy algorithm will differ from E-SPTiRP algorithm in this relay pruning procedure.}
\item Note that we can simply run the SPTiRP algorithm on the graph $G_i^n$ to obtain the $n^{th}$ node disjoint hop constrained path from source $S_i$ to sink, update the network $T$ with that path, and move on to the next source (next iteration of the inner loop). But since the SPTiRP algorithm is designed to `optimize' the \emph{total} number of relays used by all the source nodes in $G_i^n$, the path so obtained from source $S_i$ to sink may not be the best in terms of relay count for source $S_i$. Hence, we shall explore other methods of reducing relay count in the path from source $S_i$ to sink, as indicated in Step 9.
\end{enumerate}

\gap
\noindent
\fbox{
\begin{minipage}[t]{3in}
\step {9}: $path_n^{(1)}(S_i,0)\leftarrow SPTiRP(G_i^n)$ (\comment Candidate path 1)

\gap
$path_n^{(2)}(S_i,0)\leftarrow RoutineA(G_i^n, path_n^{sh}(S_i,0), R)$ (\comment Candidate path 2)
\end{minipage}}

\gap
\noindent
\remark We obtain two candidate routes for the $n^{th}$ node disjoint hop constrained path from source $S_i$ to sink in $G_i^n$. The first candidate path is obtained simply by running the SPTiRP algorithm on the graph $G_i^n$. The second candidate route is obtained by pruning relays from the feasible path, $path_n^{sh}(S_i,0)$, obtained in Steps 7 and 8 earlier. The routine for this relay pruning procedure, namely \emph{RoutineA}, is described next. Once we have the candidate routes, we shall choose the best among them in terms of relay count.

\gap
\noindent
\fbox{
\begin{minipage}[t]{3in}
\textbf{Pseudo code for RoutineA}

\gap
\noindent
\textbf{Input:} $G_i^n$, $path_n^{sh}(S_i,0)$, $R$

\noindent
\textbf{Output:} $path_n^{(2)}(S_i,0)$ (\comment the second candidate path)

\noindent
\textbf{Initialize:} $path_n^{(2)}(S_i,0)\leftarrow path_n^{sh}(S_i,0)$

\gap
\noindent
\step {a}: $Q_i^A \leftarrow V_i^{(r)}\cap Q$

\hspace{1mm} $R_i^A \leftarrow R \cap path_n^{sh}(S_i,0)$

\hspace{1mm} $V_i^A \leftarrow Q_i^A \cup R_i^A$
\end{minipage}}

\gap
\noindent
\fbox{
\begin{minipage}[t]{3in}
\hspace{1mm} $G_i^A \leftarrow restriction$ of $G_i^n$ to $V_i^A$ (\comment This is the graph over which we shall search for a hop constrained path after pruning a relay)
\end{minipage}}

\gap
\noindent
\remark $R_i^A$ is the set of relays used in the feasible path, $path_n^{sh}(S_i,0)$, from source $S_i$ to sink. We shall try to prune the relays in the set $R_i^A$ one by one to achieve a better path in terms of relay count. 

$Q_i^A$ is the set of sources (may or may not be used in the initial feasible path, $path_n^{sh}(S_i,0)$) belonging to the \emph{eligible} vertex set, $V_i^n$, defined earlier in Step 5 of Dummy Algorithm, phase 2. We designate by $V_i^A$, the vertex set consisting of the sources in $Q_i^A$, and the relays in $R_i^A$. 

We shall allow our search space to be the vertex set $V_i^A$ (i.e., the graph $G_i^A$), i.e., we shall allow \emph{all the \emph{eligible} source nodes in $V_i^n$, irrespective of whether they were used or not in the initial feasible path}, and allow only the relays used in the initial feasible path, $path_n^{sh}(S_i,0)$.

\begin{itemize}
\item Since we are allowing only the relays used in the initial feasible path, this method will still ensure a reduction in relay count, if a hop constrained path is found after pruning a relay.
\item In the worst case, the routine may end up with $path_n^{sh}(S_i,0)$ as outcome.
\end{itemize}

\gap
\noindent
\fbox{
\begin{minipage}[t]{3in}
\step {b}: \textbf{Loop:} \emph{for} each node $j \in R_i^A$ (\comment the following steps will be repeated until no more relay pruning is possible without violating the hop constraint)

\gap
\noindent
\step {c}: $TempPath(S_i,0)\leftarrow SPT(G_i^A\backslash j)$
\end{minipage}} 

\gap
\noindent
\remark With slight abuse of notation, we designate by $G_i^A\backslash j$, the restriction of the graph $G_i^A$ to the node set $V_i^A\backslash j$. After pruning a relay $j$, we obtain the shortest path ($TempPath(S_i,0)$) from source $S_i$ to sink, using the remaining vertices in $V_i^A$. 

\gap
\noindent
\fbox{
\begin{minipage}[t]{3in}
\step {d}: \emph{if} hopcount($TempPath(S_i,0)$) $> h_{\max}$

\hspace{9mm} \emph{continue}; (\comment Go back to step 2 and try pruning the next relay in $R_i^A$)

\gap 
\emph{else} go to next Step
\end{minipage}}

\gap
\remark If the shortest path from $S_i$ to sink in $G_i^A$ after pruning relay $j$ does not satisfy the hop constraint, we replace back the relay $j$, and try pruning the next relay in $R_i^A$. 

\gap
\noindent
\fbox{
\begin{minipage}[t]{3in}
\step {e}: $path_n^{(2)}(S_i,0)\leftarrow TempPath(S_i,0)$

$R_i^A \leftarrow R \cap path_n^{(2)}(S_i,0)$

$V_i^A \leftarrow Q_i^A \cap R_i^A$

$G_i^A \leftarrow restriction$ of $G_i^n$ to $V_i^A$
\end{minipage}}

\gap
\noindent
\remark If relay $j$ can be pruned successfully without violating the hop constraint, we update the candidate path, $path_n^{(2)}(S_i,0)$, and the sets $V_i^A$ and $R_i^A$ (the set of relays used in the candidate path) as above before proceeding to the next iteration of the loop (i.e., before pruning the next relay). 

Note that since relay $j$ has been pruned, the updated relay set $R_i^A$ (and hence the updated candidate path) will have at least one relay less than the relay set used by the initial feasible path, $path_n^{sh}(S_i,0)$.
 
\gap
\fbox{
\begin{minipage}[t]{3in}
\emph{\textbf{end of Pseudo code for RoutineA}}
\end{minipage}}

\gap
\noindent
\fbox{
\begin{minipage}[t]{3in}
\step {10} (of Dummy Algorithm, phase 2): 

$path_n(S_i,0) \leftarrow \arg \min \{relaycount(path_n^{(1)}(S_i,0)), \\ relaycount(path_n^{(2)}(S_i,0))\}$
\end{minipage}}

\gap
\noindent
\remark Once we have the candidate routes for the $n^{th}$ node disjoint hop constrained path from source $S_i$ to sink, we choose the best among them in terms of relay count. 

\gap
\noindent
\fbox{
\begin{minipage}[t]{3in}
\step {11}: $T \leftarrow T \cup path_n(S_i,0)$

\textbf{Inner loop end}
\end{minipage}}

\gap
\noindent
\remark Update the network $T$ with the $n^{th}$ node disjoint, hop constrained path from source $S_i$ to sink. 

\gap
\noindent
\fbox{
\begin{minipage}[t]{3in}
\step {12}: $n \leftarrow n+1$

\textbf{Outer loop end}

\gap
\textbf{end of Pseudo code for Dummy Algorithm}
\end{minipage}}

\subsubsection{Time complexity of the Dummy algorithm}

Phase 1 of Dummy algorithm involves repeating the SPTiRP algorithm on the set $Q$ alone at most $k$ times for each of the sources. Hence, the time complexity of this phase is upper bounded by $k(|Q|-1)g_{sptirp}(|Q|)$. 

Phase 2 starts by running the SPTiRP algorithm on the entire graph. The time complexity involved therein is $g_{sptirp}(|Q|+|R|)$. Now the alternate route finding procedure involves finding SPT on $G_i$ (see the algorithm for definition of $G_i$), followed by two different methods of finding candidate routes. 

Now, the time complexity of finding SPT on $G_i$ is $\leq g_{spt}(|Q|+|R|)$, and this step is repeated at most $k-1$ times for each source. 

RoutineA (for computing the second candidate path) involves pruning relays from a path, one at a time, and running SPT on the remaining searchspace (i.e., the relays on the path, and all the eligible source nodes. See Dummy algorithm for detailed explanation) to check hop constraint feasibility. The time complexity of RoutineA is, therefore, upper bounded by $|R|g_{spt}(|Q|+|R|-1) \leq |R|g_{spt}(|Q|+|R|)$. This method is repeated at most $k-1$ times for each source.  

Computation of the first candidate path involves running the SPTiRP algorithm on $G_i$, and its worst case complexity is upper bounded by $g_{sptirp}(|Q|+|R|)$. This method is also repeated at most $k-1$ times for each source.

Hence, the time complexity of Dummy algorithm is upper bounded by
$k(|Q|-1)g_{sptirp}(|Q|)+g_{sptirp}(|Q|+|R|)+(|Q|-1)(k-1)\{g_{spt}(|Q|+|R|)+|R|g_{spt}(|Q|+|R|)+g_{sptirp}(|Q|+|R|)\}$.

Recall that $g_{sptirp}(\cdot)$ and $g_{spt}(\cdot)$ are both polynomial time, and hence the above expression is polynomial time in $|Q|$, $|R|$, and $k$.

\subsection{Analysis of E-SPTiRP}

\subsubsection{Time complexity}
We show below that the time complexity of the algorithm is upper bounded by polynomials in $|Q|$, $|R|$, and $k$. Hence the algorithm is polynomial time.

\begin{lemma}
\label{lemma3}
The time complexity of the E-SPTiRP algorithm is upper bounded by $k(|Q|-1)g_{sptirp}(|Q|)+g_{sptirp}(|Q|+|R|)+(k-1)(|Q|-1)(|R|+1)g_{spt}(|Q|+|R|)$, where $g_{sptirp}(\cdot)$ is the time complexity of the SPTiRP algorithm, and $g_{spt}(\cdot)$ is the time complexity of finding the shortest path tree.
\end{lemma}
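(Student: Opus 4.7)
The plan is to bound the running time of E-SPTiRP by going phase by phase through the pseudo code, identifying in each phase the most expensive primitive (either an SPT computation of cost $g_{spt}(\cdot)$ or an SPTiRP invocation of cost $g_{sptirp}(\cdot)$), and summing the resulting contributions. Since the three summands in the claimed bound exactly match the three structurally distinct parts of the algorithm (Phase~1, the initial one-connected design in Phase~2, and the iterative alternate-route construction in Phase~2), the task reduces to a careful accounting of loop iterations.

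First I would handle Phase~1. The outer loop runs over the $|Q|-1$ sources, and the inner \emph{while} loop runs at most $k$ times per source (once per required node-disjoint path). Inside the inner loop, the dominant operation is the invocation of SPTiRP at Step~3 on a graph over at most $|Q|$ vertices (the sources with the other sources treated as ``relays''), costing $g_{sptirp}(|Q|)$. All remaining per-iteration work (set operations, hop-count checks, union updates) is $O(|Q|)$ and absorbed into the SPTiRP cost. This yields the first term $k(|Q|-1)\,g_{sptirp}(|Q|)$.

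Next I would analyze Phase~2. The single call to SPTiRP on the full graph $G$ at Step~1 contributes the second term $g_{sptirp}(|Q|+|R|)$. For the remainder, the outer \emph{while} loop on $n$ runs at most $k-1$ times (from $n=2$ to $n=k$), and the inner \emph{for} loop runs over the $|Q|-1$ sources. For each (source, $n$) pair, the work consists of: (i)~constructing $G_i^n$ and computing one SPT at Step~10, of cost at most $g_{spt}(|Q|+|R|)$; and (ii)~the relay-pruning loop at Step~15, which tries to delete each of the at most $|R|$ free relays in $\overline{R_i}$ one at a time, each trial requiring one SPT on the restricted graph $G_i'^{(r)}\!\setminus j$, costing at most $g_{spt}(|Q|+|R|)$. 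This gives at most $|R|+1$ SPT computations per (source, $n$) pair, so the total contribution is $(k-1)(|Q|-1)(|R|+1)\,g_{spt}(|Q|+|R|)$. All auxiliary set operations in Steps~5--16 (forming $L^{(r)}$, $\overline{R}$, $V_i^{used}$, $V_i^{(r)}$, $Q_i^{(r)}$, $L_i^{(r)}$, $\overline{R_i}$, sorting $Q$ in Step~3, etc.) are polynomial in $|Q|+|R|$ and are dominated by the SPT/SPTiRP terms already counted. Summing the three contributions yields the stated bound, and since $g_{spt}$ and $g_{sptirp}$ are polynomial (as shown in Section~\ref{sec:sptalgo}), polynomial-time complexity follows.

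The only mildly subtle step, and the one I would take most care with, is the pruning loop (Step~15): one must observe that each successful prune \emph{permanently} removes a relay from $\overline{R_i}$ (Step~15c), while each unsuccessful prune also advances the \emph{for} iterator over $\overline{R_i}$ without re-examination; thus the number of SPT invocations inside Step~15 is bounded by $|\overline{R_i}|\le |R|$ rather than, say, $|R|^2$. Given that observation, the rest of the proof is a straightforward accounting exercise.
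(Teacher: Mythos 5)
Your accounting matches the paper's own proof essentially step for step: Phase~1 contributes $k(|Q|-1)g_{sptirp}(|Q|)$, the initial SPTiRP call in Phase~2 contributes $g_{sptirp}(|Q|+|R|)$, and each of the $(k-1)(|Q|-1)$ alternate-route computations costs at most one SPT plus $|R|$ pruning SPTs, giving the third term. Your added observation that the pruning loop visits each free relay at most once (so the count is $|R|$ and not $|R|^2$) is a correct and slightly more careful justification of a bound the paper states without elaboration.
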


\begin{proof}

The first and second terms in the above expression can be derived by similar arguments as given for Dummy algorithm above. 

Now, the alternate route finding procedure starts by finding an SPT on a graph $G_i$, a restriction of the graph $G$. The complexity of this is upper bounded by $g_{spt}(|Q|+|R|)$. This is repeated for each source at most $k-1$ times.

The next step in alternate route finding consists of pruning from a path ($path_2(i,0)$), relays chosen from a certain selected set ($R_i$), one at a time, and finding the SPT on the resulting restricted graph (see E-SPTiRP for detailed explanation) to check if hop constraint is satisfied by the resulting path. The worst case complexity of this step is upper bounded by $|R|g_{spt}(|Q|+|R|)$. This step is also repeated at most $k-1$ times for each source.

Hence, the worst case complexity of E-SPTiRP is upper bounded by $k(|Q|-1)g_{sptirp}(|Q|)+g_{sptirp}(|Q|+|R|)+(k-1)(|Q|-1)(|R|+1)g_{spt}(|Q|+|R|)$.
\end{proof}

From Lemma~\ref{lemma3}, it follows that E-SPTiRP algorithm is polynomial time.

\subsubsection{Worst case approximation guarantee}
As already stated in Corollary~\ref{thm2} (Section~\ref{sec:formulation_kconnect}), no polynomial time algorithm can provide finite approximation guarantee for the general class of RSN$k$-MR-HC problems. However, for a subclass of RSN$k$-MR-HC problems where the optimal solution for one connectivity with hop constraint (RST-MR-HC) uses at least one relay, we can derive a polynomial factor worst case approximation guarantee for polynomial time complexity algorithms. This is the content of Lemma~\ref{lemma2}, Corollary~\ref{corollary:lem2}, and Theorem~\ref{thm:approxfactor}.

\begin{lemma}
\label{lemma2}
For any fixed $k\in \{1,2,\ldots\}$, if the optimal solution for the RSN$k$-MR-HC problem on the graph $G=(V,E)$, uses $n>0$ relays, then the optimal solution for the RSN$(k+1)$-MR-HC problem on the same graph with the same hop constraint as the RSN$k$-MR-HC problem, uses at least $n+1$ relays. 
\end{lemma}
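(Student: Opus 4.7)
The plan is to argue by contradiction, using the observation that in any collection of $k+1$ pairwise node‑disjoint paths from a source to the sink, any single relay lies on \emph{at most one} of those paths. Let $T^*_{k+1}$ denote an optimal RSN$(k+1)$-MR-HC solution on $G$ using $n^*$ relays, and let $T^*_k$ denote an optimal RSN$k$-MR-HC solution using $n>0$ relays. I want to show $n^* \geq n+1$.

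First I would note the trivial monotonicity $n^* \geq n$: any RSN$(k+1)$-MR-HC feasible subgraph is automatically RSN$k$-MR-HC feasible, since from each source we can simply discard one of its $k+1$ hop-constrained node-disjoint paths to obtain $k$ such paths. Hence the minimum over the smaller feasible set is at least that over the larger. So it suffices to rule out the equality case $n^* = n$.

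Suppose for contradiction $n^* = n \geq 1$. Pick any relay $r$ appearing in $T^*_{k+1}$ (such an $r$ exists since $n^* \geq 1$) and let $T' := T^*_{k+1}\setminus\{r\}$. For each source $q \in Q\setminus\{0\}$, the network $T^*_{k+1}$ contains $k+1$ pairwise node-disjoint paths from $q$ to the sink, each of hop count $\leq h_{\max}$. By node-disjointness, the relay $r$ can belong to at most one of these $k+1$ paths; the remaining $\geq k$ paths are fully contained in $T'$ and therefore continue to be node-disjoint and hop-constrained in $T'$. Consequently $T'$ is RSN$k$-MR-HC feasible and uses only $n^*-1 = n-1$ relays, contradicting the optimality of $T^*_k$. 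Therefore $n^* \geq n+1$.

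The only delicate point is the per-source application of node-disjointness, which needs to be stated carefully: node-disjointness is required within the set of $k+1$ paths associated with each source, but paths belonging to different sources may freely share internal nodes. This does not affect the argument, since the removal of the single relay $r$ is evaluated source-by-source and in each case removes at most one of the $k+1$ paths from that source's collection. No further structural properties of $G$ or of the optimal solutions are needed.
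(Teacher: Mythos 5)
Your proof is correct and rests on the same key observation as the paper's: in any family of $k+1$ pairwise node-disjoint, hop-constrained paths from a source to the sink, a given relay lies on at most one of the paths, so deleting a single relay from an alleged $n$-relay optimum for RSN$(k+1)$-MR-HC leaves an RSN$k$-MR-HC feasible subgraph with $n-1$ relays, contradicting the optimality of $n$ for the $k$-connectivity problem. Your write-up is in fact a slightly streamlined rendition of the paper's argument, which first passes from the RSN$(k+1)$ optimum to an induced RSN$k$ optimum, partitions the sources into sets $S_i$ according to the relay $R_i$ they use, and treats the case of an empty $S_i$ separately, whereas you remove one relay directly from the RSN$(k+1)$ solution; both arguments are sound.
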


\begin{proof}
Consider a problem instance where the optimal solution for the RSN$k$-MR-HC problem uses $n>0$ relays.  

Suppose we claim that the optimal solution for the RSN$(k+1)$-MR-HC problem on that problem instance also uses $n$ relays (it evidently cannot use fewer than $n$ relays as that would contradict the hypothesis that the RSN$k$-MR-HC problem uses at least $n$ relays).

Therefore, for that problem instance, there exists a relay set $R_{opt}=\{R_1,R_2,\ldots ,R_n\}$ such that each source has $k+1$ node disjoint paths to the sink involving some of those relays. 

Note that $R_{opt}$ is an optimal solution for the RSN$k$-MR-HC problem on this problem instance. This is because, from each source to the sink we can take any $k$ of the $k+1$ paths provided by this solution to RSN$(k+1)$-MR-HC; this will be an optimal solution to RSN$k$-MR-HC. In this optimal solution to RSN$k$-MR-HC, let $S_i$ be the set of sources that use the relay $R_i$, $1\leq i\leq n$. Note that if this set is empty for some $i$, $1\leq i\leq n$, then there is nothing left to prove, since we could obtain an optimal solution to the RSN$k$-MR-HC problem using only the relays in $R_{opt}\backslash R_i$ (which, in turn, contradicts the assumption that the optimal solution to the RSN$k$-MR-HC problem uses $n$ relays). We, therefore, assume that $S_i$ is nonempty for all $i$, $1\leq i\leq n$, and derive a contradiction. 

By our claim, $R_{opt}$ is also the optimal solution for the RSN$(k+1)$-MR-HC problem. Since the $(k+1)$ paths from each source to the sink in this solution must be \emph{node disjoint}, therefore, each source in the set $S_i$ uses the relay $R_i$ in only one of its $(k+1)$ paths to the sink. It follows, therefore, that each source in $S_i$ has $k$ hop constrained, node disjoint paths to the sink \emph{using only the relays in $R_{opt}\backslash R_i$.} This, in turn, implies that the set $R_{opt}\backslash R_i$ is sufficient to obtain $k$ node-disjoint, hop constrained paths from each of the sources to the sink, i.e., $R_{opt}\backslash R_i$ is, in fact, an optimal solution for the RSN$k$-MR-HC problem. This contradicts our earlier assumption that the optimal solution for RSN$k$-MR-HC problem uses $n$ relays. 

Therefore, our claim that the optimal solution for the RSN$(k+1)$-MR-HC problem uses $n$ relays is wrong. Hence, the optimal solution for RSN$(k+1)$-MR-HC problem must use at least $n+1$ relays.
\end{proof}

\begin{corollary}
\label{corollary:lem2}
If the optimal solution for the RST-MR-HC problem on the graph $G=(V,E)$, uses $m>0$ relays, then the optimal solution for the RSN$k$-MR-HC problem on the same graph with the same hop constraint as the RST-MR-HC problem, uses at least $m+k-1$ relays. 
\end{corollary}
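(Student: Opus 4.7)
The plan is to prove this corollary by a straightforward induction on $k$, using Lemma~\ref{lemma2} as the inductive step. Observe first that the RST-MR-HC problem is precisely the RSN$1$-MR-HC problem (one node-disjoint hop-constrained path from each source to the sink), so the hypothesis gives us a base case: the optimal value for RSN$1$-MR-HC on $G$ is $m > 0$, which matches $m + 1 - 1$.

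For the induction, I would fix $j \geq 1$ and assume, as the inductive hypothesis, that the optimal solution for RSN$j$-MR-HC on $G$ (with the same hop bound) uses at least $m + j - 1$ relays. Since $m > 0$, this lower bound is strictly positive, so the hypothesis of Lemma~\ref{lemma2} is satisfied with $n = $ (optimal value of RSN$j$-MR-HC) $\geq m + j - 1 > 0$. Applying Lemma~\ref{lemma2} then yields that the optimal value for RSN$(j+1)$-MR-HC is at least one more, i.e., at least $(m + j - 1) + 1 = m + j$, which is $m + (j+1) - 1$. Iterating from $j = 1$ up to $j = k - 1$ gives the claimed lower bound of $m + k - 1$ for RSN$k$-MR-HC.

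The only thing to verify carefully is that the ``$n > 0$'' hypothesis of Lemma~\ref{lemma2} continues to hold at every step of the induction; but this is immediate because the inductive lower bound $m + j - 1$ is strictly positive whenever $m > 0$, so the lemma may be invoked at each step. No other obstacle arises, since the corollary is essentially a telescoping application of the lemma; the real work was already done in the proof of Lemma~\ref{lemma2}.
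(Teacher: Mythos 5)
Your proposal is correct and matches the paper's own argument, which likewise proves the corollary by repeated (telescoping) application of Lemma~\ref{lemma2} starting from the $m>0$ base case; your explicit check that the $n>0$ hypothesis persists at each step is a slightly more careful phrasing of the same induction.
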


\begin{proof}
If the optimal solution for the RST-MR-HC problem on the graph $G=(V,E)$, uses $m>0$ relays, then by Lemma~\ref{lemma2}, the optimal solution for the RSN$2$-MR-HC problem must use at least $m+1$ relays. Therefore, using Lemma~\ref{lemma2} once more, the optimal solution for the RSN$3$-MR-HC problem must use at least $m+2$ relays. Thus, by repeated use of Lemma~\ref{lemma2}, the optimal solution for the RSN$k$-MR-HC problem must use at least $m+k-1$ relays. 
\end{proof}

\begin{theorem}
\label{thm:approxfactor}
For the set of problem instances where the optimal solution for the RST-MR-HC problem uses at least one relay, the worst case approximation guarantee given by any polynomial time complexity algorithm for the RSN$k$-MR-HC problem, whenever the algorithm terminates with a feasible solution, is $\min\{m(h_{\max}-1),|R|/k\}$, where $m$ is the number of sources, and $h_{\max}$ is the hop count bound.   
\end{theorem}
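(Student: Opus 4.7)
The plan is to upper-bound the approximation ratio by the quotient of (a) a universal upper bound on the number of relays that \emph{any} polynomial-time algorithm can return when its output is feasible, and (b) a lower bound on $R_{Opt}$ valid under the subclass hypothesis. Since the theorem asserts a guarantee for every polynomial-time algorithm, neither bound should exploit algorithm-specific structure: the upper bound should follow from feasibility alone, and the lower bound from the already-established Corollary~\ref{corollary:lem2}.

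For the lower bound, I would invoke Corollary~\ref{corollary:lem2} directly. The subclass hypothesis says the optimal RST-MR-HC solution on the same graph (with the same hop constraint $h_{\max}$) uses at least one relay, so applying the corollary with the role of its ``$m$'' played by $1$ yields $R_{Opt} \geq 1 + k - 1 = k$. This is the only place the subclass restriction is used, and it is exactly what rules out the degenerate situation (highlighted earlier in Corollary~\ref{thm2}) where $R_{Opt}=0$ and no finite ratio can be guaranteed.

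For the upper bound I would argue purely from feasibility. In any feasible RSN$k$-MR-HC solution each of the $m$ sources has $k$ mutually node-disjoint paths to the sink, and each such path has at most $h_{\max}$ edges and hence at most $h_{\max}-1$ intermediate nodes; so the $k$ node-disjoint paths out of a single source touch at most $k(h_{\max}-1)$ relays, and summing over the $m$ sources (which only overcounts, since relays may be shared across different sources) the returned network uses at most $mk(h_{\max}-1)$ relays. The trivial bound of $|R|$ also holds, since no algorithm can deploy relays outside the given potential locations. Combining with the lower bound on $R_{Opt}$ gives
\[
\frac{\text{relays used by algorithm}}{R_{Opt}} \;\leq\; \frac{\min\{mk(h_{\max}-1),\,|R|\}}{k} \;=\; \min\{m(h_{\max}-1),\,|R|/k\},
\]
which is the claimed bound. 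I expect no real obstacle: the only step that requires care is checking that the intermediate-node count $h_{\max}-1$ and the per-source path count $k$ combine correctly before dividing through by the $k$ in the denominator coming from Corollary~\ref{corollary:lem2}, so that the two ``$k$'' factors cancel to produce $m(h_{\max}-1)$ rather than $mk(h_{\max}-1)$ in the first term of the minimum.
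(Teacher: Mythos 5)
Your proposal matches the paper's own argument: the lower bound $R_{Opt,k}\geq k$ comes from Corollary~\ref{corollary:lem2} applied with the one-connectivity optimum equal to $1$ in the worst case, and the upper bound $\min\{mk(h_{\max}-1),|R|\}$ on any feasible output comes from counting at most $h_{\max}-1$ relays per path, $k$ paths per source, and $m$ sources (or all of $|R|$), after which the two factors of $k$ cancel exactly as you describe. This is essentially the same proof as in the paper, with your version merely spelling out the per-path relay count more explicitly.
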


\begin{proof}
Since the optimal solution for the RST-MR-HC problem uses at least one relay, the worst case scenario is that the optimal relay count for RST-MR-HC problem is just 1, and hence, from Corollary~\ref{corollary:lem2}, the optimal solution for RSN$k$-MR-HC problem uses at least $k$ relays, whereas a polynomial time algorithm for the same problem may end up using all the $|R|$ relays, or at most $mk(h_{\max}-1)$ relays, whichever is the smaller, whenever the algorithm obtained a feasible solution for the problem. Hence, the worst case approximation guarantee is $\min\{m(h_{\max}-1),|R|/k\}$.
\end{proof}

\subsubsection{Average Case Approximation Guarantee}
\label{subsubsec:avg-case-analysis-kconnect}
We provide a bound on the average case approximation ratio for the RSNk-MR-HC  problem for a stochastic setting very similar to that in Section~\ref{subsubsec:avg-approx-oneconnect}. We consider a square area $A(\subset \Re^2_+)$ of side $a$. The BS is located at (0,0). We deploy $n$ potential locations independently and identically distributed (i.i.d) uniformly randomly over the area $A$; then deploy $m$ sources i.i.d uniformly randomly over the area $A_{\epsilon}$ (recall that for a given $\epsilon\in(0,1)$, $A_{\epsilon}(\subset A)$ denotes the quarter circle of radius $(1-\epsilon)h_{\max}r$ centred at the BS, where $h_{\max}$ is the hop constraint, and $r$ is the maximum allowed communication range). The probability space of this random experiment is $(\Omega^{(n)}_{m,\epsilon},\mathcal{B}^{(n)}_{m,\epsilon},P^{(n)}_{m,\epsilon})$
where, $\Omega^{(n)}_{m,\epsilon}$, $\mathcal{B}^{(n)}_{m,\epsilon}$, $P^{(n)}_{m,\epsilon}$ are as defined earlier.

We consider the random geometric graph $\mathcal{G}^r(\omega)$ induced by considering all links of length $\leq r$ on an instance $\omega \in \Omega^{(n)}_{m,\epsilon}$. We introduce the following notations, in addition to the notations introduced earlier in Section~\ref{subsubsec:avg-approx-oneconnect}:

\begin{description}
\item $\X_k\:=\:\{\omega:\:\exists\:\text{at least $k$ node disjoint paths with hop count $\leq h_{\max}$ from each source to the BS in $\G(\omega)$}\}$: set of all feasible instances for the RSN$k$-MR-HC problem. Note that $\X_k\subset \X$.
\item $\X_{algo}\subset \X_k$: set of all feasible instances where E-SPTiRP algorithm obtains a feasible solution
\item $N_{E-SPTiRP}(\omega)$: Number of relays in the outcome of the E-SPTiRP algorithm on $\mathcal{G}^r(\omega)$ ($\infty$ if $\omega\in\X_{algo}^c$)
\item $R_{Opt,k}(\omega)$: Number of relays in the optimal solution to the RSN$k$-MR-HC problem on $\mathcal{G}^r(\omega)$ ($\infty$ if $\omega\in\X_{k}^c$)
\end{description}

Recall from Corollary~\ref{thm1} that no polynomial time algorithm (and in particular, the E-SPTiRP algorithm) is guaranteed to obtain a feasible solution to the RSN$k$-MR-HC problem whenever such a solution exists. Also recall from Corollary~\ref{thm2} in Section~\ref{sec:formulation_kconnect} that when an optimal solution to the RST-MR-HC problem on an instance uses zero relays, we cannot obtain any finite approximation guarantee for the RSN$k$-MR-HC problem on that instance. Therefore, we consider the set of feasible instances of the RSN$k$-MR-HC problem where the E-SPTiRP algorithm returns a feasible solution, and the optimal solution to the corresponding RST-MR-HC problem uses non-zero number of relays, i.e., $R_{Opt} > 0$.

The \emph{average case approximation ratio} of the E-SPTiRP algorithm over all such feasible instances is defined as
\begin{equation}
\text{Average case approximation ratio, }\alpha_k \define \frac{E[N_{E-SPTiRP}|\X_{algo},R_{Opt}>0]}{E[R_{Opt,k}|\X_{algo},R_{Opt}>0]}
\end{equation}

In the derivation to follow, we will need $\mathcal{X}_k$ to be a high probability event, i.e., with probability greater than $1 - \delta$ for a given $\delta > 0$. The following result, similar to Theorem~\ref{thm:average-case-analysis-setting}, ensures that this holds for the construction provided earlier, provided the number of potential locations is large enough.

\begin{theorem}
\label{thm:average-case-analysis-setting-kconnect}
For any given $\epsilon,\delta \in (0,1)$, $k>1$, $h_{\max}>0$ and $r>0$, there exists $n_0(\epsilon,\delta,k,h_{\max},r)\in \mathbb{N}$ such that, for any $n\geq n_0$, $P^{(n)}_{m,\epsilon}(\X_k)\:\geq\:1-\delta$ in the random experiment $(\Omega^{(n)}_{m,\epsilon},\mathcal{B}^{(n)}_{m,\epsilon},P^{(n)}_{m,\epsilon})$.  
\end{theorem}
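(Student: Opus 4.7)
The plan is to follow the proof of Theorem~\ref{thm:average-case-analysis-setting} closely, reusing the same blade-and-strip construction from Figures~\ref{fig:cutarc} and \ref{fig:h hop}, but strengthening the covering event so that each strip is required to contain at least $k$ potential locations rather than just one. Formally, define
\begin{equation*}
\Xepk \;=\; \{\omega:\ \text{for every blade $B_j^l$, $1\leq j\leq J(r)$, and every strip index $i$, $1\leq i\leq h_{\max}-1$, the $i$-th strip of $B_j^l$ contains at least $k$ of the $n$ potential locations}\}.
\end{equation*}
I will then show (a) $\Xepk \subseteq \X_k$, so that lower-bounding $P(\Xepk)$ suffices, and (b) $P(\Xepk)\geq 1-\delta$ whenever $n$ exceeds an appropriate $n_0(\epsilon,\delta,k,h_{\max},r)$.

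For part (a), the strip geometry from Theorem~\ref{thm:average-case-analysis-setting} guarantees that any point in strip $i$ of a blade lies within Euclidean distance $r$ of any point of strip $i-1$ of the same blade, and that every point of strip $1$ is within $r$ of the BS. Consequently, on any $\omega\in\Xepk$, for each source $y\in A_\epsilon$ lying in some strip $h_y$ of some blade $B_j^l$, we build $k$ paths from $y$ to the BS by letting the $\ell$-th path ($1\leq \ell\leq k$) pass through the $\ell$-th of the $k$ guaranteed nodes in each strip $i=h_y-1,h_y-2,\ldots,1$. Each such path has at most $h_{\max}$ hops, and distinct paths share only $y$ and the BS; hence they are node-disjoint. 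This establishes $\Xepk \subseteq \X_k$.

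For part (b), the number of potential locations falling inside a fixed strip is $\mathrm{Bin}(n,p_0)$ with $p_0 = u(r)t(r)/A = q\sqrt{1-p^2}\,r^2/A$. A standard binomial-tail estimate followed by a union bound over the at most $\lceil \pi h_{\max}/(2\sqrt{1-p^2})\rceil(h_{\max}-1)$ strip events gives
\begin{equation*}
P(\Xepk^c)\;\leq\; \left\lceil\frac{\pi h_{\max}}{2\sqrt{1-p^2}}\right\rceil(h_{\max}-1)\,\frac{k\,(np_0)^{k-1}}{(k-1)!}\,e^{-(n-k+1)p_0}.
\end{equation*}
Since the $e^{-np_0}$ factor dominates the $n^{k-1}$ prefactor, requiring the right-hand side to be at most $\delta$ and solving for $n$ defines the required $n_0(\epsilon,\delta,k,h_{\max},r)$; for any $n\geq n_0$, $P(\X_k)\geq P(\Xepk)\geq 1-\delta$, and the bound reduces to that of Theorem~\ref{thm:average-case-analysis-setting} when $k=1$. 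The main technical obstacle is really the disjointness step in part (a): one must verify that the source $y$ can reach \emph{all} $k$ candidates in its inward-adjacent strip in a single hop, not merely one of them, which requires the maximum diameter of a strip plus its distance to $y$ to stay within $r$. This is already ensured by the same parameter choice $(p,q)$ with $p-q=1-\epsilon$ used in Theorem~\ref{thm:average-case-analysis-setting}, so no further tuning of the geometric construction is needed.
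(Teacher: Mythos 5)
Your proposal is correct and follows essentially the same route as the paper: the same blade-and-strip construction, the strengthened event requiring at least $k$ potential locations per strip, the containment $\Xepk\subseteq\X_k$ via $k$ node-disjoint strip-wise paths, and a union bound with a binomial-tail (at most $k-1$ successes) estimate that vanishes as $n\to\infty$, which defines $n_0(\epsilon,\delta,k,h_{\max},r)$. The only differences are cosmetic: you write the tail bound in a closed Poisson-like form and flag the one-hop reachability of all $k$ first-hop candidates explicitly, whereas the paper keeps the sum $\sum_{i=0}^{k-1}\binom{n}{i}e^{-(n-i)p_0}p_0^i$ and asserts the geometric step directly.
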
 

\begin{proof}
The proof is very similar to the proof of Theorem~\ref{thm:average-case-analysis-setting}. 

We make the same construction as shown in Figure~\ref{fig:cutarc}, and define the following events.
\begin{description}
\item $C_{i,j}^l$ = \{$\omega$: $\exists$ at least $k$ nodes out of the $n$ potential loacations in the $i^{th}$ strip of $B_j^l$\}
\item $\X_{\epsilon,\delta,k}\:=\:\{\omega:\:\omega\:\in\cap_{j=1}^{J(r)} \cap_{i=1}^{h_{\max}-1} C_{i,j}^l\}$: Event that there exists at least $k$ nodes out of the $n$ potential locations in each of the first $h_{\max}-1$ strips (see Figure~\ref{fig:h hop}) for all the blades $B_j^l$
\end{description}

Note that for an instance $\omega\in\X_{\epsilon,\delta,k}$, \emph{all nodes (and in particular, all sources) at a distance $<(p-q)hr$ from $b_l$, $1\leq h\leq h_{\max}$, are reachable via at least $k$ node disjoint paths, each with at most $h$ hops}. Since $1>p>q>0$, we can choose $p-q$ to be equal to $1 - \epsilon$, for the given $\epsilon > 0$. It follows that
\begin{equation}
\label{eq:intersection-k}
\X_{\epsilon,\delta,k}\subseteq \X_k
\end{equation}

and hence, $P^{(n)}_{m,\epsilon}(\X_k)\geq P^{(n)}_{m,\epsilon}(\X_{\epsilon,\delta,k})$.

Thus, to ensure $P^{(n)}_{m,\epsilon}(\X_k)\geq 1-\delta$, it is sufficient to ensure that $P^{(n)}_{m,\epsilon}(\X_{\epsilon,\delta,k})\geq 1-\delta$, which we aim to do next. 

As done earlier, we upper bound $J(r)$ as $J(r) \leq \left \lceil \frac{\pi h_{\max}}{2\sqrt{1-p^2}}\right \rceil$.

To simplify notations, we write $P(\cdot)$ to indicate $P^{(n)}_{m,\epsilon}(\cdot)$.

Now we compute,
\begin{eqnarray}
  \label{eq:stripnode-k}
  \lefteqn{P(\X_{\epsilon,\delta,k})} &=& 1-P\left(\cup_{j=1}^{J(r)}\cup_{i=1}^{h_{\max}-1}
{C_{i,j}^{l}}^c\right)\nonumber \\
&\geq& 1-\sum_{j=1}^{J(r)}\sum_{i=1}^{h_{\max}-1}
P\left({C_{i,j}^{l}}^c\right) \nonumber \\
  &\geq& 1 - \left \lceil \frac{\pi h_{\max}}{2\sqrt{1-p^2}}\right \rceil (h_{\max}-1)
\sum_{i=0}^{k-1}{n\choose i} \left(1-\frac{u(r)t(r)}{A}\right)^{n-i}\left(\frac{u(r)t(r)}{A}\right)^i \nonumber \\
  &\geq& 1- \left \lceil \frac{\pi h_{\max}}{2\sqrt{1-p^2}}\right \rceil (h_{\max}-1) \sum_{i=0}^{k-1}{n\choose i} 
e^{-\frac{(n-i)u(r)t(r)}{A}}\left(\frac{u(r)t(r)}{A}\right)^i \nonumber \\
  &=& 1- \left \lceil \frac{\pi h_{\max}}{2\sqrt{1-p^2}}\right \rceil (h_{\max}-1) \sum_{i=0}^{k-1}{n\choose i}
e^{-(n-i)\frac{q\sqrt{1-p^2}r^2}{A}}\left(\frac{q\sqrt{1-p^2}r^2}{A}\right)^i\nonumber\\
&\rightarrow 1 \text{ as } n\rightarrow \infty 
\end{eqnarray}

The first inequality comes from the union bound, the second
inequality, from the upper bound on $J(r)$. The third inequality uses
the result $1-x \leq e^{-x}$. 

Hence the theorem follows.

Note that $p$ and $q$ can be obtained as earlier in terms of $\epsilon$ by maximizing $q\sqrt{1-p^2}$ under the constraint $p-q\:=\:1-\epsilon$.
\end{proof}

\gap
\noindent
\remark For fixed $h_{\max}$ and $r$, $n_0(\epsilon,\delta,k)$ increases with decreasing $\epsilon$ and $\delta$, and increasing $k$.

\gap
\noindent
\textbf{The experiment:} In the light of Theorem~\ref{thm:average-case-analysis-setting-kconnect}, we employ the following node deployment strategy to ensure, w.h.p, feasibility of the RSN$k$-MR-HC problem in the area $A$. 
Choose arbitrary small values of $\epsilon,\delta \in (0,1)$. Given the hop count bound $h_{\max}$ and the maximum communication range $r$, obtain $n_0(\epsilon,\delta,k,h_{\max},r)$ as defined in Theorem~\ref{thm:average-case-analysis-setting-kconnect}. Deploy $n\geq n_0$ potential locations i.i.d uniformly randomly over the area of interest, $A$. $m$ sources are deployed i.i.d uniformly randomly within a radius $(1-\epsilon)h_{\max}r$ from the BS, i.e., over the area $A_{\epsilon}$. By virtue of Theorem~\ref{thm:average-case-analysis-setting-kconnect}, this ensures that any source deployed within a distance $(1-\epsilon)h_{\max}r$ has at least $k$ node disjoint paths to the BS, that are no more than $h_{\max}$ hops w.h.p, thus ensuring feasibility of the RSN$k$-MR-HC problem w.h.p. We run the E-SPTiRP algorithm on the induced random geometric graph with hop count as cost to check if the algorithm returns a feasible solution. We also run the SPTiRP algorithm on the same graph to check if the optimal solution to the RST-MR-HC problem on that instance uses non-zero number of relays. \emph{In this stochastic setting, we derive an upper bound on the average case approximation ratio, $\alpha_k$, of the E-SPTiRP algorithm} as follows.

\begin{lemma}
\label{lemma:kconnect-nodecount-feasible}
$E[N_{E_{SPTiRP}}|\X_{algo},R_{Opt}>0]\leq mk(h_{\max}-1)$
\end{lemma}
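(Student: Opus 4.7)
The plan is to show that the bound $mk(h_{\max}-1)$ holds \emph{deterministically} on every sample point $\omega \in \X_{algo}$, from which the conditional expectation bound follows immediately.

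First I would recall what $\omega \in \X_{algo}$ means: on such an instance, the E-SPTiRP algorithm terminates with a feasible solution to the RSN$k$-MR-HC problem. By the feasibility requirements spelled out in the problem formulation (Section~\ref{subsec:formulation}) and enforced through the hop-feasibility checks inside Phase 2 of the algorithm (Steps 10--11 of the pseudo code), this feasible solution consists, for each source $S_i$, $1 \leq i \leq m$, of $k$ node-disjoint paths from $S_i$ to the BS, every one of which has hop count at most $h_{\max}$.

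Next I would count the relays used. A path from $S_i$ to the BS of hop count at most $h_{\max}$ contains at most $h_{\max} - 1$ intermediate vertices (the endpoints are the source $S_i$ and the BS). Each such intermediate vertex is either a source node in $Q$ or a relay in $R$; in the worst case all intermediate vertices are relays. Hence each source contributes at most $k(h_{\max} - 1)$ relay incidences across its $k$ paths. Summing over the $m$ sources, the total number of relay incidences is at most $mk(h_{\max}-1)$. Since each relay counts at least once in the sum whenever it is used by the final network, we conclude
\[
N_{E\text{-}SPTiRP}(\omega) \leq mk(h_{\max}-1) \qquad \forall \omega \in \X_{algo}.
\]

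Finally, taking conditional expectation on both sides with respect to the event $\{\X_{algo}, R_{Opt} > 0\}$ (which is contained in $\X_{algo}$), the deterministic bound passes through the expectation, yielding the stated inequality. There is no real obstacle here because the bound is a straightforward combinatorial counting argument that relies only on the structural feasibility guarantees of any output of E-SPTiRP; no randomness or delicate conditioning needs to be exploited.
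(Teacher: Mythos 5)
Your proposal is correct and is essentially the paper's own argument: a deterministic counting bound — each of the $k$ hop-feasible paths per source has at most $h_{\max}-1$ intermediate (hence relay) nodes, giving $N_{E\text{-}SPTiRP}(\omega)\leq mk(h_{\max}-1)$ pointwise on $\X_{algo}$ — which then passes trivially through the conditional expectation. The paper states this more tersely, but the reasoning is the same.
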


\begin{proof}
Observe that for any feasible solution given by the E-SPTiRP algorithm, the number of relay nodes on each of the $k$ paths for a source cannot exceed $h_{\max}-1$ (since each path has at most $h_{\max}$ hops). The lemma follows immediately.
\end{proof}

\begin{lemma}
\label{lemma:kconnect-optimal-feasible}
$E[R_{Opt,k}|\X_{algo},R_{Opt}>0]\geq \underline{R}_{Opt}+k-1$

where, $\underline{R}_{Opt}$ denotes the R.H.S of Equation~\ref{eqn:ropt-lowerbound-final}.
\end{lemma}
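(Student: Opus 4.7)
The plan is to reduce the claim to the one-connected lower bound already established in Lemma~\ref{lem:ropt-lowerbound-final} via a pointwise application of Corollary~\ref{corollary:lem2}. First I would observe that, on any deployment $\omega$ for which $R_{Opt}(\omega) > 0$, Corollary~\ref{corollary:lem2} gives the deterministic pointwise inequality
\begin{equation*}
R_{Opt,k}(\omega) \;\geq\; R_{Opt}(\omega) + (k-1).
\end{equation*}
Since both $\X_{algo}$ and $\{R_{Opt}>0\}$ are events in our probability space, taking conditional expectation on $\X_{algo}\cap \{R_{Opt}>0\}$ immediately gives
\begin{equation*}
E[R_{Opt,k}\mid \X_{algo},R_{Opt}>0] \;\geq\; E[R_{Opt}\mid \X_{algo},R_{Opt}>0] + (k-1).
\end{equation*}

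The remaining step, which is the main obstacle, is to show that
\begin{equation*}
E[R_{Opt}\mid \X_{algo},R_{Opt}>0] \;\geq\; \underline{R}_{Opt},
\end{equation*}
where $\underline{R}_{Opt}$ is the bound of Lemma~\ref{lem:ropt-lowerbound-final}, originally derived conditional on the feasibility event $\X$ for the one-connected problem. My plan is to replay that derivation, but with the conditioning event $\X_{algo}\cap\{R_{Opt}>0\}$ in place of $\X$. The key observations that make this work are: (i) by the stochastic setting preceding Lemma~\ref{lemma:kconnect-nodecount-feasible}, we deploy $n\geq n_0(\epsilon,\delta,k,h_{\max},r)$ potential locations, so Theorem~\ref{thm:average-case-analysis-setting-kconnect} gives $P(\X_{\epsilon,\delta,k})\geq 1-\delta$, and on $\X_{\epsilon,\delta,k}$ there are $k$ node-disjoint, hop-constrained paths from every source to the BS using only the guaranteed nodes in the constructed strips, which E-SPTiRP will recover (so $\X_{\epsilon,\delta,k}\subseteq \X_{algo}$); and (ii) the auxiliary event $\X_{h_{\max}}=\{\overline{D}_s>(h_{\max}-1)r\}\cap \X_{\epsilon,\delta}$ used in Lemma~\ref{lem:ropt-lowerbound-final} automatically implies $R_{Opt}>0$, because a source at distance exceeding $(h_{\max}-1)r$ cannot reach the BS in a single hop and hence any optimal one-connected design must use at least one relay on its route.

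Using (i) and (ii), I would mimic the chain of inequalities in Lemma~\ref{lem:ropt-lowerbound-final}:
\begin{align*}
E[R_{Opt}\mid \X_{algo},R_{Opt}>0]
&\geq P[\X_{\epsilon,\delta,k}\mid \X_{algo},R_{Opt}>0]\,E[R_{Opt}\mid \X_{\epsilon,\delta,k},\X_{algo},R_{Opt}>0]\\
&\geq (1-\delta)\,P[\overline{D}_s>(h_{\max}-1)r]\sum_{i=1}^{h_{\max}-1}q_i,
\end{align*}
where $q_i$ is exactly the lower bound computed in Lemma~\ref{lem:ropt-lowerbound-final} using the disjoint-lens areas (the lens areas are a geometric property of the source placement, so the same $\frac{\pi}{3}n_i^2 r^2$ bound applies here). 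The main obstacle, as noted, will be justifying the containment $\X_{\epsilon,\delta,k}\subseteq \X_{algo}$ and the near-independence of the source-location event $\{\overline{D}_s>(h_{\max}-1)r\}$ from the potential-location event $\X_{\epsilon,\delta,k}$; both arguments parallel the corresponding steps in the one-connected derivation, with the $k$-node strip construction of Theorem~\ref{thm:average-case-analysis-setting-kconnect} replacing the one-node construction. Combining the two displayed inequalities yields the claim $E[R_{Opt,k}\mid \X_{algo},R_{Opt}>0]\geq \underline{R}_{Opt}+k-1$.
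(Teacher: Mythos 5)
Your overall route is the same as the paper's: apply Corollary~\ref{corollary:lem2} pointwise on $\X_{algo}\cap\{R_{Opt}>0\}$ to get $E[R_{Opt,k}|\X_{algo},R_{Opt}>0]\geq E[R_{Opt}|\X_{algo},R_{Opt}>0]+k-1$, and then lower bound the one-connected term by replaying Lemma~\ref{lem:ropt-lowerbound-final} with $\Xepk$ in place of $\Xep$, using $\Xepk\subseteq\X_{algo}$ and $P(\Xepk)\geq 1-\delta$. There is, however, a genuine gap in how you carry the conditioning on $\{R_{Opt}>0\}$ through the chain. Your second displayed inequality needs $P[\Xepk\mid\X_{algo},R_{Opt}>0]\geq 1-\delta$ (the expectation factor $E[R_{Opt}\mid\Xepk,\X_{algo},R_{Opt}>0]$ is fine, since adding the conditioning $\{R_{Opt}>0\}$ to a nonnegative variable can only increase the conditional mean). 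But the probability bound does not follow from $P(\Xepk)\geq 1-\delta$: since $\Xepk\subseteq\X_{algo}$, all you can conclude is
\begin{equation*}
P[\Xepk\mid\X_{algo},R_{Opt}>0]\;=\;\frac{P[\Xepk,\,R_{Opt}>0]}{P[\X_{algo},\,R_{Opt}>0]}\;\geq\;1-\frac{\delta}{P[\X_{algo},\,R_{Opt}>0]},
\end{equation*}
which is weaker than $1-\delta$ unless $P[\X_{algo},R_{Opt}>0]$ is essentially one, and nothing in the stochastic setting guarantees that. The paper sidesteps this by first discarding the $\{R_{Opt}>0\}$ conditioning: because $R_{Opt}\geq 0$ and vanishes on $\{R_{Opt}=0\}$, one has $E[R_{Opt}\mid\X_{algo},R_{Opt}>0]\geq E[R_{Opt}\mid\X_{algo}]$, after which the chain $E[R_{Opt}\mid\X_{algo}]\geq E[R_{Opt}\ind_{\Xepk}\mid\X_{algo}]=P[\Xepk\mid\X_{algo}]\,E[R_{Opt}\mid\Xepk]\geq P[\Xepk]\,E[R_{Opt}\mid\Xepk]\geq(1-\delta)E[R_{Opt}\mid\Xepk]$ goes through cleanly (using $P[\Xepk\mid\X_{algo}]\geq P[\Xepk,\X_{algo}]=P[\Xepk]$). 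With this step inserted, the rest of your argument, i.e., the lens-based bound on $E[R_{Opt}\mid\Xepk]$, is exactly the paper's.

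Separately, your auxiliary observation (ii), that $\Xh$ implies $R_{Opt}>0$ because a source farther than $(h_{\max}-1)r$ from the BS ``must use at least one relay,'' is false: the intermediate hops on its route could all be \emph{other source nodes}. This is precisely why Lemma~\ref{lem:ropt-lowerbound-final} introduces the indicators $Y_i$, which charge a relay to lens $L_i^{(s)}$ only when no source lies in it; $\sum_i Y_i$ can be zero even on $\Xh$. Once the $\{R_{Opt}>0\}$ conditioning is dropped as above, this claim is not needed anywhere, so the error is harmless to the repaired proof. Finally, your justification of $\Xepk\subseteq\X_{algo}$ should be stated as in the paper: on $\Xepk$ every source has at least $k$ node-disjoint paths of the same minimum hop count, so each constrained shortest path that E-SPTiRP computes after excluding the nodes of previously found paths still meets $h_{\max}$; it is not that the algorithm literally recovers the strip nodes.
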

\begin{proof}
Given $\X_{algo}$ and $R_{Opt}>0$, it follows from Corollary~\ref{corollary:lem2} that 

\begin{equation*}
R_{Opt,k}(\omega)\geq R_{Opt}(\omega)+k-1\:\forall \omega\in\X_{algo}\cap \{\omega:R_{Opt}(\omega)>0\}
\end{equation*}

Hence,

\begin{equation}
\label{eqn:ropt-k-lower-bound-init}
E[R_{Opt,k}|\X_{algo},R_{Opt}>0]\geq E[R_{Opt}|\X_{algo},R_{Opt}>0]+k-1
\end{equation}

Before proceeding further, we take a small detour. Recall that the E-SPTiRP algorithm declares \emph{possible infeasibility} if after finding a one-connected solution using the SPTiRP algorithm, the alternate path from some source to the BS (a constrained shortest path using only the nodes not used in the previous paths from that source) violates the hop constraint. 
  
Note that the event $\Xepk$ ensures that there exist at least $k$ node disjoint shortest paths (\emph{all of the same minimum hop count}) from each source to the BS, which is a \emph{sufficient condition to ensure that the E-SPTiRP algorithm can find a feasible solution} (since the constrained shortest path avoiding nodes used in upto $(k-1)$ previous paths will still be a shortest path, and hence will meet the hop constraint). Thus, $\Xepk\subset \X_{algo}$.

With this in mind, let us get back to the main proof.

\begin{align}
E[R_{Opt}|\X_{algo},R_{Opt}>0]&\geq E[R_{Opt}|\X_{algo}]\nonumber\\
&\geq E[R_{Opt}\ind_{\Xepk}|\X_{algo}]\nonumber\\
&= P[\Xepk|\X_{algo}]\:E[R_{Opt}|\Xepk,\X_{algo}]\nonumber\\
&\geq P[\Xepk,\X_{algo}]\:E[R_{Opt}|\Xepk,\X_{algo}]\nonumber\\
&= P[\Xepk]\:E[R_{Opt}|\Xepk]\quad\text{since, }\Xepk\subset \X_{algo}\nonumber\\
&\geq (1-\delta)E[R_{Opt}|\Xepk]\label{eqn:ropt-k-lower-bound-inter}
\end{align}

Along similar lines of derivation as in Equations~\eqref{eqn:ropt-lowerbound}-\eqref{eqn:ropt_final} (with $\Xepk$ instead of $\Xep$), we obtain the following bound:

\begin{equation}
\label{eqn:ropt-eps-lower-bound}
E[R_{Opt}|\Xepk]\geq \left[1-\left(\frac{h_{\max}-1}{(1-\epsilon)h_{\max}}\right)^{2m}\right]\:\sum_{i=1}^{h_{\max}-1}q_i
\end{equation}

where, $q_i$ is lower bounded as in Equation~\eqref{eqn:q_expression}.

Finally, combining Equations~\eqref{eqn:ropt-k-lower-bound-init}, \eqref{eqn:ropt-k-lower-bound-inter}, and \eqref{eqn:ropt-eps-lower-bound}, we have the desired lemma.
\end{proof}

Combining Lemma~\ref{lemma:kconnect-nodecount-feasible} and Lemma~\ref{lemma:kconnect-optimal-feasible}, we obtain the following upper bound on the average case approximation ratio of E-SPTiRP algorithm.

\begin{theorem}
The average case approximation ratio, $\alpha_k$, of E-SPTiRP algorithm is upper bounded as

\begin{equation}
\alpha_k\leq \frac{mk(h_{\max}-1)}{\underline{R}_{Opt}+k-1}
\end{equation}

where, $\underline{R}_{Opt}$ is given by the R.H.S of \eqref{eqn:ropt-lowerbound-final}. 
\end{theorem}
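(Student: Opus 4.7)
The plan is to obtain the bound simply by combining the two preceding lemmas, since the theorem is effectively a corollary of them together with the definition of $\alpha_k$. First I would recall that
\[
\alpha_k \;=\; \frac{E[N_{E-SPTiRP}\mid \X_{algo},R_{Opt}>0]}{E[R_{Opt,k}\mid \X_{algo},R_{Opt}>0]}.
\]
The numerator is upper bounded directly by Lemma~\ref{lemma:kconnect-nodecount-feasible} as $mk(h_{\max}-1)$, since any feasible E-SPTiRP output uses at most $k$ node-disjoint paths per source, each with at most $h_{\max}-1$ relays. The denominator is lower bounded by Lemma~\ref{lemma:kconnect-optimal-feasible} as $\underline{R}_{Opt}+k-1$. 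Dividing gives the stated inequality.

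The only subtlety to address is that the inequality $E[N]/E[R]\le (\text{upper bound on }E[N])/(\text{lower bound on }E[R])$ requires the lower bound on $E[R_{Opt,k}\mid\X_{algo},R_{Opt}>0]$ to be strictly positive; this is guaranteed because $k\ge 1$ and $\underline{R}_{Opt}\ge 0$ (indeed the $k-1$ term alone suffices whenever $k>1$, and for $k=1$ the expression reduces to the one-connectivity bound from Theorem~\ref{thm:avg-approx-oneconnect}). I would briefly note this positivity to justify the quotient manipulation, and then conclude.

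I do not anticipate a main obstacle here: all the substantive work (the one-connected lower bound on $E[R_{Opt}\mid\Xep]$, the bridging via Corollary~\ref{corollary:lem2} from RST-MR-HC to RSN$k$-MR-HC, and the hop-count based upper bound on the algorithm's output) has already been established in Lemmas~\ref{lemma:kconnect-nodecount-feasible} and \ref{lemma:kconnect-optimal-feasible}. Thus the proof is a one-line assembly, and I would present it as such: invoke the two lemmas, take the ratio, and the theorem follows.
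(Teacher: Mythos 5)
Your proposal is correct and matches the paper's own argument: the theorem is stated there as following immediately by combining Lemma~\ref{lemma:kconnect-nodecount-feasible} (the $mk(h_{\max}-1)$ upper bound on the numerator) with Lemma~\ref{lemma:kconnect-optimal-feasible} (the $\underline{R}_{Opt}+k-1$ lower bound on the denominator) and taking the ratio. Your extra remark on the positivity of the denominator is a harmless, slightly more careful touch, but the route is essentially identical to the paper's.
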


\remark For a 2-connectivity problem with 10 sources, and a hop constraint $h_{\max}=4$, the upper bound on the average case approximation ratio of the E-SPTiRP algorithm turns out to be 22, provided the source locations, and the potential relay locations are distributed according to the experiment described earlier.

\section{Numerical Results for the $k$-Connectivity Algorithms}
\label{sec:results_kconnect}
To evaluate the performance of the E-SPTiRP algorithm, we ran both the algorithms (E-SPTiRP as well as the Dummy algorithm presented in the Appendix) to solve the RSN$k$-MR-HC problem with $k=2$ on the same random network scenarios (test set 3) that were generated to test the SPTiRP algorithm (see Section~\ref{subsec:sptirp-expt3}). Due to the small number of relays, the probabilistic analysis of feasibility was not useful; however, in none of the 1000 scenarios tested, the hop constraint turned out to be infeasible. The results are summarized in Table~\ref{tbl:relay_kconnect}.

\begin{table*}[ht]
  \centering
\caption{Performance comparison of the $k$ connectivity algorithms for $k=2$}
\label{tbl:relay_kconnect}
\footnotesize
  \begin{tabular}{|c|c|c|c|c|c|c|c|c|c|c|c|c|}\hline
    Potential & Scenarios & \multicolumn{6}{c|}{Relay Count} & E-SPTiRP  & E-SPTiRP  & E-SPTiRP  & \multicolumn{2}{c|}{Mean execution}\\
     relay & & \multicolumn{3}{c|}{Dummy} & \multicolumn{3}{c|}{E-SPTiRP} & better than & same as & worse than & \multicolumn{2}{c|}{time}\\
count & & &&&&&& Dummy & Dummy & Dummy & \multicolumn{2}{c|}{in sec}\\
 & & Average & Max & Min & Average & Max & Min &  &  &  & Dummy & E-SPTiRP\\
 \hline
   100 & 200 & 5.295 & 13 & 0 & 4.13 & 9 & 0 & 134 & 54 & 12 & 11.163 &	3.0429\\
   110 & 200 & 4.88 & 10 & 0 & 3.895 & 9 & 0 & 121 & 68 & 11 & 13.973 &	3.8558\\
   120 & 200 & 5.45 & 12 & 1 & 4.18 & 8 & 1 & 129 & 52 & 19 & 16.252  &	4.3314\\
   130 & 200 & 5.15 & 11 & 0 & 4 & 8 & 0 & 135 & 54 & 11 & 18.97 & 5.2316\\
   140 & 200 & 5.27 & 14 & 0 & 3.945 & 9 & 0 & 133 & 53 & 14 & 23.748 & 6.3596\\
   \hline
   Total & 1000 & 5.209 & 14 & 0 & 4.03 & 9 & 0 & 652 & 281 & 67 & 16.821 & 4.564\\
  \hline
\end{tabular}
\normalsize
\end{table*}

From Table~\ref{tbl:relay_kconnect}, we can make the following observations:
\begin{enumerate}
\item In all 5 sets of experiments (with different node densities), \emph{the average relay count required by E-SPTiRP to achieve 2 connectivity is less than that required by Dummy algorithm}.
\item \emph{In over 65\% of the tested scenarios, E-SPTiRP performed better than Dummy algorithm in terms of relay count.} In another 28.1\% of cases, they performed equally well.
\item In all 5 sets of experiments, the maximum relaycount required by Dummy algorithm is more than that required by E-SPTiRP (although the maximums for the two algorithms may have been on different random scenarios).
\item In terms of mean execution time, E-SPTiRP performed much better than Dummy algorithm in all 5 sets of experiments. This is probably because of the fact that in the alternate path finding procedure (Steps 5-8 of Phase 2, Dummy algorithm, Section~\ref{subsec:dummyalgo}), Dummy algorithm finds two candidate routes, and chooses the best among them, whereas, E-SPTiRP (Section~\ref{subsubsec:phase2}, Steps 3-16) determines the alternate node disjoint path in one attempt. Thus, the alternate route determination procedure for Dummy algorithm is possibly more time consuming than that of E-SPTiRP.

\item For both the algorithms, \emph{the average execution time increases with increasing node density}.

\end{enumerate}

For each of the five sets of experiments, we also noted the minimum (maximum) relay count required by either algorithm over scenarios where the other algorithm uses a maximum (minimum) number of relays. The comparative study is summarized in Table~\ref{tbl:compare_kconnect}.

 \begin{table*}[ht]
  \centering
\caption{Comparison of Maximum and Minimum Relaycount of the $k$ connectivity algorithms for $k=2$}
\label{tbl:compare_kconnect}
\footnotesize
  \begin{tabular}{|c|c||c|c||c|c||c|c||c|c||}\hline
    Potential & Scenarios & Max & Min & Max & Min & Min & Max & Min & Max \\
     relay & & relay count & relay count & relay count & relay count & relay count & relay count & relay count & relay count\\
count & & of  & of  & of  & of  & of  & of  & of  & of \\
 & & Dummy ($n_1$) & E-SPTiRP & E-SPTiRP ($n_2$) & Dummy & Dummy ($m_1$) & E-SPTiRP & E-SPTiRP ($m_2$) & Dummy\\
 & & & when & & when & & when & & when\\
 & & & Dummy & & E-SPTiRP & & Dummy & & E-SPTiRP\\
 & & & uses $n_1$ & & uses $n_2$ & & uses $m_1$ & & uses $m_2$\\
 \hline
   100 & 200 & 13 & 8 &	9 & 13 &	0 & 0 & 0 &	0\\
   110 & 200 & 10 & 4 &	9 & 10 &	0 & 0 & 0 &	0\\
   120 & 200 & 12	& 6 &	8 &  9 &	1 & 2 & 1 &	6\\
   130 & 200 & 11 & 7 &	8 &  9 &	0 & 0 & 0 &	0\\
   140 & 200 & 14	& 7 &	9 & 12 &	0 & 0 & 0 &	0\\
   \hline
\end{tabular}
\normalsize
\end{table*}

From Table~\ref{tbl:compare_kconnect}, we observe that
\begin{enumerate}
\item For all 5 sets of experiments, in scenarios where Dummy algorithm performs at its worst in terms of relay count, the minimum relay count of E-SPTiRP is always much better than the relaycount of Dummy algorithm. Also observe that the maximum relay count used by E-SPTiRP in all sets of experiments is better than that of Dummy algorithm.
\item In scenarios where E-SPTiRP uses a maximum number of relays, the minimum relay count used by Dummy algorithm is still higher than the relay count of E-SPTiRP. 
\item In scenarios where Dummy algorithm uses zero relays, E-SPTiRP also uses zero relays (which is expected, since Phase 1 is same for both algorithms (see Dummy algorithm and E-SPTiRP in Section~\ref{sec:algorithms})).  
\item In scenarios where Dummy algorithm uses the minimum non-zero number of relays (1 relay), the maximum relay count used by E-SPTiRP was just 1 more than the relay count used by Dummy algorithm. 
\item In scenarios where E-SPTiRP uses the minimum non zero number of relays (1 relay), the maximum relay count used by Dummy algorithm was as high as 6.

\end{enumerate}

Thus, from our observations in Table~\ref{tbl:compare_kconnect}, we can conclude that the worst case performance of E-SPTiRP in terms of relay count is better than that of Dummy algorithm.

To compare the performance of the proposed algorithm against the worst case performance bound given in Theorem~\ref{thm:approxfactor}, we did the following:

\begin{itemize}
\item For each of the five sets of experiments, we identified the scenarios where the optimal solution for the RST-MR-HC problem is non zero.
\item For each of the scenarios thus identified, we can compute the lower bound on the optimal number of relays required for 2-connectivity, using Lemma~\ref{lemma2}, as follows. If the optimal solution for the RST-MR-HC problem uses $n$ relays, the optimal number of relays required for 2-connectivity is lower bounded by $n+1$.

\item For each scenario, we obtained the approximation factor given by the proposed algorithm w.r.t the lower bound computed above as \emph{approximation factor} $= \frac{Relay_{Algo}}{Relay_{lowerbound}}$.
\item For each of the five sets of experiments, we obtained the worst and the best approximation factors (as computed above) achieved by both E-SPTiRP and the Dummy algorithm, and also the worst case performance bound obtained from Theorem~\ref{thm:approxfactor}.
\end{itemize}

The results are summarized in Table~\ref{tbl:worstcase_kconnect}.

\begin{table*}[ht]
  \centering
\caption{Performance Comparison of the $k$ connectivity algorithms against theoretical performance bound for $k=2$}
\label{tbl:worstcase_kconnect}
\footnotesize
  \begin{tabular}{|c|c|c|c|c|c|c|}\hline
    Potential & Scenarios (out of 200 in Table~\ref{tbl:relay_kconnect}) & Worst Case Theoretical & Worst Approx. & Best Approx. & Worst Approx. & Best Approx.\\
     relay & where RST-MR-HC has & performance & factor of & factor of & factor of & factor of\\
count & non zero optimal & bound & Dummy & Dummy & E-SPTiRP & E-SPTiRP\\
  ($|R|$) & solution & ($\min\{10(h_{\max}-1),|R|/2\}$) & & & & \\
 \hline
   100 & 156 & 50 & 5.5	& 1	& 3.5	& 1\\
   110 & 146 & 50 & 5	& 1	& 3.5 & 1\\
   120 & 158 & 50 & 5.5 & 1   & 3   & 1\\
   130 & 150 & 50 & 4.5 & 1   & 3.5 & 1\\
   140 & 147 & 50 & 5.5	& 1.5 & 3.5 & 1\\
   \hline
 Total & 757 & NA & 5.5 & 1   & 3.5 & 1\\
  \hline 
\end{tabular}
\normalsize
\end{table*}

From Table~\ref{tbl:worstcase_kconnect}, we observe that 
\begin{enumerate}
\item For each of the five sets of experiments, \emph{the worst case approximation factor (as defined earlier, for scenarios where optimal solution of RST-MR-HC problem is non zero) achieved by both the algorithms is much better than the theoretical performance bound predicted in Theorem~\ref{thm:approxfactor}}. 

Also note that these approximation factors were computed based on a lower bound on the optimal solution for 2 connectivity; hence the actual performance of the algorithms is even better than this.
\item In all five sets of experiments, E-SPTiRP outperformed Dummy algorithm significantly in terms of the worst case approximation factor.
\item The best approximation factor achieved by both the algorithms was 1, i.e., the lower bound was actually achieved by the algorithms in some of the test cases.

\end{enumerate}

In the relatively small number of test scenarios where Theorem~\ref{thm:approxfactor} does not apply (i.e., optimal solution for RST-MR-HC problem is zero), and hence there is no bounded factor approximation guarantee for the algorithms, we obtained the maximum and minimum number of relays used by the two algorithms over those scenarios. The results are presented in Table~\ref{tbl:zeropt_kconnect}.

\begin{table*}[ht]
  \centering
\caption{Performance of the $k$ connectivity algorithms for $k=2$ in scenarios where there is no bounded approximation guarantee}
\label{tbl:zeropt_kconnect}
\footnotesize
  \begin{tabular}{|c|c|c|c|c|c|c|}\hline
    Potential & Scenarios (out of 200 in Table~\ref{tbl:relay_kconnect}) & Theoretical & Max relaycount & Min relaycount & Max relaycount & Min relaycount\\
     relay & where RST-MR-HC has & performance &  of & of & of & of\\
count & zero optimal & bound & Dummy & Dummy & E-SPTiRP & E-SPTiRP\\
      & solution     &  & & & & \\
 \hline
   100 & 44 & NA & 7	& 0 & 5	& 0\\
   110 & 54 & NA & 7	& 0 & 4 & 0\\
   120 & 42 & NA & 7 & 0 & 5 & 0\\
   130 & 50 & NA & 7 & 0 & 5 & 0\\
   140 & 53 & NA & 7	& 0 & 5 & 0\\
   \hline
 Total & 243 & NA & 7 & 0 & 5 & 0\\
  \hline 
\end{tabular}
\normalsize
\end{table*}

From Table~\ref{tbl:zeropt_kconnect}, we see that even in scenarios where there is no bounded factor approximation guarantee for the algorithms, the performance of the algorithms is reasonably good, with the maximum relay count being 7 relays for Dummy algorithm, and 5 relays for E-SPTiRP. The minimum relay count for both the algorithms is zero (which is clearly optimal) in those scenarios.

\section{Conclusion}
\label{sec:conclude}
In this paper, we have studied the problem of determining an optimal relay node placement strategy such that certain performance objective(s) (in this case, hop constraint, which, under a lone-packet model, ensures data delivery to the BS within a certain maximum delay) is (are) met. We studied both one connected hop constrained network design, and $k$-connected (survivable) hop constrained network design. We showed that the problems are NP-Hard, and proposed polynomial time approximation algorithms for the problems. The algorithm for one connected hop constrained network design problem, as can be concluded from numerical experiments presented in Section~\ref{sec:results}, gives solutions of reasonably good quality, using extremely reasonable computation time. 

From the numerical results presented in Section~\ref{sec:results_kconnect}, we can conclude that the algorithm proposed for the $k$-connected network design problem behaves significantly better than the worst case performance bound predicted in Theorem~\ref{thm:approxfactor} (in Section~\ref{sec:algorithms}) for the subclass of problems to which the Theorem~\ref{thm:approxfactor} applies. Even for problems where the algorithm does not have any bounded approximation guarantee, we found from our experiments that the algorithm behaves reasonably well in terms of relay count.

One might ask why the local search algorithms presented in this paper work so well in the tested random scenarios. The answer to this question is not immediately obvious, but, for the RST-MR-HC and RSN$k$-MR-HC problems, the graphs we ran our tests on were all geometric graphs; hence, a formal analysis of the properties of the underlying random geometric graph might provide some useful insights into the performance of these local search algorithms. We wish to address this issue in our future work.

Further, we are working on extending the design to traffic models more complex than the lone packet traffic model considered here. This requires the analysis of packet delays in a mesh network with more complex traffic flows and the nodes accessing the medium using CSMA/CA as defined in IEEE~802.15.4 \cite{Singh,rachitpaper}.
\section*{Acknowledgement}
This work was supported by the Automation Systems Technology (ASTeC) Center, a program of the Department of Information Technology at CDAC, Trivandrum, Kerala, India. 
\bibliography{dit_astec}
\end{document}